\tikzstyle{every picture}+=[>=stealth,initial text=]
\tikzstyle{accepting}=[accepting by arrow]
\colorlet{irrelevant}{black!60}
\tikzstyle{interval duration}=[draw,dotted,thick,<->]
\tikzstyle{interval length}=[draw,>=angle 90,>-<]
\newif\ifDraft
\newcounter{fixcount}
\newcommand{\mini@dbend}{\raise -1.5pt \hbox{\Large$\triangle$}\hspace{-2.05ex}\raise -.5pt \hbox{!}\hspace{1.25ex}}
\newcommand{\defineNote}[3][black!65!green]{%
  \expandafter\def\csname @#2\endcsname ##1{\stepcounter{fixcount}\fxwarning{\textcolor{#1}{\textbf{#3}: ##1}}}%
  \expandafter\def\csname @@#2\endcsname ##1{\stepcounter{fixcount}\small\fxwarning[author={\textcolor{red}{\mini@dbend #3,}},margin=false,inline=true]{\textcolor{#1}{##1}}}%
  \expandafter\def\csname #2\endcsname{\@ifstar{\csname @@#2\endcsname}{\csname @#2\endcsname}}
}
\title{Polynomial Interrupt Timed Automata}
\date\today\fi
\author{B\'eatrice B\'erard$^1$ \and Serge Haddad$^2$ \and Claudine Picaronny$^2$ \and Mohab Safey El Din$^1$ \and Mathieu Sassolas$^3$}
\institute{Sorbonne Université, Université P. \& M. Curie, LIP6, CNRS UMR 7606, Paris, France \and
\'Ecole Normale Sup\'erieure de Cachan, LSV, CNRS UMR 8643, INRIA, Cachan, France \and
Université Paris-Est, LACL, Cr\'eteil, France}
\begin{document}
\maketitle
\ifDraft\centerline{\today}\fi
\begin{abstract}
  Interrupt Timed Automata (\ita) form a subclass of stopwatch
  automata where reachability and some variants of timed model
  checking are decidable even in presence of parameters. They are well
  suited to model and analyze real-time operating systems.  Here we
  extend \ita with polynomial guards and updates, leading to the class
  of polynomial ITA (\polita).  We prove the decidability of the
  reachability and model checking of a timed version of \CTL by an
  adaptation of the \emph{cylindrical decomposition} method for the
  first-order theory of reals. Compared to previous approaches, our
  procedure handles parameters and clocks in a unified way. Moreover,
  we show that \polita are incomparable with stopwatch automata.
  Finally additional features are introduced while preserving
  decidability.
\end{abstract}
\section{Introduction}

\paragraph{Hybrid Automata.} Hybrid systems~\cite{hybrid93}
combine continuous evolution of variables according to flow functions
(described by differential inclusions) in control nodes, and discrete
jumps between these nodes, where the variables can be tested by guards
and updated.  This class of models is very expressive and all relevant
verification questions (\emph{e.g.} reachability) are undecidable.
For the last twenty years, a large amount of research was devoted to
identifying subclasses with decidable properties, by restricting the
continuous dynamics and/or the discrete behavior of the systems. Among
these classes lie the well known Timed Automata (TA)~\cite{alur94a},
where all variables are \emph{clocks} (with derivative $\dot{x}=1$),
guards are comparisons of clocks with rational constants, and updates
are resets. It is proved in~\cite{HenzingerKPV98} that reachability
becomes undecidable when adding one stopwatch (with $\dot{x}=1$ or
$\dot{x}=0$) to timed automata.  Decidability results were also
obtained for larger classes (see
\cite{asarin95,alur95,HenzingerKPV98,LafferrierePS00,AlurHLP00}),
usually by building from the associated transition system (with
uncountable state space) a finite abstraction preserving a specific
class of properties, like reachability or those expressed by temporal
logic formulas. In all these abstractions, a state is a pair composed
of a control node and a polyhedron of variable values.
Examples of such classes include initialized rectangular
automata~\cite{HenzingerKPV98} where $\dot{x} \in [a,b]$ or o-minimal
hybrid systems~\cite{LafferrierePS00} where the flow is more general,
for instance of the form $\dot{x}= Ax$ over $\R^n$ for some matrix
$A$. In both cases, the variables must be (possibly non
deterministically) reinitialized at discrete jumps.

\paragraph{Interrupt Timed  Automata.} The class of Interrupt
Timed Automata (ITA), incomparable with TA, was introduced
in~\cite{BH-Fossacs09,BHS-fmsd2012} as another subclass of hybrid
automata with a (time-abstract) bisimulation providing a finite
quotient, thus leading to decidability of reachability and some
variants of timed model checking.  In a basic $n$-dimensional ITA,
control nodes are organized along $n$ levels, with $n$ stopwatches
(also called clocks hereafter), one per level. At a given level, the
associated clock is active, while clocks from lower levels are frozen
and clocks from higher levels are irrelevant. Guards are linear
constraints and the clocks can be updated by linear expressions (using
only clocks from lower levels).  The particular hierarchical structure
of ITA makes them particularly well suited for modeling systems with
interruptions, like real-time operating systems.  \ita were extended
with parameters in~\cite{BHJL-rp13} while preserving decidability by
combining the finite abstraction of original \ita with a finite
partition of parameter values.

\paragraph{Contribution.} We define the class \polita, of polynomial
ITA, where linear expressions on clocks are replaced by polynomials
with rational coefficients both for guards and updates. For instance, a
guard at level $2$ with clock $x_2$ can be of the form $P_1(x_1)x_2^2
+ P_2(x_1) \geq 0$, where $P_1$ and $P_2$ are polynomials with single
variable $x_1$, the clock of level $1$. Thus, guards are more
expressive than in the whole class of linear hybrid automata. 
Such guards can be useful for instance if some objects are
produced at given levels, and operations on higher levels on these
objects require polynomial-time computations w.r.t. 
the size of these objects.  In addition, such
guards can simulate irrational (algebraic) constraints, a case that becomes
undecidable in the setting of timed automata~\cite{miller00}.

We establish that model checking of a timed extension of \CTL (which
contains reachability) is decidable in 2EXPTIME for \polita by adapting the
cylindrical decomposition~\cite{Collins75,BPR} related to the first
order theory of reals. This decomposition produces a finite partition
of the state space, which is the basis for the construction of a
finite bisimulation quotient.  The first order theory of reals has
already been used in several works on hybrid
automata~\cite{LafferrierePS00,AlurHLP00} but it was restricted to the
dynamical part, with discrete jumps that must reinitialize the
variables (like in o-minimal hybrid systems).  Our adaptation consists
in an on-the-fly construction avoiding in the favorable cases to build
the whole decomposition.

From an expressiveness point of view, we show that (contrary to \ita)
\polita are incomparable with stopwatch automata (\swa).  Finally, we
prove that the decidability result still holds with several
extensions: adding auxiliary clocks and parameters, and enriching the
possible updates. In particular, parametric \ita~\cite{BHJL-rp13} can
be seen as a subclass of \polita, and the complexity of our
reachability algorithm is better than~\cite{BHJL-rp13} (2EXPSPACE).

\paragraph{Outline.}
We describe the model of polynomial ITA in
Section~\ref{sec:definition}, with an example and the presentation of
the model checking problem. In Section~\ref{sec:cyldec}, we revisit and
adapt in this context the cylindrical decomposition for the first
theory of reals, with a special focus on the related algorithmic
questions. The decision procedure for the model checking problem in
\polita is then presented in Section~\ref{sec:reach}, with an example
of the construction. Finally, we describe several extensions in
Section~\ref{sec:extensions} and conclude in Section~\ref{sec:conc}.

\section{Polynomial ITA}
\label{sec:definition}

\subsection{Definition}

Let $\N$ denote the set of natural numbers, $\Z$ the set of integers,
$\Q$ the set of rationals, and $\R$ the set of real numbers, with
$\R_{\geq 0}$ the set of non negative real numbers.

Let $X=\{x_1,\dots,x_n\}$ be a finite set of $n$ variables called
clocks. We write $\Q[x_1,\dots,x_n]$ for the set of polynomials with
$n$ variables and rational coefficients. A \emph{polynomial
  constraint} is a conjunction of constraints of the form $P \rel 0$
where $P \in \Q[x_1, \ldots, x_n]$ and $\rel \in \{<,\leq,=,\geq,>\}$,
and we denote by $\Cs(X)$ the set of polynomial constraints. We also
define $\U(X)$, the set of \emph{polynomial
updates} over $X$ as:
\[\U(X) = \left\{\bigwedge_{x\in X} x:= P_x \mmid \forall x,\, P_x \in
  \Q[x_1, \ldots,x_n]\right\}.\]\MoS{D\'esol\'e si la question est
  naive : je ne comprends pas pourquoi x est en indice}
  \MaS{Il manquait un quantificateur; est-ce mieux?}\SH{c'est pour indiquer
  que la m.a.j. dépend de $x$}


A valuation for $X$ is a mapping $v \in \R^X$, sometimes also
identified to the vector $(v(x_1), \ldots, v(x_n)) \in \R^n$. The
valuation where $v(x)=0$ for all $x \in X$ is denoted by
$\vect{0}$. For $P \in \Q[x_1, \ldots, x_n]$ and $v$ a valuation, the
value of $P$ at $v$ is $P(v)=P(v(x_1),\dots,v(x_n))$. A valuation $v$
satisfies the constraint $P \rel 0$, written $v \models P \rel 0$, if
$P(v) \rel 0$. The notation is naturally extended to a polynomial
constraint: $v \models \fee$ with $\fee = \bigwedge_i P_i \rel_i 0$ if
$v \models P_i \rel_i 0$ for every $i$.

An update of valuation $v$ by $u=\wedge_{x\in X} x:= P_x \in\U(X)$ is
the valuation $v[u]$ defined by $v[u](x) = P_x(v)$ for every $x \in
X$.  Hence an update is atomic in the sense that all variables are set
at the same time: the new value of variables depend on the old values
of $v$.

For a valuation $v$ and a delay $d \in \R_{\geq0}$, the valuation
$v'=v +_k d$, corresponding to \emph{time elapsing for clock $x_k$},
is defined by $v'(x_k)=v(x_k)+ d$ and $v'(x)=v(x)$ for any other clock
$x$.

\begin{definition}[PolITA]\label{def:polita}
  A \emph{polynomial interrupt timed automaton (\polita)} is a tuple
  $\A=\langle\Sigma,Q, q_0, F, X, \lambda, \Delta\rangle$,
  where:
\begin{itemize}
\item $\Sigma$ is a finite alphabet,
	\item $Q$ is a finite set of states, $q_0$ is the initial
        state, $F \subseteq Q$ is the set of final states,
	\item $X=\{x_1, \ldots, x_n\}$ consists of $n$ interrupt clocks,
	\item the mapping $\lambda : Q \rightarrow \{1, \ldots, n\}$
          associates with each state its level and $x_{\lambda(q)}$ is
          called the \emph{active clock} in state $q$.
	\item $\Delta \subseteq Q \times \Cs(X) \times (\Sigma \cup
          \{\eps\}) \times \U(X) \times Q$ is the set of transitions.
          Let $q \tr{\fee, a, u} q'$ in $\Delta$ be a transition with
          $k=\lambda(q)$ and $k'=\lambda(q')$. The guard $\fee$ is a
          conjunction of constraints $P \rel 0$ with $P \in
          \Q[x_1,\dots,x_{k}]$ ($P$ is a polynomial over clocks from
          levels less than or equal to $k$). The update $u$ is of the
          form $\wedge_{i=1}^{n} x_i := C_i$ with:
	\begin{itemize}
        \item if $k > k'$, \textit{i.e.} the transition decreases the
          level, then for $1 \leq i \leq k'$, $C_i=x_i$ and for $i >
          k'$, $C_i=0$;
        \item if $k \leq k'$ then for $1 \leq i < k$, $C_i=x_i$,
          $C_k=P$ for $P \in \Q[x_1,\dots,x_{k-1}]$ or $C_k=x_k$, and
          for $i > k$, $C_i=0$.
	\end{itemize}
\end{itemize}
\end{definition}

Remark that although it is possible to compare an active clock in a
non-polynomial way, \textit{e.g.} $x_2 > \sqrt{x_1}$ (which can be
translated as $x_2^2 > x_1 \wedge x_1 \geq 0$), it cannot be updated
in such a fashion.

\begin{example}
  \polita $\A_0$ of \figurename~\ref{fig:exPolIta} has two levels,
  with $q_0$ at level $1$ and $q_1$ and $q_2$ at level $2$, with $q_2$
  the single final state.  At level $1$, only $x_1$ appears in guards
  and updates (here the only update is the resetting of $x_1$ by
  action $a'$), while at level $2$ guards use polynomials in both
  $x_1$ and $x_2$.
\end{example}

\begin{figure}
\centering
\begin{tikzpicture}[auto]
\node[state,initial] (q1) at (0,-1.5) {$q_0,1$};
\node[state] (q2) at (3,0) {$q_1,2$};
\node[state] (q3) at (7,0) {$q_2,2$};

\path[->] (q1) edge node {$x_1^2 \leq x_1 + 1$, $a$} (q2);
\path[->] (q1) edge[loop right] node {$x_1^2 > x_1 + 1$, $a'$, $x_1:=0$} (q1);
\path[->] (q2) edge[bend left] node {$(2x_1-1)x_2^2 > 1$, $b$} (q3);
\path[->] (q3) edge[bend left] node {$x_2 \leq 5 - x_1^2$, $c$} (q2);
\end{tikzpicture}
\caption{A sample \polita $\A_0$.}
\label{fig:exPolIta}
\end{figure}

A \emph{configuration} $(q,v)$ 
consists of a state $q$ of $\A$ and a clock valuation $v$.
\begin{definition}\label{def:semantics}
  The semantics of a \polita $\A$ is defined by the (timed) transition
  system $\T_{\A}= (S, s_0, \rightarrow)$, where $S=\left\{(q,v) \mid
    q \in Q, \ v \in \R^X\right\}$ is the set of configurations, with
  initial configuration $s_0=(q_0, \vect{0})$. The relation
  $\rightarrow$ on $S$ consists of two types of steps:
\begin{description}[font=\em]
\item[Time steps:] Only the active clock in a state can evolve, all
  other clocks are frozen.  For a state $q$ with active clock
  $x_{\lambda(q)}$, a time step of duration $d \in \R_{\geq0}$ is
  defined by $(q,v) \tr{d} (q, v')$ with $v' = v +_{\lambda(q)} d$.  A
  time step of duration $0$ leaves the system $\T_{\A}$ in the same
  configuration.
\item[Discrete steps:] There is a discrete step $(q, v) \tr{a} (q',
  v')$ whenever there exists a transition $q \tr {\fee, a, u} q'$ in
  $\Delta$ such that $v \models \fee$ and $v' = v[u]$.
\end{description}
\end{definition}

An run of a \polita $\A$ is a path in $\T_\A$.  The \emph{trace} of a
run is the sequence of letters (or word) appearing in the path.  The
\emph{timed word} is the sequence of letters along with the absolute
time of the occurrence, \emph{i.e.} the sum of all delays appearing
before the letter.  Given a subset $F\subseteq Q$ of final states, a
run is accepting if it ends in a state of $F$.  This defines the
\emph{language} (resp. \emph{timed language}) as the set of traces
(resp. timed words) of accepting runs.

\begin{example}
  The \polita $\A_0$ can only take the transition from $q_0$ to $q_1$
  before $x_1$ reaches $\frac{1+\sqrt5}2$, \emph{i.e.} at the point
  where the red curve crosses the $x_1$ axis on
  \figurename~\ref{fig:exPolItaTrajectory}.  Then, transition $b$ from
  $q_1$ to $q_2$ can only be taken once $x_2$ reaches the grey areas.
  Transition $c$ cannot however be taken once the green curve has been
  crossed.  Hence the loop $bc$ can be taken as long as the clocks
  remain in the dark gray zone.  In the sequel, we show how to
  symbolically compute these zones.  Since $q_2$ is a final state, the
  run depicted in \figurename~\ref{fig:exPolItaTrajectory} is accepted
  by $\A$.  The associated timed word is
  $(a,1.2)(b,2.3)(c,2.6)(b,3.3)(c,3.9)(b,5.1)$, and the trace is the
  word $abcbcb$.
\end{example}

\begin{figure}
\centering
\begin{tikzpicture}
\newlength\epslength
\setlength\epslength{0.1cm}
\draw[->] (-\epslength,0) -- (5.25,0) node[anchor=west] {$x_1$};
\draw[->] (0,-1.75) -- (0,5.25) node [anchor=south] {$x_2$};

\path[fill=black!7] (0.522376,4.727123) plot[domain=0.52:5,smooth,tension=0.2] (\x,{1/sqrt(2*\x-1)}) -- (5,5) -- cycle;
\path[fill=black!20] (0.522376,4.727123) -- (0.7,1.581138) --plot[domain=0.7:1.61803398875,smooth,tension=0.2] (\x,{1/sqrt(2*\x-1)}) -- (1.61803398875,2.38196601125)
                                       -- plot[domain=1.61803398875:0.522376] (\x,5-\x*\x);

\draw[draw=blue,thick,smooth,tension=0.2,domain=0.52:5] plot (\x,{1/sqrt(2*\x-1)}) node[draw=blue,outer xsep=2pt,outer ysep=7pt,anchor=south] {$(2x_1-1)x_2^2-1=0$};
\draw[draw=blue,thick,smooth,domain=0.65:5] plot (\x,{-1/sqrt(2*\x-1)});
\draw[draw=green!80!black,thick,domain=-0.1:2.6] plot (\x,5-\x*\x) node[draw=green!80!black,outer sep=2pt,anchor=south west] {$x_2+x_1^2-5=0$};
\draw[draw=red!80!black,thick] (1.61803398875,-1.75) -- ++(0,6.75) node[draw=red!80!black,outer sep=2pt,anchor=north west] {$x_1^2-x_1-1=0$};

\tikzstyle{action}=[circle,fill=black,minimum width=4pt,inner sep=0pt]
\draw[very thick] (0,0) -- ++(1.2,0) node[action](up){}
	                      -- ++(0,1.1) node[action](a1){}
                        -- ++(0,0.3) node[action](a2){}
                        -- ++(0,0.7) node[action](a3){}
                        -- ++(0,0.6) node[action](a4){}
                        -- ++(0,1.2) node[action](a5){}
                        -- ++(0,1);
\node[anchor=north] at (up.south) {$a$};
\tikzstyle{actlabel}=[anchor=west,inner sep=0pt]
\node[actlabel] at (a1.east) {$b$};
\node[actlabel] at (a2.east) {$c$};
\node[actlabel] at (a3.east) {$b$};
\node[actlabel] at (a4.east) {$c$};
\node[actlabel] at (a5.east) {$b$};


\end{tikzpicture}
\caption{A trajectory of clocks of $\A_0$ in the 2-dimensional plane.}
\label{fig:exPolItaTrajectory}
\end{figure}

\subsection{Verification problems for \polita}

Given a \polita $\A$, natural questions arise regarding its behavior.
The most standard one is the \emph{reachability problem} which is the decision problem asking whether
a given state can be reached from the initial configuration.
This allows in particular to decide whether the timed language is nonempty, which is equivalent to testing the reachability of a final state.

More elaborate queries regarding the behavior of a \polita can be expressed through temporal logics like \CTL~\cite{emerson82,queille82} or timed extensions of such logics like \TCTL~\cite{alur93,HNSY94}.
Here we use a timed extension of \CTL which allows to reason over the values of clocks of the \polita.

Let $AP$ be a set of atomic 
propositions,
we equip the states of $\A$ with a labeling 
$lab: Q \rightarrow 2^{AP}$ of propositions 
that hold in the given state. For convenience, we assume
that $Q \subseteq AP$ with for all $q,q'\in Q$,
$q' \in lab(q)$ iff $q=q'$. 

\begin{definition} Formulas of the timed logic \TCTLint are defined by the
  following grammar:
\[
\psi ::= p \mid \psi \wedge \psi \mid \neg \psi \mid P \rel 0 \mid \always \psi \until \psi \mid \expath \psi \until \psi
\]
where $p \in AP$, $P$ is a polynomial of $\Q[x_1,\dots,x_n]$, and $\rel \,\in \{>,\geq,=,\leq,<\}$.
\end{definition}
We use the classical shorthands $\eventually p= \textit{true}\until p$, $\globally p=\neg\eventually\neg p$, and boolean operators.
The reachability problem  of a state $q$ is simply the satisfaction of $\expath\eventually q$.

The formulas of \TCTLint are interpreted over configurations of $\A$, hence the semantics of \TCTLint is defined as follows on the transition
system $\T_{\A}$ associated with $\A$.
Let  $\Run(s)$ denote all runs starting from configuration $s=(q,v)$.
For $\rho =(q,v) \xrightarrow{d_1} (q,v+_{\lambda(q)}d_1) \xrightarrow{a_1} (q_2,v_2) \cdots \in Run(s)$, a position in $\rho$ is a pair $\pi=(i,\delta)$ where $1 \leq i$ and $0 \leq \delta \leq d_i$.
The configuration corresponding to $\pi$ is $s_\pi=(q_i,v_i+_{\lambda(q_i)}\delta)$ (with $q_1=q$ and $v_1=v$).
We denote by $<_\rho$ the strict lexicographical order over positions of $\rho$.
\\
For basic formulas:
\[
\begin{array}{lcl}
  s \models p &\quad\textrm{iff}\ & p \in lab(s) \\
  s \models P \rel 0 &\quad\textrm{iff}\ &
  v \models P \rel 0
\end{array}
\]
and inductively:
\[
\begin{array}{lcl}
  s \models \varphi \wedge \psi &\quad\textrm{iff}\ & s \models \varphi\ \textrm{and}
  \ s \models \psi \\
  s \models \neg\varphi &\quad\textrm{iff}\ & s \not\models \varphi \\
  s \models \always \fee \until \psi &
\quad\textrm{iff}\ & \textrm{for all } \rho \in \Run(s), \ \rho \models \fee \until \psi\\
  s \models \expath \fee \until \psi &
\quad\textrm{iff}\ & \textrm{there exists } \rho \in \Run(s) 
\textrm{ s. t. } \rho \models \fee \until \psi\\
  \textrm{with } 
  \rho \models \fee \until \psi  &
\quad\textrm{iff}\ &\textrm{there is a position } \pi \in \rho 
\textrm{ s. t. } s_\pi \models \psi \\
& &\textrm{and } \forall \pi' <_{\rho} \pi, \ s_{\pi'} \models \fee \vee \psi.
\end{array}
\]

The automaton $\A$ satisfies $\psi$ (written $\A \models \psi$) if the initial configuration $s_0$
of $\T_{\A}$ satisfies $\psi$.
The \emph{model checking} problem asks, given $\A$ and $\psi$, whether $\A \models \psi$.


As mentioned in the introduction, an exhaustive traversal of the
(uncountable) transition system $\T_{\A}$ is not possible, and the model checking
algorithm relies on an abstraction of said transition system.
This abstraction needs to be refined enough to capture both time
elapsing and discrete jumps through the crossing of a transition.
Namely, two configurations in the same abstraction class should reach
the same successor classes when time elapses or when an update is
applied.
Moreover, the truth value of subformulas  $P \rel 0$ should be invariant in each abstraction class.

The previous works of~\cite{BH-Fossacs09,BHS-fmsd2012,BHJL-rp13} on
\ita built such an abstraction by relying on a set of
\emph{expressions} with rational coefficients.  These expressions
contained linear forms involved in guards and updates, along with the
active clock of the level.  Moreover, since the ordering of two
expressions at a given level could rely on the value of lower-level
clocks, some expressions were required at inferior levels.  The
classes were then defined as subsets of $\R^n$ where the ordering of
expressions was constant.

In the sequel, we adapt the above process in the context of \polita, where the
constraints are polynomial rather than linear, and hence yield regions
that are not polyhedra, but cells defined by a so called
\emph{cylindrical decomposition}.

\section{Cylindrical algebraic decomposition for first-order theory of
  reals}
\label{sec:cyldec}
\emph{Cylindrical algebraic decomposition} is introduced by
Collins in \cite{Collins75} for solving quantifier elimination
problems of first-order formulas over the reals. 
The first algorithm for solving this problem
was given by Tarski in \cite{Tarski48} but its complexity was non
elementary recursive. Cylindrical algebraic decomposition is doubly
exponential in the number of variables and is now a popular technique
for solving polynomial systems over the reals. Given a polynomial
family, it essentially partionates the ambient space into {\em cells}
which are homeomorphic to $]0,1[^i$ over which the input is
sign-invariant. These cells are also intrinsically arranged together
with a nice cylindrical structure which we explain further.
Later on, a procedure in EXPSPACE was
established~\cite{Ben-Or1984}.
The best lower bound currently known for this problem is
$STA(*,{2^n}^{O(1)},n)$ (a complexity class defined by machines with
limited alternations and located between EXPTIME and EXPSPACE) and it
already holds without the multiplication~\cite{Berman1980}.

We consider formulas that express properties of reals.
There are inductively defined as follows. 
An arithmetic expression is: 
\begin{itemize}
  \item either an integer constant, a variable;
  \item or $e_1+e_2$,  $e_1*e_2$
  where $e_1$ and $e_2$ are arithmetic expressions.
\end{itemize}
A formula is: 
\begin{itemize}
  \item a basic formula: $e \sim 0$
  where $\sim \in \{<,=\}$ and $e$ is an arithmetic expression;
  \item or $\varphi_1\wedge \varphi_2$,  $\varphi_1\vee \varphi_2$,
  $\neg \varphi_1$, $\forall x \varphi_1$, $\exists x \varphi_1$ 
  where $\varphi_1$ and $\varphi_2$ are formulas
  and $x$ is a variable.
\end{itemize}
A sentence is a formula without free variables. A sentence has a truth
value when interpreted over $\R$ and we are looking for deciding the
truth of a formula. 

For our purposes, we will adapt the cylindrical algebraic decomposition.
So we develop in the section all the required machinery.
Here we only describe the general principles and we 
explain how it can be used for deciding the truth of a formula.
The first concept that we introduce is the one of \emph{cell}.

\begin{definition}
A cell of level $n$ is a subset of $\R^n$ inductively defined as follows.
\begin{itemize}
 \item When $n=1$, it is either a point or an open interval.
 \item A cell $C$ of level $n+1$ is based on a cell $C'$
 of level $n$. It has one of the following shapes.
 \begin{enumerate}
 %
 %
  \item $C=\{(x,f(x))\mid x \in C'\}$ with $f$ a continuous
  function from $C'$ to $\R$;
 %
 %
  \item $C=\{(x,y)\mid x \in C'\wedge l(x)<y< u(x)\}$ with $l<u$ continuous
  functions from $C'$ to $\R$, possibly with $l=-\infty$ and/or $u=+\infty$.
 \end{enumerate}
\end{itemize}
\end{definition}
By convention the single cell of level 0 is $\R^0$.

Let $\mathcal P=\{\mathcal P_i\}_{1\leq i\leq n}$ be a family of
subsets of polynomials such that for all $P \in \mathcal P_i$, $P\in
\R[X_1,\ldots,X_i]$.  By convention, we extend $\mathcal P$ with
$\mathcal P_0=\emptyset$.  The second concept that we introduce is the
\emph{sign invariance} of a cell w.r.t. $\mathcal P$.

\begin{definition}
\label{def:cad}
Let $\mathcal P=\{\mathcal P_i\}_{i\leq n}$.
A cell $C$ of level $i$ is $\mathcal P$-invariant if:
\begin{itemize}
 \item For all $j\leq i$, for all $P \in \mathcal P_j$, for all $x,y \in C$
 $sign(P(x))=sign(P(y))$.
 \item When $i<n$,
 \begin{enumerate}
  \item either $C\times \R$ is $\mathcal P$-invariant;
  
  \item or there exists $f_1<\ldots<f_r$ continuous functions
  from $C$ to $\R$ such that all the following cells are $\mathcal P$-invariant:\\
\begin{itemize}
\item for all $1\leq i\leq r$,
  $\{(x,f_i(x))\mid x \in C\}$;
 %
 %
\item for all $0\leq i\leq r$, $\{(x,y)\mid x \in C\wedge f_{i}(x)<y<
  f_{i+1}(x)\}$ with the convention that $f_0=-\infty$ and
  $f_{r+1}=+\infty$.
\end{itemize}
 \end{enumerate}
\end{itemize}
\end{definition}

Observe that $\R^0$ is $\mathcal P$-invariant, and that one can
inductively define a tree of $\mathcal P$-invariant cells as follows.
\begin{itemize}
 \item The root of the tree is $\R^0$;
 \item Let $C$ be a $\mathcal P$-invariant cell of level $i<n$
 belonging to the tree. Then depending on the kind of invariance,
 \begin{enumerate}
  \item either $C$ has a single child $C\times \R$;
  \item or for some $r \in \N\setminus\{0\}$, $C$ has $2r+1$ ordered children
  $\{(x,y)\mid x \in C\wedge y< f_1(x)\}$, $\{(x,f_1(x))\mid x \in C\}$,
  $\{(x,y)\mid x \in C\wedge f_{1}(x)<y< f_{2}(x)\}$,
  $\ldots$ , $\{(x,y)\mid x \in C\wedge y> f_r(x)\}$.   
 \end{enumerate}
\end{itemize}
This tree is also called a \emph{cylindrical decomposition}.

\begin{figure}
\centering
\includegraphics{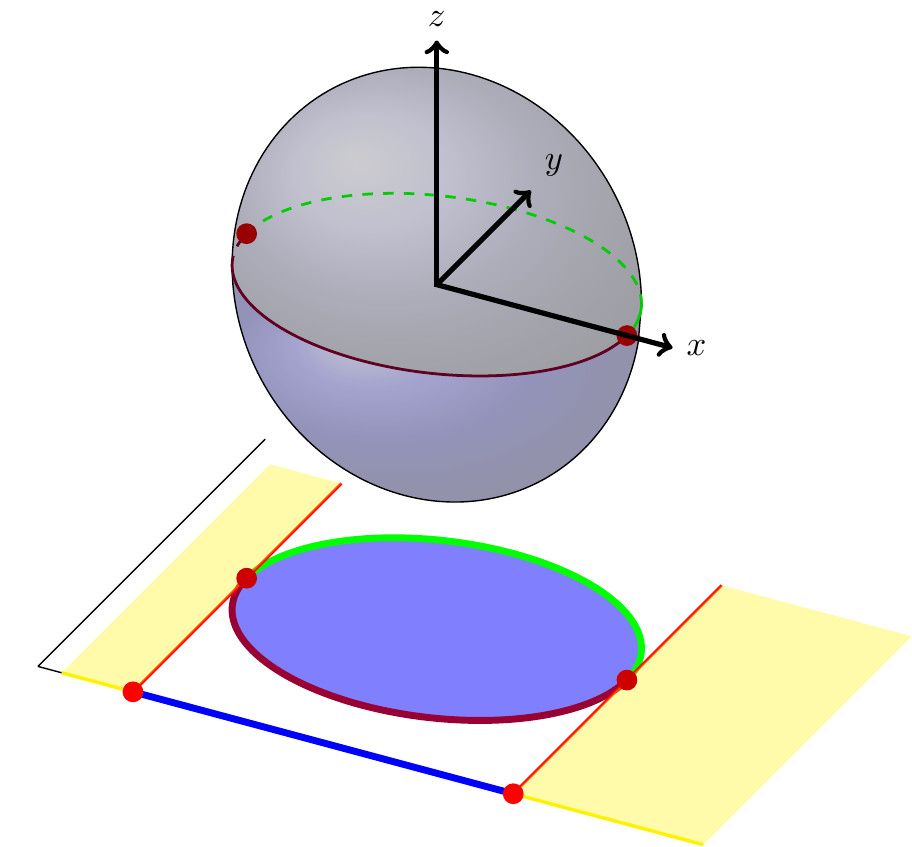}
\caption{Cylindrical decomposition of a sphere.}
\label{fig:shpere}
\end{figure}
\begin{example}[\cite{BPR}]
  Consider the single polynomial $P_s = X_1^2+X_2^2+X_3^2-1$, with
  $P_s =0$ representing a sphere of radius $1$ in $\R^3$, as shown in
  \figurename~\ref{fig:shpere}.  At level 1, $\R$ is partitioned into
  $5$ cells:
\begin{mathpar}
C_{-\infty}=]-\infty,-1[ \and C_{-1}=\{-1\} \and C_0= ]-1,1[ \\ 
C_1= \{1\} \and C_{+\infty}=]1, +\infty[
\end{mathpar}
At level 2, $\R^2$ is partitioned above the previous cells.  There is
a single cell $C_{-\infty}\times \R$ above $C_{-\infty}$ (and
similarly $C_{+\infty}\times \R$ above $C_{+\infty}$).  Above $C_{-1}$
are three cells, its children in the tree:
\begin{mathpar}
\{-1\} \times ]-\infty,0[ \and \{(-1,0)\} \and \{-1\} \times ]0,+\infty[
\end{mathpar}
The cells above $C_1$ are similar.

And above $C_0$ are 5 cells: the interior of the disc $C_{0,0}$, its
lower and upper edges $C_{0,-1}$ and $C_{0,1}$ and the exterior of the
circle (the lower and upper parts) $C_{0,-\infty}$ and
$C_{0,+\infty}$:
\begin{mathpar}
C_{0,1}: \left \{ 
\begin{array}{l} -1<x_1<1 \\ x_2= \sqrt{1 - x_1^2}
\end{array} \right.

\and
C_{0,-1}: \left \{ 
\begin{array}{l} -1<x_1<1 \\ x_2= -\sqrt{1 - x_1^2}
\end{array} \right.

\and
C_{0,+\infty}:\left \{ 
\begin{array}{l} -1<x_1<1 \\ x_2> \sqrt{1 - x_1^2}
\end{array} \right.

\and
C_{0,-\infty}: \left \{ 
\begin{array}{l} -1<x_1<1 \\ x_2< -\sqrt{1 - x_1^2}
\end{array} \right.

\and
C_{0,0}: \left \{ 
\begin{array}{l} -1<x_1<1 \\- \sqrt{1 - x_1^2}< x_2 < \sqrt{1 - x_1^2}
\end{array} \right.
\end{mathpar}
At level 3, cell $C_{0,-1}$ is further lifted in three cells where
$C_{0,-1,0}$ is half the equator circle of the sphere:
\begin{mathpar}
C_{0,-1,-\infty}: \left \{ 
\begin{array}{l} -1<x_1<1 \\ x_2= -\sqrt{1 - x_1^2} \\ x_3<0
\end{array} \right.\and
C_{0,-1,0}: \left \{ 
\begin{array}{l} -1<x_1<1 \\ x_2= -\sqrt{1 - x_1^2} \\ x_3=0
\end{array} \right.\and
C_{0,-1,+\infty}: \left \{ 
\begin{array}{l} -1<x_1<1 \\ x_2= -\sqrt{1 - x_1^2} \\ x_3>0
\end{array} \right.
\end{mathpar}
And $C_{0,0}$ is lifted into 5 cells: below (and above) the inferior
(resp. superior) half of the sphere, said inferior (resp. superior)
half, and the interior of the sphere.  These cells are determined by
two functions $f_1(x_1,x_2)=-\sqrt{1-x_1^2-x_2^2}$ and
$f_2(x_1,x_2)=\sqrt{1-x_1^2-x_2^2}$.
\end{example}

\begin{algorithm2e}
\DontPrintSemicolon

\SetKwFunction{Check}{Check}

\KwData{A cylindrical decomposition having parameter $C$
as an element.}
{\Check}$(\varphi,i,C,\mathcal S)$: a boolean\;
\KwIn{$\varphi$, a prenex  sentence with $n$ variables,
$C$, a $\mathcal P$-invariant cell of level $i$}
\KwIn{$\mathcal S$, a set of pairs of polynomials and signs}
\KwOut{the truth value of $\varphi$}
\KwData{$j,k$, some indices}
 \tcp{The expression $sign(P(C))$ uses the sign invariance of $C$}
 $\mathcal S \leftarrow \mathcal S \cup \{(P,sign(P(C))\mid P \in \mathcal P_i \}$\;
 \tcp{When $i=n$, all atomic formulas of $\psi$ are determined by $\mathcal S$}
 \lIf {$i=n$}
 {
  \Return $\psi(\mathcal S)$
 }
 \tcp{Let $C_1,\ldots,C_k$ be the children of $C$}
 \uIf{$Q_{i+1}=\exists$}
 {
  \For {$j{\bf ~from~}1{\bf ~to~}k$}
  {\lIf{$\Check(\varphi,i+1,C_j,\mathcal S)$}{\Return {\bf true}}}
  \Return {\bf false}\;
 }
 \Else
 {
  \For {$j{\bf ~from~}1{\bf ~to~}k$}
  {\lIf{$\neg \Check(\varphi,i+1,C_j,\mathcal S)$}{\Return {\bf false}}}
  \Return {\bf true}\;
 }

\caption{Checking the truth of a formula}
\label{algo-fosolve}
\end{algorithm2e}
Let us explain how a cylindrical decomposition is useful for
first-order theory of reals. Any sentence can be transformed into an
equivalent prenex formula $\varphi=Q_1 x_1 \ldots Q_n x_n \psi$ such
that $Q_i \in \{\forall,\exists\}$ and $\psi$ is a quantifier free
formula that checks signs of polynomials evaluated on some of the
$x_i$'s.  Thus by syntactical examination, we first build the family
$\mathcal P$ from the polynomials occurring in $\psi$. Assume that we
produce a cylindrical decomposition for $\mathcal P$. Then
Algorithm~\ref{algo-fosolve} solves the decision problem with the call
${\tt Check}(\varphi,0,\R^0,\emptyset)$.  The correctness of the
algorithm is proved by (1) the sign invariance of the cells, (2) the
partition of $C \times \R$ between the children of a cell $C$ and (3)
a backward inductive property: given a cell C of level $i$, the truth
of $Q_{i+1} x_{i+1} \ldots Q_n x_n \psi$ does not depend on the point
$(x_1,\ldots,x_i) \in C$.

The section is organized as follows. In
subsection~\ref{subsec:algo-se} we develop algorithms for rings with
some additional assumptions that depend on the algorithms (also
presented in~\cite{BPR}). The main hypothesis is that we consider
subrings of $\R$ for which there is a decision procedure for
evaluation of the sign of an item. In
Subsection~\ref{subsec:triangular}, we introduce triangular systems
which are representations of algebraic reals and domains of $\R$ and
we establish that they are
sign-effective. Subsection~\ref{subsec:decomposition} is devoted to
the building of a cylindrical decomposition.  It consists in two
stages: the elimination stage that enlarges $\mathcal P$ and the
lifting stage that builds the cylindrical decomposition. In this
decomposition a cell is represented by an algebraic real (i.e. a
triangular system) belonging to it.

\subsection{Algorithms in sign-effective subrings of reals}
\label{subsec:algo-se}

{\bf Preliminary remarks.}  Let us denote by $\An$ a domain
\textit{i.e.}, a ring with no divisors of zero\MoS{Pour l'instant je
  ne change pas mais il faudrait que $\An$ soit un domaine r\'eel ou
  \`a tout le moins ordonn\'e car \`a la fin du paragraphe on regarde
  des conditions de signe sur les \'el\'ements de $\An$ (quand on
  passe aux indices de Cauchy}. \SH{C'est pr\'ecis\'e au paragraphe suivant
  et pour certains algo on ne travaille pas sur un sous-anneau de $\R$}$
  \F_{\An}$ denotes the field of
fractions of $\An$.  Whenever we will describe algorithms involving a
domain $\An$, we assume a representation of an item of $\An$. For
instance, the representation of $\frac{p}{q}\in \Q$ could be the pair
of integers $(p,q)$. We do not require that the representation is
unique but that the following operations are effective: addition,
multiplication and zero-test. We denote multiplication and addition as
usual. The function that performs the zero-test is denoted ${\tt
  Null}(\An,d)$ with $d$, a representation of some item of $\An$.

\smallskip
The goal of this section is to exhibit some problems that can be
solved in $\An[X]$ (for $\An \subseteq \R$) 
when, in addition to the previous operations, 
the sign of an element of $\An$ can be determined. The sign is
defined by $\sign(0)=0$ and for $x \in \An \setminus\{0\}$, $\sign(x) =
1$ if $x>0$, $\sign(x)=-1$ if $x<0$. 
The function that computes the sign
is denoted ${\tt Sign}(\An,d)$ with $d$, a representation of some
item of $\An$.
Since the procedures
we describe may depend on additional properties like this one,
we will indicate which properties are assumed for the algorithms.

\noindent {\bf Notations.}
The sign of a permutation that reverts the order of $i$ items is
denoted by $\varepsilon_{i}=(-1)^{\frac{i(i-1)}{2}}$.  We denote by
$Rem$ the remainder of the Euclidean division in $\An[X]$: for
polynomials $P,Q \in \An[X]$ with respective degrees $p,q$,
$Rem(P,Q)\in \F_{\An}[X]$ is the unique polynomial of degree less than
$q$ such that there exists $C \in \F_{\An}[X]$ with $P = QC +
Rem(P,Q)$.

\paragraph*{Computing the degree of a gcd.}

We start with a characterization of the degree of the gcd of
two polynomials that holds in any domain. The interest
of this characterization is  that it only involves whether some
determinants in $\An$ are null
and thus can be computed by additions, multiplications and zero-tests.
Furthermore, subresultants will also be useful later on.

\begin{definition}[Sylvester-Habicht matrices and subresultants]
  Let $\An$ be a domain. Let $P,Q \in \An[X]$ with $P=\sum_{i\leq p}
  a_iX^i$ and $Q=\sum_{i\leq q} b_iX^i$ such that $a_p\neq 0$,
  $b_q\neq 0$ and $q\leq p$. Then the Sylvester-Habicht matrix of
  order $j$ for $0\leq j \leq \min(p-1,q)$ is the $(p+q-2j) \times
  (p+q-j)$ matrix $SyHa_j(P,Q)$ whose rows are
  $X^{q-j-1}P,\ldots,P,Q,\ldots,X^{p-j-1}Q$ considered as vectors with
  respect to the basis $X^{p+q-j-1},\ldots,X,1$.

\noindent
The $j$-th subresultant denoted $sRes_j(P,Q)$ is the determinant
of the square matrix $SyHa_{j,j}(P,Q)$ obtained by taking the first
$p+q-2j$ columns of $SyHa_j(P,Q)$. 
When $q<p$, this definition
is extended for $q<j\leq p$ by: $sRes_p(P,Q)=a_p$,
and $sRes_j(P,Q)=0$ for $q<j \leq p-1$.
\end{definition}

\begin{remark}\label{rq:sresq} 
Observe that when $q<p$, $SyHa_q(P,Q)$ consists of
  $Q,\ldots,X^{p-q-1}Q$ (without any occurrence of $P$). Hence
  $sRes_q(P,Q)$ is the determinant of a matrix obtained by reverting
  the rows of $b_qId_{p-q}$, which yields
  $sRes_q(P,Q)=\varepsilon_{p-q}b_q^{p-q}$.
\end{remark}

\begin{example}\label{ex:sRes}
  Consider polynomials $P=\alpha X^2-1$ and $Q=X+\beta$, obtained from
  the \polita of \figurename~\ref{fig:exPolIta} when the value of
  $X_1$ has been fixed.  By definition, we have $sRes_2(P,Q)=\alpha$,
  and by the above remark, $sRes_1(P,Q)=1$.  Precisely $SyHa_1(P,Q)$
  is the one row matrix $(1,\beta)$ and $SyHa_{1,1}(P,Q)=(1)$.  For
  $j=0$, one must compute the determinant of the matrix whose rows are
  $P,Q,XQ$, namely
\[
SyHa_{0}(P,Q)=
\begin{pmatrix}
\alpha & 0 & -1 \\
0 & 1 & \beta \\
1 & \beta & 0
\end{pmatrix}\]
whose determinant is $1-\alpha\beta^2$. 
\end{example}

\begin{proposition}
\label{proposition:syha-gcd}
Let $\An$ be a domain and
$P,Q \in \An[X]$ with $P=\sum_{i\leq p} a_iX^i$
and $Q=\sum_{i\leq q} b_iX^i$ such that $a_p\neq 0$,
$b_q\neq 0$ and $q\leq p$. Let $0\leq j \leq \min(p-1,q)$. Then
$deg(gcd(P,Q))=j$ if and only if 
$sRes_0(P,Q)=\cdots=sRes_{j-1}(P,Q)=0$ and $sRes_j(P,Q)\neq 0$. 
Consequently when $p=q$, $deg(gcd(P,Q))=p$
if and only if $sRes_0(P,Q)=\cdots=sRes_{p-1}(P,Q)=0$.
\end{proposition}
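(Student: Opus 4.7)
The plan is to relate the vanishing of subresultants to the degree of $\gcd(P,Q)$ by interpreting each $sRes_j(P,Q)$ as the determinant of a linear map whose non-trivial kernel encodes a bounded-degree Bezout-like identity. The main structural fact I would use is that the matrix $SyHa_{j,j}(P,Q)$ represents, in the monomial basis, the $\F_\An$-linear map
\[
\Phi_j : \An[X]_{<q-j} \times \An[X]_{<p-j} \longrightarrow \F_\An[X]_{<p+q-j} / \F_\An[X]_{<j},
\qquad (U,V) \longmapsto UP + VQ \bmod X^j,
\]
(i.e., reading off the coefficients of degrees $j, j+1, \ldots, p+q-j-1$ of $UP+VQ$). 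The source and the target both have dimension $p+q-2j$, so $sRes_j(P,Q) = \det \Phi_j$ is nonzero iff $\Phi_j$ is an isomorphism.

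First I would prove the direct translation: $sRes_j(P,Q) = 0$ iff there exist $(U,V) \neq (0,0)$ with $\deg U < q-j$ and $\deg V < p-j$ such that $\deg(UP+VQ) < j$. This is just the standard equivalence between the vanishing of a square matrix determinant and the existence of a non-trivial kernel element.

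Next I would connect this kernel condition to $\deg(\gcd(P,Q))$. Write $G = \gcd(P,Q)$, $d = \deg G$, and $P = G\tilde P$, $Q = G\tilde Q$ with $\gcd(\tilde P,\tilde Q)=1$. If $d > j$, then $U = \tilde Q$ and $V = -\tilde P$ satisfy the degree bounds ($\deg \tilde Q = q-d < q-j$, $\deg \tilde P = p-d < p-j$) and give $UP+VQ = 0$, so $sRes_j(P,Q) = 0$ for all $j < d$. Conversely, if $(U,V)$ is in the kernel of $\Phi_j$, write $UP+VQ = R$ with $\deg R < j$. Then $G$ divides $R$, so $R = G \tilde R$ with $\deg \tilde R < j - d$ (forced to be zero as a polynomial if $d \geq j$). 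Dividing by $G$ gives $U\tilde P + V\tilde Q = \tilde R$; using $\gcd(\tilde P, \tilde Q) = 1$, a standard degree argument in $\F_\An[X]$ shows that $\tilde R$ must vanish unless $d < j$, which forces $U = V = 0$ when $d \leq j$. This proves that $sRes_d(P,Q) \neq 0$, concluding the equivalence.

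The remaining verification is the case $q = p$, where the statement extends up to $j = p$: Remark~\ref{rq:sresq} gives $sRes_p(P,Q) = a_p \neq 0$ and $sRes_q(P,Q) = \varepsilon_{p-q} b_q^{p-q}$, so when $q = p$ the boundary value is handled directly by this computation. The main technical obstacle is the careful tracking of degrees in the converse direction (ensuring that the kernel vector $(U,V)$ one extracts from $sRes_j = 0$ is genuinely nontrivial modulo the gcd, rather than an artifact of higher-degree cancellation); this is where writing $\Phi_j$ as a map into the truncated quotient $\F_\An[X]_{<p+q-j}/\F_\An[X]_{<j}$ is essential, because it makes the dimension count match exactly the size of $SyHa_{j,j}(P,Q)$.
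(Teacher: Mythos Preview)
Your argument is correct and shares the paper's opening observation (vanishing of $sRes_j$ is equivalent to a nontrivial $(U,V)$ with $\deg U<q-j$, $\deg V<p-j$ and $\deg(UP+VQ)<j$), but your converse is organized differently. The paper proves the intermediate claim $sRes_0=\cdots=sRes_{j-1}=0 \Leftrightarrow \deg(\gcd(P,Q))\ge j$ by induction on $j$: from $sRes_j=0$ it extracts $(U,V)$ with $\deg(UP+VQ)<j$, and the inductive hypothesis $\deg(\gcd)\ge j$ forces $UP+VQ=0$, whence $\deg(\mathrm{lcm})<p+q-j$ and $\deg(\gcd)\ge j+1$. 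You instead factor $G=\gcd(P,Q)$ out once and use coprimality of $\tilde P,\tilde Q$ to argue directly that $sRes_d\neq 0$, avoiding induction. Both are standard; yours is slightly more self-contained, the paper's makes the monotone structure $sRes_0=\cdots=sRes_{d-1}=0$ more explicit.

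Two small corrections. First, your claim that the kernel is trivial ``when $d\le j$'' overstates: for $j>d$ the subresultant $sRes_j(P,Q)$ can vanish (defective cases), so the degree argument does not extend there. You only need $j=d$, where $\deg R<d=\deg G$ forces $R=0$, and then $\deg U<q-d=\deg\tilde Q$ together with $\tilde Q\mid U$ gives $U=0$ (and similarly $V=0$); restrict the claim to $j=d$ and it is clean. Second, your appeal to Remark~\ref{rq:sresq} for the $p=q$ boundary is off: that remark treats $sRes_q$ when $q<p$, and no $sRes_p$ is defined when $p=q$. The ``consequently'' clause follows directly from the main equivalence, since $\deg(\gcd)\in\{0,\ldots,p\}$ and every value below $p$ is excluded precisely when $sRes_0,\ldots,sRes_{p-1}$ all vanish.
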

\begin{proof}
  Observe that $sRes_j(P,Q)=0$ if and only if there is a non trivially null
  linear combination of polynomials
  $\alpha_{q-j-1}X^{q-j-1}P+\cdots+\alpha_0P
  +\beta_0Q+\cdots+\beta_{p-j-1}X^{p-j-1}Q$ of degree strictly less
  than $j$.  This is equivalent to the existence of two non null polynomials
  $U=\sum_{i\leq q-j-1}\alpha_iX^i$ and $V=\sum_{i\leq
    p-j-1}\beta_iX^i$ such that $deg(UP+VQ)<j$.

\noindent
We claim that $sRes_0(P,Q)=\cdots=sRes_{j-1}(P,Q)=0$ if, and
only if, it is the case that $deg(gcd(P,Q))\geq j$, which will yield the desired
conclusion. Assume that $deg(gcd(P,Q))\geq j$ which is equivalent
to $deg(lcm(P,Q))\leq p+q-j$ which is equivalent to the
existence of polynomials $U,V$ with $deg(U)\leq q-j$,
$deg(V)\leq p-j$ and $UP=-VQ$. Our previous observation implies that
$sRes_0(P,Q)=\cdots=sRes_{j-1}(P,Q)=0$.

\noindent
The reverse implication is established by induction on $j$.  When
$sRes_0(P,Q)=0$, the existence of $U$ and $V$ such that $UP+VQ=0$ with
$deg(U)<q$ and $deg(V)<p$
implies $deg(gcd(P,Q))\geq 1$.  When
$sRes_0(P,Q)=\cdots=sRes_{j}(P,Q)=0$, the inductive hypothesis applied
to $j-1$ implies $deg(gcd(P,Q))\geq j$. From $sRes_{j}(P,Q)=0$, we
again obtain $U,V$ such that with $deg(U)< q-j$, $deg(V)< p-j$ and
$deg(UP+VQ)<j$. Since $gcd(P,Q)$ divides $UP+VQ$ this implies that
$UP+VQ=0$ and so $deg(lcm(P,Q))< p+q-j$ and finally $deg(gcd(P,Q))\geq
j+1$.

\qed 
\end{proof}

Due to the importance of the subresultant notion, we want a way to
compute them efficiently. To this aim, we introduce the
``polynomial'' matrices  and determinants. 
Let us introduce additional notations.

\begin{definition}
Let $P_1,\ldots, P_m$ be polynomials in $\An[X]$of degrees less than $n$
with $m\leq n$ and $P_i=\sum_{j<n} p_{i,j}X^j$. 
Then $pmat_n(P_1,\ldots, P_m)$ is the $m\times m$ matrix whose
items are defined by:
\begin{itemize}
	\item For all $i\leq m$, $j<m$, $pmat_n(P_1,\ldots, P_m)[i,j]=p_{i,n-j}$.
	\item For all $i\leq m$, $pmat_n(P_1,\ldots, P_m)[i,m]=P_i$.
\end{itemize}
Additionally, let $pdet_n(P_1,\ldots, P_m)=det(pmat_n(P_1,\ldots, P_m))$.
\end{definition}
Otherwise stated, the $i$th row of matrix $pmat_{n}(P_1,\ldots, P_m)$
consists of coefficients of $P_i$ in descending order down to $n-m+1$
ended by polynomial $P_i$ itself.

\smallskip 
\begin{definition}
Consider $P,Q$ polynomials with respective degrees $p>q$.
We define, for $0\leq j\leq p$, 
\begin{itemize}
 \item for $0\leq j\leq q$,
 $sResP_j(P,Q)pdet_{p+q-j}(X^{q-j-1}P,\ldots, P,
Q,\ldots,X^{p-j-1}Q)$,\\
that is  $\det(SyHaP_j(P,Q))$, where \\
 $SyHaP_j(P,Q) = pmat_{p+q-j}(X^{q-j-1}P,\ldots, P,Q,\ldots,X^{p-j-1}Q)$.
 
\item for $q<j<p-1$, $sResP_j(P,Q)=0$;
\item for $j=p-1$, $sResP_j(P,Q)=Q$ (which is consistent with the
  original definition in case $q=p-1$);
	\item for $j=p$, $sResP_j(P,Q)=P$.
\end{itemize}
\end{definition}

From the above definition, one can straightforwardly see:
\begin{proposition}
$sResP_j(P,Q)$ is a polynomial of degree at most $j$
and the coefficient of degree $j$ of this polynomial is  $sRes_j(P,Q)$.
\end{proposition}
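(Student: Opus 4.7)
The plan is to expand the determinant $sResP_j(P,Q) = \det(SyHaP_j(P,Q))$ along its last column (the one that contains the polynomials themselves, rather than numerical coefficients) and track the contribution to each power of $X$.

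First I would fix notation by letting $m = p+q-2j$ denote the size of the square matrix and writing the rows of $SyHaP_j(P,Q)$ as $R_1,\dots,R_m$, where $R_i \in \{X^{q-j-1}P,\dots,P,Q,\dots,X^{p-j-1}Q\}$. Each $R_i$ has degree at most $p+q-j-1$, so I can write $R_i = \sum_{k=0}^{p+q-j-1} r_{i,k} X^k$. Crucially, by the definition of $pmat_{p+q-j}$, the entry in row $i$ and column $\ell$ (for $1 \leq \ell \leq m-1$) is exactly $r_{i,\,p+q-j-\ell}$, that is, the coefficient of $X^{p+q-j-\ell}$ in $R_i$. As $\ell$ runs from $1$ to $m-1 = p+q-2j-1$, these exponents run from $p+q-j-1$ down to $j+1$.

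Next I would expand $sResP_j(P,Q)$ along the last column and use linearity: substituting $R_i = \sum_k r_{i,k} X^k$ gives
\[
sResP_j(P,Q) \;=\; \sum_{k=0}^{p+q-j-1} D_k\, X^k,
\]
where $D_k$ is the determinant of the $m\times m$ matrix obtained from $SyHaP_j(P,Q)$ by replacing the last column with the numerical column $(r_{1,k},\dots,r_{m,k})^T$. The key observation is then that for every $k$ with $j+1 \leq k \leq p+q-j-1$, the vector of $X^k$-coefficients $(r_{i,k})_i$ already appears as column $\ell = p+q-j-k \in \{1,\dots,m-1\}$ of the matrix. Hence that $D_k$ has two identical columns and vanishes, which forces $\deg sResP_j(P,Q) \leq j$.

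Finally, for $k = j$ the replaced column contains the coefficients of $X^j$ in each $R_i$; together with the first $m-1$ columns (coefficients of $X^{p+q-j-1},\dots,X^{j+1}$), this reconstructs precisely the matrix $SyHa_{j,j}(P,Q)$ formed from the first $m = p+q-2j$ columns of $SyHa_j(P,Q)$. Therefore $D_j = \det(SyHa_{j,j}(P,Q)) = sRes_j(P,Q)$, giving the stated identification of the leading coefficient. The only step requiring any care is the bookkeeping matching column indices in $pmat_{p+q-j}$ with powers of $X$; once that dictionary is pinned down the argument reduces to the elementary fact that a determinant with a repeated column is zero.
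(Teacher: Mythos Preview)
Your proof is correct and is essentially the argument the paper has in mind: the paper states the proposition as following ``straightforwardly'' from the definition without giving a proof, but in the proof of the subsequent proposition it spells out exactly your idea (``Developing the last column w.r.t.\ the degrees of $X$ and observing that for degrees $>j$ the corresponding vector of reals already occurs in a former column''). One small remark: your argument treats the generic case $0\le j\le q$ where $sResP_j$ is defined via $pdet$, whereas the definition also has the boundary cases $j=p$, $j=p-1$, and $q<j<p-1$; these are immediate by inspection (e.g.\ $sResP_p=P$ with leading coefficient $a_p=sRes_p$), so nothing substantive is missing.
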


\noindent\textbf{Additional assumption.}
We assume here that the integral division is effective in $\An$:
given $a,b \in \An$, there is an algorithm that answers whether
there exists $c \in \An$ with $a=bc$ and returns $c$ in the positive case.
This is the case in particular in any ring
over $\Z[X_1,\ldots,X_k]=\Z[X_1]\cdots[X_{k-1}][X_k]$ 
or $\Q[X_1,\ldots,X_k]$ where the algorithm consists in trying to perform 
a (recursive) Euclidean division, stopping and answering negatively
when a coefficient of the quotient is not in the corresponding ring 
or there is a non null remainder. We denote the integral division
by the usual fraction symbol since we will only use it when the result
is defined.

\begin{algorithm2e}
\DontPrintSemicolon

\SetKwFunction{Subresultants}{Subresultants}
\SetKwFunction{Degree}{Degree}
\SetKwFunction{NullPol}{NullPol}

{\Subresultants}$(\An,P,p,Q,q)$: a vector\;
\KwIn{$P,Q$, non null polynomials in $\An[X]$ with respective degrees $p>q$}
\KwOut{the vector of subresultants $(sRes_i(P,Q))_{0\leq i\leq p}$}
\KwData{$SresP$ a vector over $\An[X]$ indexed by $[0,p]$} 
\KwData{$s,t$ vectors over $\An$ indexed by $[0,p]$; 
$i,j,k,\ell$, some indices}
$SresP[p] \leftarrow P$;
$s[p]\leftarrow 1$; $t[p]\leftarrow 1$;
$SresP[p-1] \leftarrow Q$; $t[p-1]\leftarrow Q[q]$\;
\leIf {$q=p-1$}{$s[p-1]\leftarrow t[p-1]$}{$s[p-1]\leftarrow 0$}
\lFor{$\ell {\bf ~from~} q+1 {\bf ~to~} p-2$}
{$s[\ell]\leftarrow 0$}
$SresP[q] \leftarrow \eps_{p-q}t[p-1]^{p-q-1}Q$;
$t[q]\leftarrow SresP[q][q]$; $s[q]\leftarrow t[q]$\;
$i\leftarrow p+1$; $j\leftarrow p$\;
\While{$\Degree(\An,SresP[j-1])\neq -\infty$}
{
$k\leftarrow \Degree(\An,SresP[j-1])$;
$t[j-1] \leftarrow SresP[j-1][k]$\;
\lIf{$k=j-1$}
{ 
$s[j-1]\leftarrow t[j-1]$
}
\Else
{
\lFor{$\ell {\bf ~from~} k+1 {\bf ~to~} j-1$} {$s[\ell]\leftarrow 0$}
$t[k]\leftarrow 
\varepsilon_{j-k}(\frac{t[j-1]}{s[j]})^{j-k-1}t[j-1]$; 
$s[k]\leftarrow t[k]$\;
$SresP[k] \leftarrow \frac{s[k]sResP[j-1])}{t[j-1]}$\;
}
$SresP[k-1] \leftarrow \frac{-Rem(t[j-1]s[k] sResP[i-1],sResP[j-1])}{s[j]t[i-1]}$;
$i\leftarrow j$; $j\leftarrow k$;
}
\lFor{$\ell {\bf ~from~} 0 {\bf ~to~} j-2$}{$s[\ell]\leftarrow 0$}
\Return $s$
\caption{Computing the subresultants for $P,Q$.}
\label{algo-subresultant}
\end{algorithm2e}

Our goal is to compute $sRes_j(P,Q)$ by decreasing
values of $j$ and only relying on Euclidean divisions that remain in $\An[X]$.
For sake of clarity, we denote $s_j=sRes_j(P,Q)$ and $t_j$ the leading
coefficient of $sResP_j(P,Q)$ \emph{except for $s_p=t_p=1$}. 
When $sResP_j(P,Q)$ has degree $j$, we have $s_j=t_j$.
Developing the last column  w.r.t. the degrees of $X$
and observing that for degrees $>j$ the corresponding vector of 
reals already occurs in a former column, we can safely substitute to the
polynomials their truncation up to degree $j$. 
Then it is immediate that $sResP_j(P,Q)=0$ iff
there exist polynomials $U,V$ with $deg(U)<q-j$,
$deg(V)< p-j$ and $UP+VQ=0$. As a consequence,
for all $j'\leq j$, $sResP_{j'}(P,Q)=0$.

The next proposition is the basis of Algorithm~\ref{algo-subresultant}
for the efficient computation of subresultants. As can be deduced from
this proposition, the computation consists in taking successive
remainders of Euclidean divisions (up to some constant) in order to
get $sResP_{i_j-1}(P,Q)$ and then some scalar multiplications and
divisions to get $sResP_{i_{j+1}}(P,Q)$. Function {\tt Degree} returns
the degree of a polynomial in $\N \cup \{-\infty\}$ by looking at the
first non null coefficient (using {\tt Null} function).

\begin{proposition}
Let $P,Q$ be non null polynomials of $\An[X]$ with $p=deg(P)>deg(Q)=q$.
There exists a sequence of strictly decreasing indices $i_1,i_2,\ldots,i_J$
with $i_1=p+1$, $i_2=p$, $i_3=q$ that fulfills the following properties:
\begin{itemize}
	\item for all $1<j\leq J$, $sResP_{i_j}(P,Q)$ has degree $i_j$ 
	(and so $s_{i_j}=t_{i_j}$),
	for all $j< J$, $sResP_{i_j-1}(P,Q)$ has degree $i_{j+1}$ and if $i_J>0$
	then for all $k<i_J$, \\
	$sResP_{k}(P,Q)=0$ and $sResP_{i_{J-1}-1}(P,Q)=gcd(P,Q)$;
	\item for all $j<J$, when $i_j-1>i_{j+1}$, for all $i_{j+1}<k<i_j-1$,
	$sResP_{k}(P,Q)=0$ and $t_{i_j-1}sResP_{i_{j+1}}(P,Q)=s_{i_{j+1}}sResP_{i_j-1}(P,Q)$
	with\\ $s_{i_{j+1}}=\varepsilon_{i_j-i_{j+1}}\frac{(t_{i_{j}-1})^{i_j-i_{j+1}}}{(s_{i_j})^{i_j-i_{j+1}-1}}$;
	\item for all $1<j<J$,
	$s_{i_j}t_{i_{j-1}-1}sResP_{i_{j+1}-1}(P,Q)=$\\
	\hspace{4cm} $ -Rem(s_{i_{j+1}}t_{i_{j}-1}sResP_{i_{j-1}-1}(P,Q),sResP_{i_{j}-1}(P,Q))$.
\end{itemize}
\end{proposition}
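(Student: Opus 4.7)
The plan is to establish the three claims by a careful linear-algebraic analysis of the Sylvester--Habicht matrices $SyHaP_j(P,Q)$ and the module of relations $UP+VQ$ with bounded degrees.

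\textbf{Step 1: definition of the sequence $(i_j)$.} I proceed inductively. The initial three indices $i_1=p+1$, $i_2=p$, $i_3=q$ are imposed. Given $i_j$ with $j\ge 3$ such that $sResP_{i_j-1}(P,Q)\neq 0$, I set $i_{j+1}$ to be the actual degree of $sResP_{i_j-1}(P,Q)$; the sequence terminates at the first index $J$ where either $i_J=0$ or $sResP_{i_J-1}(P,Q)=0$. Strict decrease $i_{j+1}<i_j$ is automatic since $sResP_{i_j-1}$ has degree at most $i_j-1$, and for $j=1,2$ the strict inequalities come from the hypothesis $q<p$.

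\textbf{Step 2: degrees and gap structure (claim~1).} The equality $deg(sResP_{i_j})=i_j$ holds for $j=2$ since $sResP_p(P,Q)=P$; for $j=3$ from the explicit form $sResP_q(P,Q)=\varepsilon_{p-q}b_q^{p-q-1}Q$ of Remark~\ref{rq:sresq}, combined with $b_q\neq 0$; and for $j\ge 4$ by the very definition of $i_j$ above. The vanishing of the intermediate subresultants in the gap $i_{j+1}<k<i_j-1$ uses the observation made right after the definition of $sResP_j$: $sResP_k(P,Q)=0$ iff there exist non-zero $U,V\in\An[X]$ with $deg(U)<q-k$, $deg(V)<p-k$ and $UP+VQ=0$; such a relation for index $k$ yields one for every $k'\le k$ by multiplying $U$ and $V$ by suitable monomials, so vanishing propagates downwards. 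Finally, past termination Proposition~\ref{proposition:syha-gcd} identifies $i_{J-1}-1$ as $deg(gcd(P,Q))$, and $sResP_{i_{J-1}-1}(P,Q)$ is a non-zero polynomial of that degree belonging to the ideal $(P,Q)$, hence equal to $gcd(P,Q)$ after the implicit normalisation.

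\textbf{Step 3: the defect formula (claim~2).} When $i_j-1>i_{j+1}$, both $sResP_{i_j-1}(P,Q)$ and $sResP_{i_{j+1}}(P,Q)$ arise as $pdet$ of essentially the same rectangular array of rows $X^aP,\ldots,P,Q,\ldots,X^bQ$, with a different choice of retained columns and a reversal of $i_j-i_{j+1}$ rows. A Cramer-style expansion along the last column, followed by the matching of leading coefficients, shows that they are proportional and yields the explicit formula $s_{i_{j+1}}=\varepsilon_{i_j-i_{j+1}}(t_{i_j-1})^{i_j-i_{j+1}}/(s_{i_j})^{i_j-i_{j+1}-1}$ together with $t_{i_j-1}\,sResP_{i_{j+1}}=s_{i_{j+1}}\,sResP_{i_j-1}$. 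The sign $\varepsilon_{i_j-i_{j+1}}$ is exactly the signature of the row reversal.

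\textbf{Step 4: the recurrence (claim~3), the main obstacle.} The heart of the argument is to express $-s_{i_j}t_{i_{j-1}-1}\,sResP_{i_{j+1}-1}(P,Q)$ as $Rem(s_{i_{j+1}}t_{i_j-1}\,sResP_{i_{j-1}-1}(P,Q),\,sResP_{i_j-1}(P,Q))$. My strategy is to realise this remainder geometrically as a sequence of elementary row operations on the Sylvester--Habicht-like matrix defining $sResP_{i_{j+1}-1}$, reducing the leading terms of the rows coming from shifts of $sResP_{i_{j-1}-1}$ against shifts of $sResP_{i_j-1}$. Tracking the determinantal identities through these operations (via Sylvester's classical identity, or an equivalent Bareiss-style relation) produces a common sub-determinant with multiplicative factors that telescope exactly to $s_{i_j}t_{i_{j-1}-1}$, and the overall sign comes from the change of row orientation. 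This step is essentially the only non-routine part of the proof; once the identity is in place, the three claims combine directly to justify the recurrence exploited by Algorithm~\ref{algo-subresultant}.
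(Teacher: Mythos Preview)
Your overall strategy---a direct determinantal analysis via Sylvester/Bareiss-style identities---is a legitimate alternative to the paper's approach, but as written the argument has real gaps.

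\textbf{Step 2 is logically flawed in two places.} First, the claim that $\deg(sResP_{i_j})=i_j$ for $j\ge 4$ does \emph{not} follow ``by the very definition of $i_j$.'' Your definition sets $i_j=\deg sResP_{i_{j-1}-1}$; what you need is $\deg sResP_{i_j}=i_j$, i.e.\ that the subresultant polynomial at index $i_j$ has the same degree as the one at index $i_{j-1}-1$. When $i_j<i_{j-1}-1$ these are genuinely different objects, and their having equal degree is precisely the content of the proportionality in claim~2---so your argument is circular. Second, your vanishing argument for the gap $i_{j+1}<k<i_j-1$ uses downward propagation (``if $sResP_k=0$ then $sResP_{k'}=0$ for $k'\le k$''), but this proves nothing here: you want the intermediate $sResP_k$ to vanish while $sResP_{i_{j+1}}$ does \emph{not}, so downward propagation would actually contradict what you need. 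The gap vanishing must be established by a direct determinantal argument (triangular form with a zero diagonal entry), not by propagation.

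\textbf{Steps 3 and 4 are programmatic, not proofs.} Phrases like ``a Cramer-style expansion\ldots shows that they are proportional'' and ``tracking the determinantal identities\ldots produces factors that telescope exactly to $s_{i_j}t_{i_{j-1}-1}$'' describe an intention, not a computation. These identities are the entire content of the proposition; without the explicit row operations and sign bookkeeping you have not proved anything.

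\textbf{How the paper does it.} The paper proceeds by induction on the Euclidean remainder sequence: writing $R=Rem(P,Q)$, it first establishes the claims for indices between $p$ and $q-1$ by direct inspection of the Sylvester--Habicht matrices, then proves the key proportionality $sResP_j(P,Q)=\varepsilon_{p-q}\,b_q^{\,p-r}\,sResP_j(Q,-R)$ for all $j<q-1$ (Equation~\eqref{eq:sResPj}). This single relation transfers \emph{all} three claims from the pair $(Q,-R)$ to $(P,Q)$, with the boundary equations~\eqref{eq:sres1} and~\eqref{eq:sres2} verified explicitly. The induction terminates when $R=0$. This is cleaner than a bare determinantal approach because the hard work is concentrated in one matrix identity rather than spread across three separate telescoping computations.
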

Substituting in the equation of the third item
$sResP_{i_{j-1}-1}(P,Q)$ by \\
$\frac{t_{i_{j-1}-1}}{s_{i_{j}}}sResP_{i_{j}}(P,Q)$
(justified by the equation of the second item)
and then multiplying by $\frac{t_{i_{j-1}-1}}{s_{i_{j}}}$
one also obtains:\\
$s_{i_j}^2sResP_{i_{j+1}-1}(P,Q)=
	-Rem(s_{i_{j+1}}t_{i_{j}-1}sResP_{i_{j}}(P,Q),sResP_{i_{j}-1}(P,Q))$. 

\begin{proof}
  Let $R=Rem(P,Q)$.  Let us look at $SyHaP_j(P,Q)$ for $j\leq q-1$.
  Write $C=\sum_{i\leq p-q}c_iX^i$ (the quotient of Euclidean division
  of $P$ by $Q$). We have $R=P-\sum_{i\leq p-q}c_iX^iQ$. Due to this
  equality, changing the rows $X^{q-j-1}P,\ldots,P$ by
  $X^{q-j-1}R,\ldots,R$ does not modify the determinant
  $sResP_j(P,Q)$. We define the determinant $D_j$ of the matrix
  obtained by reverting the order of the rows and replacing $R$ by
  $-R$.  The first operation amounts to multiplying by $\eps_{p+q-2j}$
  and the second one by $(-1)^{q-j}$.  Since
  $\eps_{p+q-2j}(-1)^{q-j}=\eps_{p-q}$, we have:\\
  $D_j=\eps_{p-q}sResP_j(P,Q)$.

  \smallskip We first prove the properties related to indexes between
  $p$ and $q-1$. Let us look at the second item.  For the first part
  by convention for all $q<j<p-1$, $sResP_j(P,Q)=0$. The second part
  of the second item corresponds to the case $j=2$ with $s_{p}=a_p$,
  $t_{p-1}=s_q=b_q$.  So the equation can be written as:
$$b_qsResP_q(P,Q)=s_qQ \mbox{ with } 
s_q=\varepsilon_{p-q}\frac{b_q^{p-q}}{1^{p-q-1}}$$
which is equivalent to: 
$$sResP_q(P,Q)=\varepsilon_{p-q}b_q^{p-q-1}Q.$$
Since $sResP_q(P,Q)=pdet_{p}(Q,\ldots,X^{p-q-1}Q)$, the result is immediate.
Let us look at the third item: $D_{q-1}=-b^{p-q+1}R$.
So \[sResP_{q-1}(P,Q)=-Rem(\eps_{p-q}b^{p-q+1}P,Q).\]
By convention, $s_p=t_p=1$,  $sResP_{p}(P,Q)=P$ and  
$sResP_{p-1}(P,Q)=Q$ implying $t_{p-1}=b_q$. Furthermore we have shown that
$s_q=\eps_{p-q}b^{p-q}$. Substituting in the previous equation establishes
the third item.

\smallskip\noindent
We prove the remaining properties by induction on $J$. 
Let $R=Rem(P,Q)=0$ which implies that $Q=gcd(P,Q)$
and $sResP_{q-1}(P,Q)=0$. 
So the base case ($J=3$) is established.

\smallskip
Let $R=Rem(P,Q)\neq 0$.
Let $r$ be the degree of $R$,
we claim that: 
\begin{equation}
\label{eq:sResPj}
  \forall j<q-1\ sResP_j(P,Q)=\varepsilon_{p-q}b_q^{p-r}sResP_j(Q,-R)
\end{equation}
When $j\leq r=deg(R)$, $D_j$ can be obtained starting from
$SyHaP_j(Q,-R)$ by adding the rows $X^{p-j-1}Q,\ldots,X^{r-j}Q$ and
taking the determinant.  Thus $D_j=$ \\
$b_q^{p-r}sResP_j(Q,-R)$ and so
$sResP_j(P,Q)=b_q^{p-r}\eps_{p-q}sResP_j(Q,-R)$.  When $r<j<q-1$ by
definition $sResP_j(Q,-R)=0$ but $sResP_j(P,Q)=D_j=0$ since the
polynomial matrix
$pmat_{p+q-j}(X^{p-j-1}Q,\ldots,Q,X^{q-j-1}R,\ldots,R)$ is upper
triangular up to its $p-j+1^{th}$ column and since the degree
$X^{q-j-1}R$ is less than $q-1$, the diagonal term of this column is
null.

\smallskip
Due to this proportionality between
$sResP_j(P,Q)$ and $sResP_j(Q,-R)$ with factor $\eps_{p-q}b_q^{p-r}$
and the inductive hypothesis,
it only remains to prove that the two following equalities hold:
\begin{equation}
\label{eq:sres1}
s_{q}t_{p-1}sResP_{r-1}(P,Q)=
	-Rem(s_{r}t_{q-1}sResP_{p-1}(P,Q),sResP_{q-1}(P,Q))
\end{equation}
and
\begin{equation}
\label{eq:sres2}
s_{r}=\varepsilon_{q-r}\frac{(t_{q-1})^{q-r}}{(s_{q})^{q-r-1}}
\end{equation}

For Equation~(\ref{eq:sres1}), using the inductive hypothesis for the
pair $(Q,-R)$, the following equation holds:
$$s'_{q}{}^2 sResP_{r-1}(Q,-R)=
	-Rem(s'_{r}t'_{q-1}sResP_{q}(Q,-R),sResP_{q-1}(Q,-R)) $$
where the primed version of $s_i$ and $t_i$ are related to the pair
$(Q,-R)$.
By convention, $s'_{q}=1$. So:
\[s_{q}t_{p-1}sResP_{r-1}(P,Q)=s_{q}t_{p-1}\eps_{p-q}b_q^{p-r}sResP_{r-1}(Q,-R)\]
\[=(\eps_{p-q}b_q^{p-q})(b_q)\eps_{p-q}b_q^{p-r}sResP_{r-1}(Q,-R)\]
\[=-Rem((\eps_{p-q}b_q^{p-r}s'_{r})(\eps_{p-q}b_q^{p-q+1}t'_{q-1})
sResP_{q}(Q,-R),sResP_{q-1}(Q,-R)).\]
Observe that the factor of proportionality established above
implies that\\
$s_r=\eps_{p-q}b_q^{p-q+1}s'_{r}$. 
\\Since $sResP_{q-1}(P,Q)=-\eps_{p-q}b_q^{p-q+1}R$ 
and $sResP_{q-1}(Q,-R)=-R$, one obtains 
$t_{q-1}=\eps_{p-q}b_q^{p-q+1}$.
So:
$$s_{q}t_{p-1}sResP_{r-1}(P,Q)=-Rem(s_rt_{q-1}sResP_{q}(Q,-R),sResP_{q-1}(Q,-R))$$
$$=-Rem(s_rt_{q-1}Q,-R)=-Rem(s_rt_{q-1}Q,-\eps_{p-q}b_q^{p-q+1}R)$$
$$=-Rem(s_{r}t_{q-1}sResP_{p-1}(P,Q),sResP_{q-1}(P,Q))$$

For Equation~(\ref{eq:sres2}), let us look at the following matrices. 
$$
\begin{pmatrix}
b_q& b_{q-1}& \ldots& \ldots &X^{p-q-1}Q\\
0    &  b_q    & \ldots & \ldots &X^{p-q-2}Q\\
\ldots &\ldots& \ldots & \ldots & \ldots\\
0& 0& \ldots& b_q& XQ\\
0& 0& \ldots& 0& Q
\end{pmatrix}
\quad
\begin{pmatrix}
b_q& b_{q-1}& \ldots& \ldots &X^{p-q}Q\\
0    &  b_q    & \ldots & \ldots &X^{p-q-1}Q\\
\ldots &\ldots& \ldots & \ldots & \ldots\\
0& 0& \ldots& b_q& Q\\
0& 0& \ldots& 0& -R
\end{pmatrix}
$$
The left
matrix that we define $D_q$ has been obtained by reverting the $p-q$ rows
of $SyHaP_q(P,Q)$. So its determinant is equal to $\eps_{p-q}sResP_q(P,Q)$.
The right matrix is  $D_{q-1}$. As we have already seen, its determinant is
equal to $\eps_{p-q}sResP_{q-1}(P,Q)$. Denoting $-R=\sum_{i\leq r} \alpha_i X^i$,
it is now obvious that $b_q \alpha_r=\frac{t_{q-1}}{s_q}$.
As a consequence, we obtain that:
\begin{equation}
\label{eq:sresPqmoinsun}
sResP_{q-1}(P,Q)=-\eps_{p-q}b_q^{p-q+1}R
\end{equation}

Let us look at the following matrices. 

{\scriptsize
\[
\begin{pmatrix}
b_q& b_{q-1}& \ldots& \ldots & \ldots&\ldots&\ldots&\ldots&X^{p-r-1}Q\\
0    &  b_q    & \ldots & \ldots&\ldots&\ldots&\ldots &\ldots&X^{p-r-2}Q\\
\ldots &\ldots& \ldots & \ldots &\ldots&\ldots&\ldots&\ldots& \ldots\\
0& 0& \ldots& b_q& \ldots&\ldots&\ldots&\ldots&Q\\
0& 0& \ldots& 0& 0&0&\ldots&0&-R\\
0& 0& \ldots& 0 & &&\ldots&\alpha_r&-XR\\
\ldots &\ldots& \ldots & \ldots &\ldots &\ldots&\ldots&\ldots&\ldots\\
0    &  0    & \ldots & \ldots &0&\alpha_r&\ldots&\ldots&-X^{q-r-2}R\\
0& 0& \ldots& \ldots &\alpha_r&\alpha_{r-1}&\ldots&\ldots&-X^{q-r-1}R
\end{pmatrix}
\ 
\begin{pmatrix}
b_q& b_{q-1}& \ldots& \ldots & \ldots&\ldots&\ldots&\ldots&X^{p-r-1}Q\\
0    &  b_q    & \ldots & \ldots&\ldots&\ldots&\ldots &\ldots&X^{p-r-2}Q\\
\ldots &\ldots& \ldots & \ldots &\ldots&\ldots&\ldots&\ldots& \ldots\\
0& 0& \ldots& b_q& \ldots&\ldots&\ldots&\ldots&Q\\
0& 0& \ldots& \ldots &\alpha_r&\alpha_{r-1}&\ldots&\ldots&-X^{q-r-1}R\\
0    &  0    & \ldots & \ldots &0&\alpha_r&\ldots&\ldots&-X^{q-r-2}R\\
\ldots &\ldots& \ldots & \ldots &\ldots &\ldots&\ldots&\ldots&\ldots\\
0& 0& \ldots& 0 & &&\ldots&\alpha_r&-XR\\
0& 0& \ldots& 0& 0&0&\ldots&0&-R
\end{pmatrix}
\]
}

The left matrix is $D_{r}$ and the right matrix
has been obtained by reverting its last $q-r$ columns. So the determinant of
the latter matrix is proportional to the determinant of the former with factor
$\eps_{q-r}$. On the other hand, the determinant of the right matrix
is equal to the determinant of $D_{j-1}$ multiplied by $(b_q \alpha_r)^{q-r-1}$.
Combining the different equalities, we obtain that:
$sRes_r(P,Q)=\eps_{q-r}(\frac{t_{q-1}}{s_q})^{q-r-1}sRes_{q-1}(P,Q)$
and consequently $s_r=t_r= \eps_{q-r}\frac{t_{q-1}^{q-r}}{s_q^{q-r-1}}$.

This concludes the proof.
\qed
\end{proof}

\paragraph*{Computing sign realizations at roots of a polynomial}~\\

Now we consider the special case of $\An = \D$, $ \D$ being a
sign-effective subring of $\R$.  The main ingredient for analyzing
real roots of a univariate polynomial is the Cauchy index. We denote
by $Zer(P)=\{z \in \R \mid P(z)=0\}$, $mult(P,z)=\max\{n \mid (X-z)^n
|P\}$ and $Pole(Q/P)=\{z \in \R \mid mult(Q,z)<mult(P,z)\}$.  For $z$
in $Pole(Q/P)$, remark that $Q/P(w)$ goes to $+\infty$ or $-\infty$ as
$w$ tends to $z$ on the right (respectively on the left), therefore
the sign of $Q/P$ keeps constant sufficiently close on the right
(respectively on the left) of $z$.

\begin{definition}
  Let $P,Q \in \D[X]$. Then the Cauchy index of $Q/P$ is defined by:\\
  \centerline{$Ind(Q/P)= \frac12\sum_{z \in Pole(Q/P)}\sign((Q/P)(z^+))-
    \sign((Q/P)(z^-))$} where $\sign((Q/P)(z^+))$ and
  $\sign((Q/P)(z^-))$) denote respectively the sign of the rational
  function $Q/P$ at the right and at the left of $z$.
\end{definition}

\noindent
For $z \in Pole(Q/P)$, the value $\sign((Q/P)(z^+))- \sign((Q/P)(z^-))$
in $\{-2,0,2\}$ depends on the parity of the difference $\mu_P-\mu_Q$ of
respective multiplicities of $z$ as root of $P$ and $Q$, when $\mu_P
\geq \mu_Q$ (and $\mu_Q=0$ if $z$ is not a root of $Q$).

\begin{example}\label{ex:cauchyInd}
Recall polynomials $P=\alpha X^2-1$ and $Q=X+\beta$ of example~\ref{ex:sRes}.
Let us compute the Cauchy index of $Q/P$ for several values of $\alpha$ and $\beta$.
\begin{itemize}
\item Let $P_1,Q_1$ be the above polynomials with $\alpha=\sqrt{5}$ and $\beta=\frac{\sqrt{5}-7}2$.
These values were obtained by setting $X_1$ to $\frac{1+\sqrt{5}}2$.
The poles of $Q_1/P_1$ are $z_1= -\frac1{\sqrt[4]{5}}$ and $z_2= \frac1{\sqrt[4]{5}}$.
One can see that $X+\beta$ remains negative between those poles.
Hence
\begin{eqnarray*}
Ind(Q_1/P_1) &=& \frac12 (\sign(Q_1/P_1)(z_1^+) -\sign(Q_1/P_1)(z_1^-) \\&&\qquad+ \sign(Q_1/P_1)(z_2^+) -\sign(Q_1/P_1)(z_2^-)) \\
         &=& \frac12 (1- (-1) + (-1) - 1) = 0.
\end{eqnarray*}
\item Let $P_2,Q_2$ be the above polynomials with $\alpha=2\sqrt{5}-1$ and $\beta=0$, which can be obtained by setting $X_1$ to $\sqrt{5}$.
The poles of $Q_2/P_2$ are $z_1 = -\frac1{\sqrt{2\sqrt5-1}}$ and $z_2 = \frac1{\sqrt{2\sqrt5-1}}$.
Now since $Q_2$ has a root between $z_1$ and $z_2$, hence
\begin{eqnarray*}
Ind(Q_2/P_2) &=& \frac12 (\sign(Q_2/P_2)(z_1^+) -\sign(Q_2/P_2)(z_1^-) \\&&\qquad+ \sign(Q_2/P_2)(z_2^+) -\sign(Q_2/P_2)(z_2^-)) \\
         &=& \frac12 (1- (-1) + 1 - (-1)) = 2.
\end{eqnarray*}
\end{itemize}

\end{example}

The Cauchy index can be computed in several ways.  First we observe
that we can assume $q=deg(Q)<deg(P)=p$. Otherwise, let $a_p$ be the
leading coefficient of $P$ and compute the Euclidean division of
$a_p^{2\lceil \frac{q-p+1}{2}\rceil}Q$ by $P$: $a_p^{2\lceil
  \frac{q-p+1}{2}\rceil}Q=PC+R$ with $deg(R) < deg(P)$.  Then
$Ind(Q/P)=Ind(R/P)$. The multiplication by an even power of $a_p$
preserves the signs. Furthermore $R$ is obtained by multiplications,
additions and zero-tests so that it can be performed 
in a general domain $\D$ as indicated in Algorithm~\ref{algo-intrem}.

\begin{algorithm2e}
\DontPrintSemicolon

\SetKwFunction{IntRem}{IntRem}
\SetKwFunction{Degree}{Degree}

{\IntRem}$(\D,Q,q,P,p)$: a polynomial with its degree\;
\KwIn{$P\neq 0,Q$, polynomials in $\D[X]$ with respective degrees $p,q$}
\KwOut{a polynomial positively proportional to $Rem(Q,P)$}
\KwData{$i,j$, some indices}
\BlankLine
\lIf{$q< p$}{\Return $Q,q$}
\For{$i {\bf ~from~} q-p {\bf ~downto~} 0$} 
{
   \lFor{$j {\bf ~from~} 0 {\bf ~to~} p-1$} 
   {$Q[i+j]\leftarrow P[p]Q[i+j]-P[j]Q[i+p]$}
   \lFor{$j {\bf ~from~} 0 {\bf ~to~} i-1$} 
   {$Q[j]\leftarrow P[p]Q[j]$}
}
\lFor{$i {\bf ~from~} p {\bf ~to~} q$} 
   {$Q[i]\leftarrow 0$}
\If{$q-p \mod 2=0$} 
{
   \lFor{$j {\bf ~from~} 0 {\bf ~to~} p-1$} 
   {$Q[j]\leftarrow P[p]Q[j]$}
}
\Return $Q,\Degree(\D,Q)$\;
\caption{Computing a polynomial positively proportional to $Rem(Q,P)$}
\label{algo-intrem}
\end{algorithm2e}

Here we use again the subresultants. 
Let $s=(s_p,\ldots,s_0)$ be a list of reals such that $s_p\neq 0$. Define
$s'$ as the shortest list such that 
$s=(s_p,0,\ldots,0)\cdot s'$. Then we inductively define:
$$
 PmV(s)=
 \left \{
   \begin{array}{l l}
      0  & \mbox{ if } s'=\emptyset\\
      PmV(s')+  \varepsilon_{p-q}sign(s_ps_q) & 
      \mbox{ if } s'=(s_q,\ldots,s_0) \mbox{ and } p-q \mbox{ is odd}\\
      PmV(s') & \mbox{ otherwise}
   \end{array}
   \right .
$$
Here acronym $PmV$ means \emph{ (generalized) permanence minus variations} and as
can be observed from the definition is related to the sign variations
of the sequence $s$. An immediate property of the $PmV$ is
the following one. Let $x_p,\ldots,x_0$ be such that 
$sign(x_p)=\cdots=sign(x_0)\neq 0$, then
$PmV(x_ps_p,\ldots,x_0s_0)=PmV(s_p,\ldots,s_0)$.

Our approach consists in computing the $PmV$ applied on subresultants.

\noindent {\bf Notations.} If $p=deg(P)> q=deg(Q)\geq 0$,  we denote by  $sRes$ the tuple $(sRes_p,\ldots,sRes_0)$.

\begin{example}
For the polynomials of example~\ref{ex:cauchyInd}, we have $sRes(P,Q)=$ \\
$(\alpha,1,-\alpha\beta^2+1)$.
\begin{itemize}
\item In the first case, $sRes(P_1,Q_1) = (\sqrt{5},1,\frac{37-27\sqrt{5}}2)$.
Then
\begin{eqnarray*}
\hspace{-1.5em}PmV(sRes(P_1,Q_1)) &=& PmV\left(1,\frac{37-27\sqrt{5}}2\right) + \sign(\sqrt{5}) \\
                   &=& PmV\left(\frac{37-27\sqrt{5}}2\right) + \sign\left(\frac{37-27\sqrt{5}}2\right) + \sign(\sqrt{5})\\
                   &=& 0  + \sign\left(\frac{37-27\sqrt{5}}2\right) + \sign(\sqrt{5}) = 0 + (-1) + 1 = 0.
\end{eqnarray*}
\item In the second case, $sRes(P_2,Q_2) = (2\sqrt{5}-1,1,0)$.
Then
\begin{eqnarray*}
PmV(sRes(P_2,Q_2)) &=& PmV(1,1) + \sign(2\sqrt{5}-1) \\
                   &=& PmV(1) + \sign(1) + \sign(2\sqrt{5}-1) \\
                   &=& 0 + 1 + 1 = 2.
\end{eqnarray*}
\end{itemize}
\end{example}

\begin{theorem}
Let $P,Q \in \D[X]$ with $p=deg(P)>q= deg(Q)$. Then:\\
\centerline{$PmV(sRes(P,Q))=Ind(Q/P)$}
\end{theorem}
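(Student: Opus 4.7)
The plan is to prove the identity by induction on $\deg(Q)$, using the Euclidean step to reduce $(P,Q)$ to $(Q,-R)$ with $R=Rem(P,Q)$, and combining two ingredients: the classical additivity of the Cauchy index under the Euclidean step, and the proportionality relations between subresultants of $(P,Q)$ and those of $(Q,-R)$ already established in the proof of the preceding proposition.

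\textbf{Base cases and inductive step.} When $Q$ is constant or when $R=0$ (so $Q$ divides $P$), both sides vanish by direct inspection. Otherwise, set $r=\deg(R)$ and proceed by induction on $\deg(Q)$. A pole-by-pole analysis yields the Sturm-style identity
\[
Ind(Q/P)=Ind(-R/Q)+\Delta(P,Q),
\]
where $\Delta(P,Q)$ is an explicit correction depending only on $p-q$ and on $sign(a_p),sign(b_q)$. Indeed, writing $P=CQ-R$, every pole of $Q/P$ that is not a common root of $P$ and $Q$ produces the same left/right sign jump as $-R/Q$, and the remaining contributions coming from roots of $Q$ accumulate into $\Delta(P,Q)$.

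\textbf{Translating to $PmV$.} Applying the inductive hypothesis gives $PmV(sRes(Q,-R))=Ind(-R/Q)$. Equation~(\ref{eq:sResPj}) together with the explicit values $sRes_q(P,Q)=\varepsilon_{p-q}b_q^{p-q}$ and $sRes_{q-1}(P,Q)=-\varepsilon_{p-q}b_q^{p-q+1}\alpha_r$ derived in the proof of the preceding proposition let me split $sRes(P,Q)$ into (i) an initial segment built from $a_p$, the zero gap between indices $p-1$ and $q+1$, and the two entries $sRes_q,sRes_{q-1}$, and (ii) a tail which coincides, up to the common factor $\varepsilon_{p-q}b_q^{p-r}$, with $sRes(Q,-R)$. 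An even-power trick absorbs $b_q^{p-r}$ into a positive common factor (to which $PmV$ is insensitive), and the remaining $\varepsilon_{p-q}$ signs distribute uniformly. Hence the tail contributes $PmV(sRes(Q,-R))$, while unfolding the $PmV$ recursion on the initial segment produces a combinatorial correction $\Delta'(P,Q)$ depending only on $p-q$, $sign(a_p)$ and $sign(b_q)$.

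\textbf{Main obstacle.} What remains is to verify that $\Delta'(P,Q)=\Delta(P,Q)$ through a case analysis on the parity of $p-q$ and the signs of the leading coefficients. The delicate case, which I expect to be the technical heart of the proof, is the \emph{defective} situation $r<q-1$, where the subresultant chain exhibits a genuine gap: several consecutive $sRes_j$ vanish, and the $PmV$ recursion processes them via its shortening rule, jumping to the next non-zero entry and applying $\varepsilon_{j-j'}sign(s_js_{j'})$. Showing that this combinatorial sign contribution matches exactly the algebraic factor $\varepsilon_{q-r}(t_{q-1})^{q-r}/(s_q)^{q-r-1}$ of Equation~(\ref{eq:sres2}) in all parities is precisely why the $\varepsilon$-convention in the definition of $PmV$ was chosen as it is. Once this matching is established, the induction closes and the theorem follows.
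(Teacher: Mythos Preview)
Your overall architecture is exactly the paper's: induct via the Euclidean step $(P,Q)\mapsto(Q,-R)$, show that both $Ind$ and $PmV$ obey the same recursion with the same correction term, and invoke Equations~(\ref{eq:sResPj}) and~(\ref{eq:sresPqmoinsun}) together with a parity case analysis for the $PmV$ side. That part is fine, including your identification of the defective gap $r<q-1$ as the place where the $\varepsilon$-convention earns its keep.

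Two concrete issues, however.

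\emph{Base case.} Your claim that ``when $Q$ is constant or $R=0$, both sides vanish'' is false. If $R=0$ then $Q/P=1/C$ with $C$ of degree $p-q$, and a direct check gives $Ind(Q/P)=sign(a_pb_q)$ when $p-q$ is odd (and $0$ when even). Correspondingly, $sRes_p=a_p$, $sRes_q=\varepsilon_{p-q}b_q^{p-q}$, all other $sRes_j=0$, so $PmV(sRes(P,Q))=\varepsilon_{p-q}sign(a_p\varepsilon_{p-q}b_q^{p-q})=sign(a_pb_q)$ in the odd case. The base case still works, but not for the reason you give.

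\emph{Cauchy index recursion.} Your ``pole-by-pole'' sketch is not right as written: the poles of $Q/P$ are (essentially) roots of $P$, while the poles of $-R/Q$ are roots of $Q$; these are disjoint after removing the gcd, so a pole of $Q/P$ does not ``produce the same jump as $-R/Q$.'' The paper's argument is different and cleaner: reduce to the coprime case $P=P_1G$, $Q=Q_1G$, observe that
\[
Ind(Q/P)+Ind(P/Q)=Ind(Q_1/P_1)+Ind(P_1/Q_1)=\tfrac12\bigl(sign(PQ(+\infty))-sign(PQ(-\infty))\bigr),
\]
and use $Ind(P/Q)=Ind(R/Q)=-Ind(-R/Q)$. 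This gives exactly $\Delta(P,Q)=sign(a_pb_q)$ when $p-q$ is odd and $0$ otherwise, which is what you then need to match on the $PmV$ side.
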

\begin{proof}
  Let $P=\sum_{i\leq p}a_iX^i$, $Q=\sum_{i\leq q}b_iX^i$ and let $R$
  be the remainder of the euclidean division of $P$ by $Q$: $P=QC+R$. We
  consider two cases, according to whether $R=0$ or not.

  \smallskip \noindent If $R=0$ then $Q/P=1/C$ with $a_p/b_q$ the
  leading coefficient of $C$ denoted by $c_{p-q}$, hence
  $sign(c_{p-q})=sign(a_pb_q)$. Observe first that the sign
  of $1/C$ is unchanged between two consecutive poles.
  So the Cauchy index of $1/C$ will be half the sign of $C$
  at $+\infty$ minus the sign of $C$ at $-\infty$.
  If $p-q$ is even then $C(x)$ will go to
  the same sign when $x$ goes either to $+\infty$ or $-\infty$
  entailing that $Ind(Q/P)=0$. Otherwise it will go to opposite signs
  with the sign at $+\infty$ being $sign(a_pb_q)$, thus entailing that
  $Ind(Q/P)=sign(a_pb_q)$.

  \smallskip \noindent On the other hand, $sRes_p(P,Q)=a_p$,
  $sRes_j(P,Q)=0$ for $q<j<p$ and\\
  $sRes_q(P,Q)= 
  \varepsilon_{p-q}b_q^{p-q}$ from
  Remark~\ref{rq:sresq}. By Proposition~\ref{proposition:syha-gcd},
  $sRes_j(P,Q)=0$ for $j<q$. When $p-q$ is even, $PmV(sRes(P,Q))=0$
  and when $p-q$ is odd,
  $PmV(sRes(P,Q))=\varepsilon_{p-q}sign(a_p\varepsilon_{p-q}b_q^{p-q})
  =sign(a_pb_q)$.

\smallskip \noindent
When $R\neq0$, we claim that (1) $Ind(Q/P)=Ind(-R/Q)+sign(a_pb_q)$
when $p-q$ is odd and $Ind(Q/P)=Ind(-R/Q)$ otherwise 
and (2) 
$PmV(sRes(P,Q))= $ \\
$PmV(sRes(Q,-R))+sign(a_pb_q)$
when $p-q$ is odd and $PmV(sRes(P,Q))= $ \\
$PmV(sRes(Q,-R))$ otherwise.
This will imply the  theorem by induction on the degree
of $P$.

\smallskip \noindent
Let $G$ be the gcd of $P$ and $Q$ and write $P=P_1G$, $Q=Q_1G$ and $R=R_1G$.
Obviously $Ind(Q/P)=Ind(Q_1/P_1)$ and $Ind(P/Q)=Ind(P_1/Q_1)$. In addition
the signs of $PQ(x)$ and $P_1Q_1(x)$ coincide on every point which is not
a root of $PQ$. Since the roots of $P_1$ and $Q_1$ are distinct: 
$$\frac{1}{2}(sign(PQ(+\infty))-sign(PQ(-\infty)))=
\frac{1}{2}(sign(P_1Q_1(+\infty))-sign(P_1Q_1(-\infty)))$$
$$=\frac{1}{2}\sum_{z\in Zer(P_1Q_1)} sign((P_1Q_1)(z^+))-sign((P_1Q_1)(z^-))$$
$$=\frac{1}{2}\sum_{z\in Zer(P_1)} sign((Q_1/P_1)(z^+))-sign((Q_1/P_1)(z^-))$$
$$+\frac{1}{2}\sum_{z\in Zer(Q_1)} sign((P_1/Q_1)(z^+))-sign((P_1/Q_1)(z^-))$$
$$= Ind(Q_1/P_1)+Ind(P_1/Q_1)= Ind(Q/P)+Ind(P/Q)= Ind(Q/P)+Ind(R/Q).$$
Since $\frac{1}{2}(sign(PQ(\infty))-sign(PQ(-\infty)))$ is null when
$p-q$ is even and equal to $sign(a_pb_q)$ otherwise we obtain the
first claim.

\smallskip \noindent 
We recall Equation~\ref{eq:sResPj} where $r$ is the degree of $R$:
$$\forall j<q-1\ sResP_j(P,Q)=\varepsilon_{p-q}b_q^{p-r}sResP_j(Q,-R)$$
and Equation~\ref{eq:sresPqmoinsun}:
$$sResP_{q-1}(P,Q)=-\eps_{p-q}b_q^{p-q+1}R.$$
\smallskip \noindent
{\bf Case 1: $q-1>r$.}\\
\begin{footnotesize}
\noindent
{\scriptsize$Pmv(sRes(P,Q))=$ }\\
{\scriptsize $PmV(a_p,0,\ldots,0,\eps_{p-q}b_q^{p-q},
0,\ldots,0,b_q^{p-r}\eps_{p-q}sRes_r(Q,-R),\ldots,
b_q^{p-r}\eps_{p-q}sRes_0(Q,-R))$}\\
{\bf Case 1.1: $q>r-1$ and $p-q$ is even.}\\
{\scriptsize $Pmv(sRes(P,Q))=$}\\
{\scriptsize $PmV(\eps_{p-q}b_q^{p-q},
0,\ldots,0,b_q^{p-r}\eps_{p-q}sRes_r(Q,-R),\ldots,
b_q^{p-r}\eps_{p-q}sRes_0(Q,-R))$}\\
{\scriptsize$=PmV(b_q^{p-q},
0,\ldots,0,b_q^{p-r}sRes_r(Q,-R),\ldots,
b_q^{p-r}sRes_0(Q,-R))$}\\
{\scriptsize$=PmV(1,
0,\ldots,0,b_q^{q-r}sRes_r(Q,-R),\ldots,
b_q^{q-r}sRes_0(Q,-R))$}\\
{\scriptsize$=PmV(b_q^{q-r},
0,\ldots,0,sRes_r(Q,-R),\ldots,
sRes_0(Q,-R))$}\\
{\bf Case 1.1.1: $q>r-1$ and $p-q$ is even and $q-r$ is even.}\\
{\scriptsize$=PmV(sRes_r(Q,-R),\ldots,
sRes_0(Q,-R))=PmV(sRes(Q,-R))$}\\
{\bf Case 1.1.2: $q>r-1$ and $p-q$ is even and $q-r$ is odd.}\\
{\scriptsize$=PmV(b_q,
0,\ldots,0,sRes_r(Q,-R),\ldots,
sRes_0(Q,-R))=PmV(sRes(Q,-R))$}\\
{\bf Case 1.2: $q>r-1$ and $p-q$ is odd.}\\
{\scriptsize$Pmv(sRes(P,Q))$\\
$=PmV(\eps_{p-q}b_q^{p-q},
0,\ldots,0,b_q^{p-r}\eps_{p-q}sRes_r(Q,-R),\ldots,
b_q^{p-r}\eps_{p-q}sRes_0(Q,-R))$\\\hspace*{2em}$+\eps_{p-q}sign(a_p\eps_{p-q}b_q^{p-q})$}\\
{\scriptsize$=PmV(b_q^{p-q},
0,\ldots,0,b_q^{p-r}sRes_r(Q,-R),\ldots,
b_q^{p-r}sRes_0(Q,-R))+sign(a_pb_q)$}\\
{\scriptsize$=PmV(b_q^{q-r},
0,\ldots,0,sRes_r(Q,-R),\ldots,
sRes_0(Q,-R))+sign(a_pb_q)$}\\
where we conclude as in subcases 1.1.1 and 1.1.2.

\smallskip \noindent
{\bf Case 2: $q-1=r$.}\\
In this case using Equation~\ref{eq:sresPqmoinsun},
$sRes_{q-1}(P,Q)=-\eps_{p-q}b-q^{p-q+1}c_r=\eps_{p-q}b_q^{p-r}(-c-r)$\\
where $c_r$ is the leading coefficient of $R$\\
So $Pmv(sRes(P,Q))$\\
$=PmV(a_p,0,\ldots,0,\eps_{p-q}b_q^{p-q},
b_q^{p-r}\eps_{p-q}sRes_{q-1}(Q,-R),\ldots,
b_q^{p-r}\eps_{p-q}sRes_0(Q,-R))$\\
And we conclude as in case 1.
\end{footnotesize}

\qed
\end{proof}

\begin{algorithm2e}
\DontPrintSemicolon

\SetKwFunction{Sign}{Sign}
\SetKwFunction{PmVPol}{PmVPol}
\SetKwFunction{SubResultants}{SubResultants}

{\PmVPol}$(\An,P,p,Q,q)$: an integer\;
\KwIn{$P,Q$, polynomials $\An[X]$ of degree $p$ and $q$ with $q<p$}
\KwOut{$PmV(sRes_p(P,Q)
,\ldots,sRes_0(P,Q))$}
\KwData{$j$ an index, $s_p,\ldots,s_0$ a sequence of signs} 
\KwData{$sReS(P,Q)$ a sequence of items of $\An$}
\BlankLine
\lIf{$q=-\infty$}{\Return 0} \tcp{consistently with Cauchy index definition}
$sRes(P,Q) \leftarrow \SubResultants(\An,P,p,Q,q)$\; 
\tcp{The subresultants computation depends on $\An$}
\tcp{since Algorithm~\ref{algo-subresultant} has an additional assumption.}
\lFor{$j {\bf ~from~} 0 {\bf ~to~ }p$}{$s_j\leftarrow \Sign(\An,sRes_j(P,Q))$}
\Return{$PmV(s_p,\ldots,s_0)$} \tcp{by applying the definition}
\caption{Computing the generalized permanences minus variations}
\label{algo-pmvpol-inD}
\end{algorithm2e}

Algorithm~\ref{algo-pmvpol-inD} describes how to compute the PmV and
so the Cauchy index of two polynomials. Now let us introduce the
Tarski query.
\begin{definition}[Tarski query] Let $P,Q \in \D[X]$. Then:
$$ TaQ(Q,P)=\sum_{z \in Zer(P)} sign(Q(z)).$$
\end{definition}

The Tarski query is closely related to the Cauchy index as established
by the next proposition.
\begin{proposition} Let $P,Q \in \D[X]$. Then:
$$ TaQ(Q,P)=Ind(P'Q/P).$$
\end{proposition}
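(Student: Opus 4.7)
The plan is to reduce everything to a local analysis around each root of $P$, exploiting the standard factorization $P(x)=(x-z)^{\mu_P}\tilde{P}(x)$ with $\tilde{P}(z)\neq 0$, and then match term by term with the definition of $TaQ(Q,P)$.

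First, I fix $z \in Zer(P)$ and let $\mu_P = mult(P,z) \geq 1$ and $\mu_Q = mult(Q,z) \geq 0$ (with $\mu_Q = 0$ iff $Q(z) \neq 0$). Writing $P = (X-z)^{\mu_P}\tilde{P}$ and $Q = (X-z)^{\mu_Q}\tilde{Q}$ with $\tilde{P}(z),\tilde{Q}(z)\neq 0$, a direct computation gives
$$\frac{P'}{P}(x) = \frac{\mu_P}{x-z} + \frac{\tilde{P}'(x)}{\tilde{P}(x)},$$
so that
$$\frac{P'Q}{P}(x) = \mu_P (x-z)^{\mu_Q-1}\tilde{Q}(x) + (x-z)^{\mu_Q}\tilde{Q}(x)\frac{\tilde{P}'(x)}{\tilde{P}(x)}.$$
Since $mult(P',z) = \mu_P - 1$, we get $mult(P'Q,z) = \mu_P - 1 + \mu_Q$, so $z \in Pole(P'Q/P)$ precisely when $\mu_P - 1 + \mu_Q < \mu_P$, i.e.\ $\mu_Q = 0$. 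Hence
$$Pole(P'Q/P) = \{z \in Zer(P) \mid Q(z) \neq 0\}.$$

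Next, for such a pole $z$ (so $\mu_Q = 0$), the expression above shows that $P'Q/P$ has a simple pole at $z$ with principal part $\mu_P Q(z)/(x-z)$ (the remaining summand $\tilde Q \tilde{P}'/\tilde{P}$ is bounded near $z$). Therefore the sign jump is
$$\sign\bigl((P'Q/P)(z^+)\bigr) - \sign\bigl((P'Q/P)(z^-)\bigr) = 2\,\sign(\mu_P Q(z)) = 2\,\sign(Q(z)),$$
using $\mu_P > 0$.

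Summing the contributions and applying the definition of the Cauchy index,
$$Ind(P'Q/P) = \frac{1}{2}\sum_{z \in Pole(P'Q/P)} 2\,\sign(Q(z)) = \sum_{\substack{z \in Zer(P)\\ Q(z)\neq 0}}\sign(Q(z)).$$
For $z \in Zer(P) \cap Zer(Q)$ we have $\sign(Q(z)) = 0$, so extending the sum over all of $Zer(P)$ does not change its value. This yields $Ind(P'Q/P) = TaQ(Q,P)$, as required. No substantial obstacle arises; the only point requiring care is confirming that no pole of $P'Q/P$ hides outside $Zer(P)$ and that each contribution is exactly $2\sign(Q(z))$, both of which follow from the local factorization above. \qed
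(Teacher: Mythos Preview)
Your proof is correct and follows essentially the same approach as the paper: a local analysis at each root $z$ of $P$ via the decomposition $P'/P=\mu_P/(X-z)+\text{(regular part)}$, showing that $z$ is a pole of $P'Q/P$ exactly when $Q(z)\neq 0$ and that the sign jump there equals $2\,\sign(Q(z))$. You have simply spelled out in more detail what the paper compresses into three lines.
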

\begin{proof}
  Let $z$ be a root of $P$ with multiplicity $\mu$. Then
  $P'Q/P=Q(\frac{\mu}{X-z}+R)$
  with $R$ a rational function with no
  pole at $z$. 
  If $Q(z)=0$ then  $P'Q/P$ has no pole in $z$.
  Otherwise  $sign((P'Q/P)(z^+))=sign(Q(z))$ and
  $sign((P'Q/P)(z^-))=-sign(Q(z))$. The assertion of the proposition
  follows.  \qed
\end{proof}

\begin{example}\label{ex:cptTaqQ1P1}
\begin{itemize}
\item For $P_1=\sqrt{5}X^2+1$ and $Q_1= X + \frac{\sqrt{5}-7}2$, we
  have $P_1'=2\sqrt{5}X$.  The sign of $P_1'Q_1$ around the poles of
  $P'_1Q_1/P_1$ is constant: positive around $z_1$ and negative around
  $z_2$.  Hence $Ind(P_1'Q_1/P_1) = \frac12 (-1 - 1 + (-1) -1) = -2$.
  On the other hand, since the sign of $Q_1$ is negative at both $z_1$
  and $z_2$, $TaQ(Q_1,P_1)=-1+ (-1) = -2$.
\item For $P_2=(2\sqrt{5}-1)X^2-1$ and $Q_2=X$, we have
  $P_2'=(4\sqrt{5}-2)X$.  The sign of $P_2'Q_2$ is always
  non-negative, hence it is so at the poles of $P_2'Q_2/P_2$, where it
  is non-zero.  Hence $Ind(P_2'Q_2/P_2) = \frac12 (-1 - 1 + 1 - (-1))
  = 0$ while $Q_2$ has the same sign as the roots of $P_2$, so
  $TaQ(Q_2,P_2)= -1 +1 = 0$.
\end{itemize}
\end{example}

In fact the Tarski question is an auxiliary value. The values we are
really interested in are the following counters:
\begin{itemize}
	\item ${\bf nb}_P(Q)[-1]=|\{z \in Zer(P)\mid Q(z)<0\}|$;
	\item ${\bf nb}_P(Q)[0]=|\{z \in Zer(P)\mid Q(z)=0\}|$.
	\item ${\bf nb}_P(Q)[1]=|\{z \in Zer(P)\mid Q(z)>0\}|$;
\end{itemize}

The following lemma whose proof is obvious is the key for
computing such counters.
\begin{lemma} The Tarski queries and root counters are related by:
\begin{itemize}
	\item $TaQ(1,P)={\bf nb}_P(Q)[-1]+{\bf nb}_P(Q)[0]+{\bf nb}_P(Q)[1]$;
	\item $TaQ(Q,P)=-{\bf nb}_P(Q)[-1] +{\bf nb}_P(Q)[1]$;
	\item $TaQ(Q^2,P)={\bf nb}_P(Q)[-1]+{\bf nb}_P(Q)[1]$.
\end{itemize}
\end{lemma}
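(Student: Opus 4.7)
The plan is to unfold the definition of $TaQ$ and partition $Zer(P)$ by the sign of $Q$. Writing $Zer(P) = Z_{-} \sqcup Z_0 \sqcup Z_{+}$, where $Z_s = \{z \in Zer(P) \mid \mathrm{sign}(Q(z)) = s\}$ for $s \in \{-1, 0, 1\}$, we have by definition $|Z_{-1}| = {\bf nb}_P(Q)[-1]$, $|Z_0| = {\bf nb}_P(Q)[0]$, and $|Z_1| = {\bf nb}_P(Q)[1]$. All three identities will then follow from an immediate computation.

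For the first identity, since $\mathrm{sign}(1) = 1$, one has $TaQ(1,P) = \sum_{z \in Zer(P)} 1 = |Zer(P)|$, and the partition gives $|Zer(P)| = {\bf nb}_P(Q)[-1] + {\bf nb}_P(Q)[0] + {\bf nb}_P(Q)[1]$. For the second, decomposing the sum along the partition gives $TaQ(Q,P) = (-1)\cdot|Z_{-1}| + 0 \cdot |Z_0| + 1 \cdot |Z_1| = -{\bf nb}_P(Q)[-1] + {\bf nb}_P(Q)[1]$. For the third, we use that $\mathrm{sign}(Q(z)^2) = 1$ whenever $Q(z) \neq 0$ and $0$ otherwise, so $TaQ(Q^2,P) = 1\cdot|Z_{-1}| + 0\cdot|Z_0| + 1\cdot|Z_1| = {\bf nb}_P(Q)[-1] + {\bf nb}_P(Q)[1]$.

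There is no real obstacle: the whole argument is just reading off the definition of $\mathrm{sign}$ on the three possible sign classes. The only (minor) points to keep in mind are that $P \neq 0$ so $Zer(P)$ is finite and the sum is well-defined, and that $\mathrm{sign}(Q(z)^2) \in \{0,1\}$ rather than $\{-1,0,1\}$, which is what makes the third identity symmetric between $Z_{-1}$ and $Z_1$.
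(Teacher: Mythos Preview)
Your proof is correct and is precisely the obvious unfolding of definitions that the paper has in mind; the paper does not actually spell out a proof, merely stating that the lemma's ``proof is obvious,'' and your partition of $Zer(P)$ by the sign of $Q$ is exactly that argument.
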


\begin{example}
  We previously computed $TaQ(Q_1,P_1)= -2$ (see
  Example~\ref{ex:cptTaqQ1P1}).  The value $TaQ(1,P_1)$, actually
  computed through $Ind(P_1'/P_1)$ yields the number of roots of
  $P_1$, which is $2$.  Finally, computing $TaQ(Q_1^2,P_1)$ can also
  be done through the Cauchy index, and yields the number of roots of
  $P_1$ that are not roots of $Q_1$, in this case also $2$.

  As a result, solving the system induced by the above lemma, there
  are two roots of $P_1$ where $Q_1$ is strictly negative, and no root
  of $P_1$ where $Q_1$ is positive or null. The polynomial $Q_1$ has
  degree $1$, this shows that both roots of $P_1$ are strictly smaller
  than the (only) root of $Q_1$.
\end{example}

Thus defining the invertible matrix ${\bf M}_1$ and vector 
${\bf TaQ}_P(Q)$ by:
$$
{\bf M}_1=
\begin{pmatrix}
1&1 &1\\
-1&0&1\\
1&0&1
\end{pmatrix}
{\bf TaQ}_P(Q)=
\begin{pmatrix}
{\bf TaQ}_P(Q)[0]\\
{\bf TaQ}_P(Q)[1]\\
{\bf TaQ}_P(Q)[2]
\end{pmatrix}
=
\begin{pmatrix}
TaQ(Q^0,P)\\
TaQ(Q^1,P)\\
TaQ(Q^2,P)
\end{pmatrix}
$$

we obtain:
\begin{proposition}
\label{prop:simpleTaQ}
$${\bf TaQ}_P(Q)= {\bf M}_1 \cdot {\bf nb}_P(Q)$$
\end{proposition}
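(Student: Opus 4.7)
The plan is to observe that this proposition is essentially a matrix reformulation of the preceding lemma, so the proof is just bookkeeping. First, I would unpack the definitions of both vectors. By the definition given above, the left-hand side is
\[
{\bf TaQ}_P(Q) = \begin{pmatrix} TaQ(1,P) \\ TaQ(Q,P) \\ TaQ(Q^2,P) \end{pmatrix},
\]
and I would fix the convention (matching the column order of ${\bf M}_1$) that
\[
{\bf nb}_P(Q) = \begin{pmatrix} {\bf nb}_P(Q)[-1] \\ {\bf nb}_P(Q)[0] \\ {\bf nb}_P(Q)[1] \end{pmatrix}.
\]

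Next, I would compute the product ${\bf M}_1 \cdot {\bf nb}_P(Q)$ row by row:
\[
{\bf M}_1 \cdot {\bf nb}_P(Q) = \begin{pmatrix} {\bf nb}_P(Q)[-1] + {\bf nb}_P(Q)[0] + {\bf nb}_P(Q)[1] \\ -\,{\bf nb}_P(Q)[-1] + {\bf nb}_P(Q)[1] \\ {\bf nb}_P(Q)[-1] + {\bf nb}_P(Q)[1] \end{pmatrix}.
\]
Row by row, these three entries coincide exactly with the three identities $TaQ(1,P)$, $TaQ(Q,P)$, and $TaQ(Q^2,P)$ given by the preceding lemma, which in turn follow from the trivial facts that $\sign(1) = 1$, $\sum_{z \in Zer(P)} \sign(Q(z))$ splits according to the sign of $Q(z)$, and $\sign(Q(z)^2) \in \{0,1\}$ with value $0$ iff $Q(z) = 0$. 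Thus each component of ${\bf TaQ}_P(Q)$ equals the corresponding component of ${\bf M}_1 \cdot {\bf nb}_P(Q)$, which is the claimed equality.

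There is really no main obstacle: the only subtlety is to be explicit about the ordering convention adopted for stacking the three counters ${\bf nb}_P(Q)[-1], {\bf nb}_P(Q)[0], {\bf nb}_P(Q)[1]$ into a column vector, so that the columns of ${\bf M}_1$ line up with the correct counters. With that convention fixed, the proposition is a direct rewriting of the lemma.
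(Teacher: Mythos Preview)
Your proposal is correct and matches the paper's approach exactly: the paper presents this proposition as an immediate consequence of the preceding lemma (introduced with ``we obtain'') and gives no further proof, so your explicit row-by-row verification is precisely the intended argument, just spelled out in more detail.
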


As we are interested in determining the simultaneous signs of polynomials
evaluated on the roots of another polynomial
we generalize mappings ${\bf nb}_P$ and ${\bf TaQ}_P$
to a sequence of polynomials.

\begin{definition}[Generalized counters and Tarski queries] Let $P \in \D[X]$
and $\mathcal Q=(Q_1,\ldots,Q_m)$ be a finite sequence of $\D[X]$. Then:

\noindent
${\bf nb}_P(\mathcal Q)$ is an integer vector whose support
is $\{-1,0,1\}^{\{1,\ldots,m\}}$ such that: 
$${\bf nb}_P(\mathcal Q)[i_1,\ldots,i_m]=
\left| \{z \in Zer(P)\mid \forall j\leq m\ sign(Q_j(z))=i_j\}\right|$$

\noindent
${\bf TaQ}_P(\mathcal Q)$ is an integer vector whose support
is $\{0,1,2\}^{\{1,\ldots,m\}}$ such that: 
$${\bf TaQ}_P(\mathcal Q)[i_1,\ldots,i_m]=TaQ(Q_1^{i_1}\cdots Q_m^{i_m})$$
\end{definition}

The tensor product of two matrices ${\bf A}$ of dimension
$m_a\times n_a$ and ${\bf B}$ of dimension $m_b\times n_b$ is the
matrix ${\bf A} \otimes {\bf B}$ of dimension $m_am_b \times n_an_b$
defined by: ${\bf A} \otimes {\bf B}[(i_a,i_b),(j_a,j_b)]={\bf A}[i_a,j_a]{\bf
  B}[i_b,j_b]$.  We inductively define for $t>1$, ${\bf M}_t={\bf M}_1
\otimes {\bf M}_{t-1}$.

\begin{proposition}
 Let $P \in \D[X]$
and $\mathcal Q=(Q_1,\ldots,Q_m)$ a finite sequence of $\D[X]$. Then:
$${\bf TaQ}_P(\mathcal Q)= {\bf M}_m \cdot {\bf nb}_P(\mathcal Q).$$
\end{proposition}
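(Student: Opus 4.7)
The plan is to generalize Proposition~\ref{prop:simpleTaQ} by exploiting the tensor product structure of ${\bf M}_m$. I would proceed by direct computation rather than induction, since the key algebraic identity is already visible in the single-polynomial case.

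First, I would observe that the entries of ${\bf M}_1$ encode precisely the evaluation of the power map $s \mapsto s^i$ at signs. Indexing rows by $i \in \{0,1,2\}$ and columns by $s \in \{-1,0,1\}$, one checks that ${\bf M}_1[i,s] = s^i$ (with the convention $0^0 = 1$, which is consistent since $Q^0$ denotes the constant polynomial $1$). Indeed, the three columns of ${\bf M}_1$ are $(1,-1,1)^T$, $(1,0,0)^T$ and $(1,1,1)^T$, matching $((-1)^0, (-1)^1, (-1)^2)$, $(0^0, 0^1, 0^2)$ and $(1^0, 1^1, 1^2)$.

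Second, I would expand the definition of ${\bf TaQ}_P(\mathcal Q)$ and regroup the sum by sign patterns of $(Q_1,\ldots,Q_m)$ at the roots of $P$:
\[
{\bf TaQ}_P(\mathcal Q)[i_1,\ldots,i_m] \;=\; \sum_{z \in Zer(P)} \prod_{j=1}^m \sign(Q_j(z))^{i_j} \;=\; \sum_{(s_1,\ldots,s_m) \in \{-1,0,1\}^m} {\bf nb}_P(\mathcal Q)[s_1,\ldots,s_m] \prod_{j=1}^m s_j^{i_j}.
\]
Using the observation above, $\prod_{j=1}^m s_j^{i_j} = \prod_{j=1}^m {\bf M}_1[i_j,s_j]$, which by the very definition of the tensor product equals ${\bf M}_m[(i_1,\ldots,i_m),(s_1,\ldots,s_m)]$ after iterating ${\bf M}_m = {\bf M}_1 \otimes {\bf M}_{m-1}$. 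This is exactly the entry-wise form of the announced matrix identity.

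There is no real obstacle here: the only subtle point is the bookkeeping of indices, namely that the rows of ${\bf M}_m$ are indexed by exponent tuples in $\{0,1,2\}^m$ while the columns are indexed by sign tuples in $\{-1,0,1\}^m$, and that the tensor-product ordering used to build ${\bf M}_m$ matches the lexicographic ordering used to index ${\bf TaQ}_P(\mathcal Q)$ and ${\bf nb}_P(\mathcal Q)$. Once that compatibility is stated, the proof reduces to the one-line computation above, and the base case $m=1$ coincides with Proposition~\ref{prop:simpleTaQ}.
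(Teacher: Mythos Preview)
Your proof is correct and arguably cleaner than the paper's. The key observation ${\bf M}_1[i,s]=s^i$ (with $0^0=1$) is exactly right, and the regrouping of $\sum_{z\in Zer(P)}\prod_j \sign(Q_j(z))^{i_j}$ by sign patterns gives the matrix identity in one line.

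The paper takes a slightly different route: it first reduces to the case where $P$ has a single root $z$ (by writing both vectors as sums over the distinct roots), and then shows that for $P=X-z$ each of ${\bf TaQ}_P(\mathcal Q)$ and ${\bf nb}_P(\mathcal Q)$ is itself a tensor product of the one-polynomial vectors, so that Proposition~\ref{prop:simpleTaQ} together with the identity $(A\otimes B)(u\otimes v)=(Au)\otimes(Bv)$ yields the result. Your approach bypasses both the single-root reduction and the need to invoke this tensor-product compatibility lemma; it trades that structure for a direct entrywise computation. The paper's version makes the tensor structure of the vectors explicit (which is conceptually pleasant), while yours is shorter and requires fewer auxiliary facts.
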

\begin{proof}

Observe that both ${\bf TaQ}_P(\mathcal Q)$ and ${\bf nb}_P(\mathcal Q)$
only depend on $Zer(P)$. Thus w.l.o.g we assume that $P=\prod_i X-z_i$
with all $z_i$ distinct. In this case, 
\[{\bf TaQ}_P(\mathcal Q)=\sum_i{\bf TaQ}_{X-z_i}(\mathcal Q)
\textrm{ and }
{\bf nb}_P(\mathcal Q)=\sum_i{\bf nb}_{X-z_i}(\mathcal Q).\]

\noindent
So we are left with the case $P = X-z$. 
For all $(i_1,\ldots,i_m)$,
\begin{eqnarray*}
{\bf TaQ}_P(\mathcal Q)[i_1,\ldots,i_m] &=& TaQ_P(Q_1^{i_1}\cdots Q_m^{i_m}) \\
&=& sign(Q_1^{i_1}(z)\cdots Q_m^{i_m}(z))\\
&=& \prod_j sign(Q_j^{i_j}(z))
\ =\ \prod_{j} TaQ_P(Q_j^{i_j})
\end{eqnarray*}
Therefore by definition of tensor product,\\
${\bf TaQ}_P(\mathcal Q) =
{\bf TaQ}_P(Q_1)\otimes \cdots\otimes {\bf TaQ}_P(Q_m)$.

\noindent
On the other hand, for all $(i_1,\ldots,i_m)$,
${\bf nb}_P(\mathcal Q)[i_1,\ldots,i_m]={\bf 1}_{\bigwedge_j sign(Q_j(z))=i_j}$
$=\prod_j {\bf 1}_{sign(Q_j(z))=i_j}$ $=\prod_j {\bf nb}_P(\mathcal Q_j)[i_j]$.
Therefore,\\
${\bf nb}_P(\mathcal Q) =
{\bf nb}_P(Q_1)\otimes \cdots\otimes {\bf nb}_P(Q_m)$.

\noindent
So ${\bf TaQ}_P(\mathcal Q) =
{\bf TaQ}_P(Q_1)\otimes \cdots\otimes {\bf TaQ}_P(Q_m)$\\
$= {\bf M_1}\cdot{\bf nb}_P(Q_1)\otimes \cdots\otimes {\bf M_1}\cdot{\bf nb}_P(Q_m)$
\emph{using Proposition~\ref{prop:simpleTaQ} }\\
$= ({\bf M_1}\otimes \cdots\otimes {\bf M_1})\cdot({\bf nb}_P(Q_1)\otimes \cdots\otimes {\bf nb}_P(Q_m))$
\emph{using a property of tensor product}\\
$= {\bf M_m} \cdot {\bf nb}_P(\mathcal Q)$.

\qed
\end{proof}

Using elementary properties of the tensorial product, one gets
the following corollary.

\begin{corollary}
  Let $P \in \D[X]$ and $\mathcal Q=(Q_1,\ldots,Q_m)$ a finite
  sequence of $\D[X]$. Then:
$$ {\bf nb}_P(\mathcal Q)= ({\bf M}_m)^{-1} \cdot{\bf TaQ}_P(\mathcal Q)
= \left(({\bf M}_1)^{-1}\otimes \cdots \otimes ({\bf M}_1)^{-1}\right) 
\cdot{\bf TaQ}_P(\mathcal Q). $$
\end{corollary}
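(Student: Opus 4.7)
The plan is to invert the identity provided by the preceding proposition, using only elementary properties of the Kronecker (tensor) product. Since the proposition states
\[{\bf TaQ}_P(\mathcal Q)= {\bf M}_m \cdot {\bf nb}_P(\mathcal Q),\]
it suffices to check that ${\bf M}_m$ is invertible and to express its inverse in the product-of-inverses form stated in the corollary.

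First, I would verify that ${\bf M}_1$ is invertible by a direct $3\times 3$ determinant computation: expanding along the second column of
\[
{\bf M}_1=\begin{pmatrix} 1 & 1 & 1 \\ -1 & 0 & 1 \\ 1 & 0 & 1\end{pmatrix}
\]
gives $\det({\bf M}_1) = -\det\bigl(\begin{smallmatrix}-1 & 1\\ 1 & 1\end{smallmatrix}\bigr) = 2 \neq 0$, so ${\bf M}_1^{-1}$ exists. Next, unrolling the inductive definition ${\bf M}_t = {\bf M}_1 \otimes {\bf M}_{t-1}$ by a straightforward induction on $t$ (using associativity of $\otimes$) yields
\[{\bf M}_m = {\bf M}_1 \otimes {\bf M}_1 \otimes \cdots \otimes {\bf M}_1 \qquad (m \text{ factors}).\]

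Then I invoke the standard tensor-product identity $(A\otimes B)^{-1} = A^{-1} \otimes B^{-1}$, valid whenever both factors are invertible. Applying it $m-1$ times (with an inductive argument) shows that ${\bf M}_m$ is invertible and
\[{\bf M}_m^{-1} = ({\bf M}_1)^{-1} \otimes \cdots \otimes ({\bf M}_1)^{-1}.\]
Left-multiplying the proposition's equation by ${\bf M}_m^{-1}$ then yields both equalities of the corollary.

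There is no real obstacle here: the only slightly nontrivial point is the tensor-product inverse formula, which is classical and follows immediately from the mixed-product property $(A\otimes B)(C\otimes D) = (AC)\otimes(BD)$ applied to $(A\otimes B)(A^{-1}\otimes B^{-1}) = I \otimes I = I$.
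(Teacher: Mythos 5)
Your proof is correct and matches the paper's approach: the paper simply asserts the corollary follows "using elementary properties of the tensorial product," and your argument (invertibility of ${\bf M}_1$ by direct determinant computation, unrolling ${\bf M}_m={\bf M}_1\otimes\cdots\otimes{\bf M}_1$, and the Kronecker-inverse identity $(A\otimes B)^{-1}=A^{-1}\otimes B^{-1}$) is exactly the intended elementary argument.
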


While the previous corollary provides a way to compute
the number of zeroes of $P$ per sign realization at
family $\mathcal Q$, the procedure is highly inefficient
w.r.t $m$.
Indeed  $M_m$ has size $3^m \times 3^m$ while the values and the size
of the support of vector ${\bf nb}_P(\mathcal Q)$ remain bounded
by the number of zeroes of $P$. So in the next paragraphs,
we refine the procedure by iteratively computing
${\bf nb}_P(Q_i,\ldots,Q_m)$ by decreasing values of $i$
and using the intermediate result
to reduce the size of the matrix to be inverted at the next computation step.

\begin{definition}
Let $m$ be an integer, $\Sigma \subseteq \{-1,0,1\}^m$ 
and $A \subseteq \{0,1,2\}^m$. Then $A$ is \emph{adapted} to
$\Sigma$ if the (sub)matrix $M_m[A,\Sigma]$ is invertible.
\end{definition}

Since $M_m$ is invertible any $\Sigma$ admits some $A$.
However we need a way to efficiently compute such an $A$.

\begin{definition}\label{def:aSigma}
Let $\Sigma \subseteq \{-1,0,1\}^m$. Then $A(\Sigma)$
is inductively defined by:
\begin{itemize}
 \item If $m=1$ then:
 \begin{enumerate}
  \item When $|\Sigma|=1$, $A(\Sigma)=\{0\}$ 
  \item When $|\Sigma|=2$, $A(\Sigma)=\{0,1\}$ 
  \item When $|\Sigma|=3$, $A(\Sigma)=\{0,1,2\}$ 
 \end{enumerate}
 \item Let $\Sigma \subseteq \{-1,0,1\}^{m+1}$.\\ 
 For $k \in \{1,2,3\}$,
 define $\Sigma_k=\{\sigma \in \{-1,0,1\}^{m} \mid |\{(i,\sigma) \in \Sigma\}|\geq k\}$.\\
 Then $A(\Sigma)=\{0\}\times A(\Sigma_1)\cup\{1\}\times A(\Sigma_2)\cup\{2\}\times A(\Sigma_3)$.
\end{itemize}
\end{definition}
Observe that $\Sigma_3\subseteq \Sigma_2 \subseteq \Sigma_1$
and that $|\Sigma_3|+|\Sigma_2|+|\Sigma_1|=|\Sigma|$.

\begin{proposition}\label{prop:aSigmaAdapted}
Let $\Sigma \subseteq \{-1,0,1\}^m$. Then $A(\Sigma)$
is adapted to $\Sigma$.
\end{proposition}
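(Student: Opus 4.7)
The plan is to proceed by induction on $m$. For the base case $m=1$, one verifies from Definition~\ref{def:aSigma} that $M_1[A(\Sigma), \Sigma]$ is invertible in each of the three cases: when $|\Sigma|=1$ the matrix is $(1)$; when $|\Sigma|=3$ the matrix is $M_1$ itself, whose determinant is $2$; and when $|\Sigma|=2$, the three possible $2 \times 2$ submatrices (extracted from rows $\{0,1\}$ of $M_1$ and the two selected columns in $\{-1,0,1\}$) are $\left(\begin{smallmatrix}1 & 1\\ -1 & 0\end{smallmatrix}\right)$, $\left(\begin{smallmatrix}1 & 1\\ -1 & 1\end{smallmatrix}\right)$, $\left(\begin{smallmatrix}1 & 1\\ 0 & 1\end{smallmatrix}\right)$, with determinants $1$, $2$, $1$ respectively.

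For the inductive step, assume $A(\Sigma')$ is adapted to $\Sigma'$ for every $\Sigma' \subseteq \{-1,0,1\}^m$, and let $\Sigma \subseteq \{-1,0,1\}^{m+1}$. Partition $\Sigma$ by first coordinate into $\Sigma^{(i)} = \{\sigma' : (i, \sigma') \in \Sigma\}$ for $i \in \{-1, 0, 1\}$, so that $\sigma' \in \Sigma_k$ iff $|J(\sigma')| \geq k$, where $J(\sigma') = \{i : \sigma' \in \Sigma^{(i)}\}$. To prove invertibility of $M_{m+1}[A(\Sigma), \Sigma]$, I show injectivity: let $v : \Sigma \to \R$ satisfy $M_{m+1}[A(\Sigma), \Sigma] \cdot v = 0$. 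Using $M_{m+1} = M_1 \otimes M_m$ and introducing the zero-extensions $v_i \in \R^{\Sigma_1}$ defined by $v_i(\sigma') = v(i, \sigma')$ when $(i, \sigma') \in \Sigma$ and $v_i(\sigma') = 0$ otherwise, the block structure of the system rewrites as three equations, one per block row: for $k \in \{1, 2, 3\}$,
\[
M_m[A(\Sigma_k), \Sigma_1] \cdot w_k = 0, \qquad w_k = \sum_{i \in \{-1,0,1\}} M_1[k-1, i]\, v_i.
\]

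The key step---and the main obstacle---is to deduce successively that $w_1 = w_2 = w_3 = 0$. For $k = 1$, the matrix $M_m[A(\Sigma_1), \Sigma_1]$ is square and invertible by the inductive hypothesis, so $w_1 = 0$ directly, yielding the relation $v_{-1} + v_0 + v_1 = 0$ in $\R^{\Sigma_1}$. For $k = 2, 3$, the matrix $M_m[A(\Sigma_k), \Sigma_1]$ is rectangular (since $|A(\Sigma_k)| = |\Sigma_k| < |\Sigma_1|$ in general); one first shows that $w_k$ vanishes outside $\Sigma_k$ by a case analysis on $J(\sigma')$ for $\sigma' \in \Sigma_1 \setminus \Sigma_k$, exploiting the support constraints $v_i(\sigma') = 0$ for $i \notin J(\sigma')$ together with the already established relations $w_j = 0$ for $j < k$ and the specific coefficients of rows $(1,1,1)$, $(-1,0,1)$, $(1,0,1)$ of $M_1$. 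Once $w_k|_{\Sigma_1 \setminus \Sigma_k} = 0$, the equation restricts to $M_m[A(\Sigma_k), \Sigma_k] \cdot (w_k|_{\Sigma_k}) = 0$, and the induction hypothesis forces $w_k|_{\Sigma_k} = 0$, hence $w_k = 0$. Finally, since $M_1$ itself is invertible, the system $w_1 = w_2 = w_3 = 0$ forces $v_{-1} = v_0 = v_1 = 0$, so $v = 0$ on $\Sigma$, which proves that $M_{m+1}[A(\Sigma), \Sigma]$ is invertible and completes the induction.
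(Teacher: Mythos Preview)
Your proof is correct and proceeds by a genuinely different route than the paper's. The paper performs explicit \emph{column operations} on $M_{m+1}[\{0,1,2\}^{m+1},\Sigma]$: after reordering the columns of $\Sigma$ into three groups indexed by $\Sigma_1,\Sigma_2,\Sigma_3$, suitable linear combinations of columns sharing the same tail $\sigma$ bring the matrix into block lower-triangular form with diagonal blocks $M_m[\{0,1,2\}^m,\Sigma_j]$, $j=1,2,3$; the inductive hypothesis then identifies the first $|\Sigma_j|$ independent rows in each block as $A(\Sigma_j)$, yielding invertibility of $M_{m+1}[A(\Sigma),\Sigma]$ directly. By contrast, you argue injectivity on the kernel side: you exploit the tensor structure to rewrite the system as three rectangular equations $M_m[A(\Sigma_k),\Sigma_1]\cdot w_k=0$, and the crucial work is the support analysis showing that $w_k$ vanishes outside $\Sigma_k$ (using the previously established $w_j=0$ and the explicit rows of $M_1$) so that the inductive hypothesis applies on the square restriction. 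Your approach is slightly more elementary in that it avoids matrix manipulations, but the paper's block-triangular viewpoint has the advantage that it is reused verbatim in the proof of the next proposition (Proposition~\ref{prop:lin-ind-extract}), where one needs to know not just that $A(\Sigma)$ is adapted but that it consists of the \emph{first} independent rows in a specific sense.
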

\begin{proof}
The base case $m=1$ is established by 
a straightforward examination of $M_1$.
Assume that the result holds for $m$
and consider $\Sigma \subseteq \{-1,0,1\}^{m+1}$.
For $\sigma \in \Sigma_1$, 
we denote by $C_\sigma$ the column of matrix $M_m[ \{0,1,2\}^m,\Sigma_1]$
indexed by $\sigma$.
Then columns of matrix $M_{m+1}[\{0,1,2\}^{m+1},\Sigma]$ are:  

- $C_{(-1,\sigma)}=\left( \begin{array}{c} 1 \\-1\\1\end{array}\right)\otimes C_\sigma =
\left( \begin{array}{c} C_\sigma \\ -C_\sigma \\C_\sigma \end{array}\right)$ if $(-1, \sigma) \in \Sigma$,\\

- $C_{(0,\sigma)}=\left( \begin{array}{c} 1 \\0\\0 \end{array}\right)\otimes C_\sigma =
\left( \begin{array}{c} C_\sigma \\ 0 \\0\end{array}\right)$ if $(0, \sigma) \in \Sigma$,\\

- $C_{(1,\sigma)}=\left( \begin{array}{c} 1 \\1\\1\end{array}\right)\otimes C_\sigma =
\left( \begin{array}{c} C_\sigma \\ C_\sigma \\C_\sigma \end{array}\right)$ 
if $(1, \sigma) \in \Sigma$.

\noindent
For $\sigma \in \Sigma_1$, 
we pick a minimal $k_{\sigma,1}$ such that
$(k_{\sigma,1},\sigma) \in \Sigma$.
For $\sigma \in \Sigma_2$, 
we pick a minimal $k_{\sigma,2}\neq k_{\sigma,1}$ such that
$(k_{\sigma,2},\sigma) \in \Sigma$.
For $\sigma \in \Sigma_3$, 
we pick the unique $1=k_{\sigma,3} \notin
\{k_{\sigma,1},k_{\sigma,2}\}$ such that
$(k_{\sigma,3},\sigma) \in \Sigma$.
Let us reorder the columns of matrix $M_{m+1}[\{0,1,2\}^{m+1},\Sigma]$
as follows. The first $|\Sigma_1|$ columns are those indexed by all
$(k_{\sigma,1},\sigma) \in \Sigma$.
The next $|\Sigma_2|$ columns are those indexed by all
$(k_{\sigma,2},\sigma) \in \Sigma$.
The last $|\Sigma_3|$ columns are those indexed by all
$(k_{\sigma,3},\sigma) \in \Sigma$.

\noindent
We then perform on this matrix some columns operations that
let the linear independence status of rows unchanged:
\begin{itemize}
 \item when $k_{\sigma,1}=-1$ and $k_{\sigma,2}=0$
 then $C_{0,\sigma}\leftarrow C_{0,\sigma} -C_{-1,\sigma}$
 so that\\ $C_{0,\sigma}=\left( \begin{array}{c} 0 \\
 C_\sigma \\-C_\sigma \end{array}\right)$.
 \item when $k_{\sigma,1}=-1$ and $k_{\sigma,2}=1$
 then $C_{1,\sigma}\leftarrow \frac{1}{2}(C_{1,\sigma} -C_{-1,\sigma})$
 so that\\ $C_{1,\sigma}=\left( \begin{array}{c} 0 \\
 C_\sigma \\0 \end{array}\right)$.
 \item when $k_{\sigma,1}=0$ and $k_{\sigma,2}=1$
 then $C_{1,\sigma}\leftarrow C_{1,\sigma} -C_{-1,\sigma}$
 so that\\ $C_{1,\sigma}=\left( \begin{array}{c} 0 \\
 C_\sigma \\C_\sigma \end{array}\right)$.
 \item when $k_{\sigma,3}$ is defined (and so equal to 1)
 then $C_{1,\sigma}\leftarrow \frac{1}{2}(C_{1,\sigma} 
 -2C_{0,\sigma}+C_{-1,\sigma})$
 so that\\ $C_{1,\sigma}=\left( \begin{array}{c} 0 \\
 0 \\C_\sigma \end{array}\right)$.
\end{itemize}
The resulting matrix has a triangular form~:
$$\left( 
\begin{array}{ccc} M_m[ \{0,1,2\}^m, \Sigma_1] & 0 & 0 \\
* & M_m[ \{0,1,2\}^m, \Sigma_2] & 0\\
* & * & M_m[ \{0,1,2\}^m,\Sigma_3]
\end{array}\right) 
$$
Due to this triangular form, the first 
$|\Sigma_1|+|\Sigma_2|+|\Sigma_3|$ independent rows of 
$M_{m+1}[\{0,1,2\}^{m+1},\Sigma]$  are the first
$|\Sigma_1|$ rows of the first diagonal block 
followed by  the first
$|\Sigma_2|$ rows of the second diagonal block
and the first
$|\Sigma_3|$ rows of the third diagonal block.
\qed
\end{proof}

Computing inductively $A(\Sigma)$ seems to require three
``recursive calls''. However observing
that $\Sigma_3\subseteq\Sigma_2\subseteq\Sigma_1$ and using
the next proposition we will obtain an efficient computation.

\begin{proposition}
\label{prop:lin-ind-extract}
 Let $\Sigma'\subseteq \Sigma \subseteq \{-1,0,1\}^m$. 
 Then $A(\Sigma')$ is obtained by extracting
 the first $|\Sigma'|$ linearly independent rows
 of matrix $M_m[A(\Sigma),\Sigma']$.
\end{proposition}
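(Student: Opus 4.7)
The plan is to proceed by induction on $m$. For the base case $m = 1$, I would inspect $M_1$ directly: for $\Sigma' \subseteq \Sigma \subseteq \{-1,0,1\}$, the first $|\Sigma'|$ linearly independent rows of $M_1[A(\Sigma), \Sigma']$ are precisely the first $|\Sigma'|$ rows listed in $A(\Sigma)$, which by Definition~\ref{def:aSigma} form exactly $A(\Sigma')$.

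For the inductive step, I would take $\Sigma' \subseteq \Sigma \subseteq \{-1,0,1\}^{m+1}$ and apply to $M_{m+1}[A(\Sigma), \Sigma']$ the same column operations as in the proof of Proposition~\ref{prop:aSigmaAdapted}, this time driven by $\Sigma'$: one selects indices $k_{\sigma,1}, k_{\sigma,2}, k_{\sigma,3}$ for $\sigma$ in $\Sigma'_1, \Sigma'_2, \Sigma'_3$ and performs the analogous combinations. Since these operations amount to right-multiplication by an invertible matrix, they preserve which subsets of rows are linearly independent. The outcome is the block lower-triangular matrix
\[
\begin{pmatrix}
M_m[A(\Sigma_1), \Sigma'_1] & 0 & 0 \\
* & M_m[A(\Sigma_2), \Sigma'_2] & 0 \\
* & * & M_m[A(\Sigma_3), \Sigma'_3]
\end{pmatrix},
\]
whose three row blocks coincide, in order, with the decomposition $A(\Sigma) = \{0\} \times A(\Sigma_1) \cup \{1\} \times A(\Sigma_2) \cup \{2\} \times A(\Sigma_3)$ inherited from the lexicographic ordering of $\{0,1,2\}^{m+1}$.

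Next I would read off the first $|\Sigma'|$ linearly independent rows block by block. In the first block the off-diagonal entries are zero, so independence is controlled by $M_m[A(\Sigma_1), \Sigma'_1]$ alone, and the induction hypothesis selects exactly $\{0\} \times A(\Sigma'_1)$. Since these chosen rows span the full row space of $M_m[A(\Sigma_1), \Sigma'_1]$ (which has rank $|\Sigma'_1|$), the leftmost ``$*$'' entries of any row in the second block can be canceled by suitable combinations with already-chosen rows. Thus independence within the second block is again governed only by the diagonal piece $M_m[A(\Sigma_2), \Sigma'_2]$, and the induction hypothesis selects $\{1\} \times A(\Sigma'_2)$. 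A symmetric argument for the third block yields $\{2\} \times A(\Sigma'_3)$. Concatenating the three contributions gives exactly $A(\Sigma')$ by Definition~\ref{def:aSigma}.

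The main obstacle is the bookkeeping around the column operations: one has to verify that the transformations from the proof of Proposition~\ref{prop:aSigmaAdapted} still block-lower-triangularize the matrix when the column set is $\Sigma'$ (rather than $\Sigma$) and the row set is restricted to $A(\Sigma)$, and to justify carefully the ``projection'' argument which reduces independence in each block to independence in the corresponding diagonal block. Both steps are routine adaptations of arguments already used for Proposition~\ref{prop:aSigmaAdapted}, but they need to be executed with care so that the lexicographic order on $A(\Sigma)$ is respected throughout.
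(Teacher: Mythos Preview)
Your proposal is correct and follows essentially the same approach as the paper: induction on $m$, the same column operations from Proposition~\ref{prop:aSigmaAdapted} yielding the block lower-triangular form with diagonal blocks $M_m[A(\Sigma_i),\Sigma'_i]$, and the inductive hypothesis applied block by block to recover $A(\Sigma')=\bigcup_{e\in\{0,1,2\}}\{e\}\times A(\Sigma'_e)$. The paper's write-up is terser and first works with the full row set $\{0,1,2\}^{m+1}$ before restricting, whereas you work directly with $A(\Sigma)$; your explicit ``projection'' argument explaining why the off-diagonal $*$ blocks do not interfere is a detail the paper leaves implicit.
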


 
\begin{proof}
We proceed by induction on $m$. The base case $m=1$
is an immediate consequence of the definition of $A(\Sigma)$.

\smallskip \noindent
Assume that result holds for $m$
and consider $\Sigma'\subseteq \Sigma \subseteq \{-1,0,1\}^{m+1}$.
Define as in Definition~\ref{def:aSigma}, 
$\Sigma'_1, \Sigma'_2$ and $\Sigma'_3$.
Observe that for all $i$, $\Sigma'_i\subseteq \Sigma_i$. 
Consider matrix $M_{m+1}[\{-1,0,1\}^{m+1},\Sigma']$.
After performing the same linear transformations
on the columns as those of the previous proof, 
we obtain the following matrix:
$$\left( 
\begin{array}{ccc} M_m[ \{0,1,2\}^m, \Sigma'_1] & 0 & 0 \\
* & M_m[ \{0,1,2\}^m, \Sigma'_2] & 0\\
* & * & M_m[ \{0,1,2\}^m,\Sigma'_3]
\end{array}\right) 
$$
Thus the first maximal set of independent rows of this
matrix will be obtained by the first maximal
sets of independent rows in the three diagonal blocks.
Applying the induction hypothesis, this
corresponds to the first maximal set of independent rows of
the following matrix:
$$\left( 
\begin{array}{ccc} M_m[ A(\Sigma_1), \Sigma'_1] & 0 & 0 \\
* & M_m[ A(\Sigma_2), \Sigma'_2] & 0\\
* & * & M_m[ A(\Sigma_3),\Sigma'_3]
\end{array}\right) 
$$
which (by the inverse linear transformations)
is equivalent to looking for the first $|\Sigma'|$ linearly independent rows
of matrix $M_m[A(\Sigma),\Sigma']$.

\qed
\end{proof}

Algorithm~\ref{algo-sign-realization} implements the whole method
developped above.

\begin{algorithm2e}
\DontPrintSemicolon

\SetKwFunction{Sign}{Sign}
\SetKwFunction{Degree}{Degree}
\SetKwFunction{SignRealization}{SignRealization}
\SetKwFunction{IntRem}{IntRem}
\SetKwFunction{PmVPol}{PmVPol}

{\SignRealization}$(\D,P,p,\mathcal Q)$: a non null vector with its support\;
\KwIn{$P$, a non null polynomial in $\D[X]$ with degree $p$}
\KwIn{$\mathcal Q=\{(Q_1,q_1),\ldots,(Q_m,q_m)\}$, a family of
non null polynomials in $\D[X]$}
\BlankLine
\KwOut{the vector counting the sign realizations for $\mathcal Q$  by the roots of $P$}
\BlankLine
\KwData{$e_1,\ldots,e_m$ degrees in $\{0,1,2\}$}
\KwData{${\bf TaQ}$ a vector indexed by vectors of degrees} 
\KwData{${\bf nb}$ a vector indexed by vectors of signs} 
\KwData{$R$, a polynomial in $\D[X]$, $r$ a degree}
\KwData{${\bf M}$, an integer matrix}
\KwData{$extA, A,A_1,A_2,A_3$, sets of vectors of degrees}
\KwData{$ext\Sigma, \Sigma,\Sigma_1,\Sigma_2,\Sigma_3$, sets of vectors of signs}
\BlankLine
 \For {$e_m {\bf~in~} \{0,1,2\}$}
   {
     $R \leftarrow P'Q_m^{e_m}$;
     $(R,r) \leftarrow \IntRem(\D,R,e_mq_m+p-1,P,p)$\;
     \tcp{see Algorithm~\ref{algo-intrem} for \IntRem} 
     ${\bf TaQ}[e_m]\leftarrow \PmVPol(\D,P,p,R,r)$\;   
   }
   ${\bf nb}\leftarrow {\bf M}_1^{-1}\cdot{\bf TaQ}$\;
   \lIf( // $P$ has no roots){${\bf nb}={\bf 0}$}{\Return $\emptyset,-$}
   $\Sigma \leftarrow supp({\bf nb})$\;
   \lIf {$|\Sigma|=1$}
   {$A \leftarrow \{0\}$}
   \lElseIf {$|\Sigma|=2$}
   {$A \leftarrow \{0,1\}$}
   \lElse
   {$A \leftarrow \{0,1,2\}$}
   ${\bf nb}\leftarrow {\bf nb}_{|\Sigma}$;
   ${\bf M}\leftarrow {\bf M}_{1|A\times \Sigma}$\;
\For {$i {\bf~from~} m-1 {\bf~downto~} 1$}
{
   $ext\Sigma \leftarrow \{-1,0,1\} \times \Sigma$; 
   $extA \leftarrow \{0,1,2\} \times A$;
   ${\bf extM}\leftarrow {\bf M}_1 \otimes {\bf M}$\; 
   \For {$(e_i,\ldots,e_m) {\bf~in~} extA$}
   {
   
     $R \leftarrow P'\prod_{i\leq j\leq m} Q_j^{e_j}$;
     $(R,r) \leftarrow \IntRem(\D,R,\sum_{i\leq j\leq m}q_je_j,P,p)$\;
     ${\bf TaQ}[(e_i,\ldots,e_m)]\leftarrow \PmVPol(\D,P,p,R,r)$\;   
   }
   ${\bf nb}\leftarrow {\bf extM}^{-1}\cdot{\bf TaQ}$\;
   $\Sigma_1\leftarrow \Sigma$\;
   $\Sigma_2\leftarrow \{\sigma \in \Sigma \mid 
   |\{(i,\sigma) \in supp({\bf nb})\}|\geq 2\}$;
   $\Sigma_3\leftarrow \{\sigma \in \Sigma \mid 
   |\{(i,\sigma) \in supp({\bf nb})\}|\geq 3\}$\;
   $A_1\leftarrow A$\;
   $A_2\leftarrow$ the indexes of the first $|\Sigma_2|$ linearly independent rows of 
   ${\bf M}_{|A\times \Sigma_2}$\;
   $A_3\leftarrow$ the indexes of the first $|\Sigma_3|$ linearly independent rows of 
   ${\bf M}_{|A_2\times \Sigma_3}$\;
   $\Sigma\leftarrow supp({\bf nb})$\;
   $A \leftarrow \{0\}\times A_1 \cup \{1\}\times A_2
   \cup \{2\}\times A_3$\;
   ${\bf nb}\leftarrow {\bf nb}_{|\Sigma}$;
   ${\bf M}\leftarrow {\bf extM}_{|A\times \Sigma}$\;   
}
\Return $\Sigma, {\bf nb}$
\caption{Computing sign realizations of family $\mathcal Q$ at roots of $P$}
\label{algo-sign-realization}
\end{algorithm2e}

\paragraph*{Defining and computing encodings for roots}

\begin{definition}[Thom-encoding] Let $P \in \D[X]$ with 
$deg(P)=p>0$ and $x\in \R$. The $P$-encoding of $x$
is the vector: 
$$\sigma_P(x)=(sign(P(x)),sign(P'(x)),\ldots, sign(P^{(p)}(x))).$$
\end{definition}

A $P$-code is a vector of signs indexed by $\{0,\ldots,deg(P)\}$.

\begin{proposition} 
\label{prop:Thom}
Let $P \in \D[X]$ and $\sigma$ be a $P$-code.
Then:
\begin{itemize}
 \item $\sigma_P^{-1}(\sigma)$ is either empty, a point or an open interval.
 \item Let $x\neq x'$ be two roots of $P$. Then $\sigma_P(x)\neq \sigma_P(x')$.
 \item Let $x,x'$ with $\sigma_P(x)\neq \sigma_P(x')$. Then $x<x'$ if
   and only if, denoting $k$ the largest index with
   $\sigma_P(x)[k]\neq \sigma_P(x')[k]$:
 	\begin{enumerate}
        \item either $\sigma_P(x)[k+1]=1$ and $\sigma_P(x)[k]<
          \sigma_P(x')[k]$;
        \item or $\sigma_P(x)[k+1]=-1$ and $\sigma_P(x)[k]>
          \sigma_P(x')[k]$.
	\end{enumerate}
\end{itemize}
\end{proposition}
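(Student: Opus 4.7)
The plan is to prove the three items in order, with claim 1 doing the main work and claims 2, 3 following as relatively direct corollaries.

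For claim 1, I will proceed by induction on $p=\deg(P)$. The base case $p=1$ is immediate: $P(x)=ax+b$ with $a\neq 0$, so $\sigma_P(x)=(\sign(ax+b),\sign(a))$, and the preimage of any sign vector is either empty, the single point $-b/a$, or an open half-line. For the inductive step, observe that the tail $(\sigma[1],\ldots,\sigma[p])$ of a $P$-code is exactly a $P'$-code, and $P'$ has degree $p-1$, so by the induction hypothesis the set $I$ where the tail is realized is empty, a point, or an open interval. Two subcases then arise. If $\sigma[1]=0$, then $I$ cannot be an open interval (otherwise $P'$ would vanish on an interval, hence be the zero polynomial, contradicting $p\geq 1$ unless $p=0$), so $I$ is empty or a point and $\sigma_P^{-1}(\sigma)$ is empty or a point. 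If $\sigma[1]\neq 0$, then $P'$ has constant nonzero sign on $I$, so $P$ is strictly monotone on $I$; the subset of $I$ on which $\sign(P)=\sigma[0]$ is therefore empty, a single point (when $\sigma[0]=0$), or an open sub-interval (when $\sigma[0]=\pm 1$).

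For claim 2, suppose $x\neq x'$ are two roots of $P$ with the same $P$-code. Then $\sigma_P^{-1}(\sigma_P(x))$ contains both, so by claim 1 it must be an open interval $J$ containing $x$ and $x'$. But $\sign(P)=0$ on $J$ means $P$ vanishes on the interval $[x,x']$, contradicting the fact that $P$ is a nonzero polynomial of finite degree.

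For claim 3, first note that $P^{(p)}$ is a nonzero constant, so $\sigma_P(x)[p]=\sigma_P(x')[p]$ for all $x,x'$; thus the largest index $k$ where $\sigma_P(x)$ and $\sigma_P(x')$ differ satisfies $k\leq p-1$ and $P^{(k+1)}$ is well-defined. The vectors $\sigma_{P^{(k+1)}}(x)$ and $\sigma_{P^{(k+1)}}(x')$ coincide (they are the common suffix of $\sigma_P(x)$ and $\sigma_P(x')$ starting at index $k+1$), so by claim 1 applied to $P^{(k+1)}$ the preimage containing $x$ and $x'$ is an open interval on which $P^{(k+1)}$ has constant sign equal to $\sigma_P(x)[k+1]$. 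Assuming $x<x'$, if $\sigma_P(x)[k+1]=+1$ then $P^{(k)}$ is strictly increasing on $[x,x']$, giving $P^{(k)}(x)<P^{(k)}(x')$; combined with the fact that their signs differ, this forces $\sigma_P(x)[k]<\sigma_P(x')[k]$. The case $\sigma_P(x)[k+1]=-1$ is symmetric. The converse direction follows by applying this argument after swapping $x$ and $x'$.

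The main technical point to keep track of carefully is the induction in claim 1: one must argue that when $\sigma[1]=0$ the set $I$ produced by induction cannot be a full open interval, which relies on the nonvanishing of $P'$ on any open interval — the other case-splits are routine monotonicity arguments.
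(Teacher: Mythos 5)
Your overall approach matches the paper's: induction on $\deg(P)$ for the first item, the second as an immediate corollary, and the third by applying the first item to $P^{(k+1)}$ and deducing strict monotonicity of $P^{(k)}$ on the interval containing both points. For item~1 you are in fact slightly more careful than the paper, which simply asserts that when the $P'$-preimage is an interval we must have $\sigma[1]\neq 0$, whereas you actually rule out the case $\sigma[1]=0$ by observing that $P'$ would then vanish identically on an open interval. Item~2 is the same argument.

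There is, however, a small but genuine omission in your proof of item~3. You split on $\sigma_P(x)[k+1]=+1$ and $\sigma_P(x)[k+1]=-1$, but never rule out $\sigma_P(x)[k+1]=0$. Without that, the ``if and only if'' is not established: if both points had $\sigma_P(\cdot)[k+1]=0$ then neither of the two enumerated cases would apply, yet one of $x<x'$ or $x>x'$ certainly holds. The needed argument is exactly the one you already used twice: $\sigma_{P^{(k+1)}}(x)=\sigma_{P^{(k+1)}}(x')$, so by item~1 there is an open interval containing both $x$ and $x'$ on which $P^{(k+1)}$ has constant sign; if that sign were $0$, then $P^{(k+1)}$ would vanish on an open interval and hence be the zero polynomial, contradicting $\deg(P^{(k+1)})=p-k-1\geq 0$. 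Equivalently (as the paper phrases it) if $P^{(k+1)}(x)=P^{(k+1)}(x')=0$ with $x\neq x'$, item~2 applied to $P^{(k+1)}$ would force their $P^{(k+1)}$-encodings to differ, contradicting equality of the suffixes. You should add one of these observations before the case split; once that is done the proof is complete.
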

\begin{proof}
We proceed by induction on the degree of $P$.
The case $deg(P)=1$ is obvious. Assume that it is valid
for all $P$ such that $deg(P)\leq i$. Consider $P$
with $deg(P)=i+1$. Apply the inductive hypothesis on
$\sigma$ restricted to its $i$ last components, denoted $\sigma'$, 
and on $P'$. When $\sigma_{P'}^{-1}(\sigma')$ is empty or a point
then the result is immediate. When $\sigma_{P'}^{-1}(\sigma')$
is an interval, then $\sigma[1]\neq 0$. Thus $P(x)$ is a strictly
monotonous function on the interval which meets
0 at most once. This implies the result.

\smallskip \noindent
The second assertion is a direct consequence of the first assertion.

\smallskip \noindent Considering the third assertion,
$\sigma_{P^{(k+1)}}(x)= \sigma_{P^{(k+1)}}(x')$.
Since $x\neq x'$, the second assertion implies 
that $\sigma_{P^{(k+1)}}(x)\neq 0$.\\
Since $P^{(k+1)}$ is constant in $[\min(x,x'),\max(x,x')]$, this
implies the third assertion.

\qed
\end{proof}
\begin{algorithm2e}
\DontPrintSemicolon

\SetKwFunction{RootCoding}{RootCoding}
\SetKwFunction{PmVPol}{PmVPol}
\SetKwFunction{SignRealization}{SignRealization}

{\RootCoding}$(\D,P,p,Q,q)$: a list\;
\KwIn{$P,Q$, non null polynomials in $\D[X]$ with respective degrees $p,q$}
\KwOut{a list of the $Q$-encoding of roots of $P$}
\KwData{ $(s_0,\ldots,s_q)$ a vector of signs}
\KwData{${\bf nb}$ a vector indexed by vectors of signs} 
\KwData{$\Sigma$ a set of vectors of signs} 
\BlankLine
$(\Sigma,{\bf nb})\leftarrow \SignRealization(\D,P,p,\{(Q^{(0)},q),\ldots,(Q^{(q)},0)\})$\;
Order the $Q$-encodings $(s_0,\ldots,s_q)$ 
of the support $\Sigma$ of {\bf nb}\; 
using Proposition~\ref{prop:Thom} and duplicating them w.r.t. ${\bf nb}[(s_0,\ldots,s_q)]$\;
\Return this list of encodings
\caption{Computing the $Q$-encoding of roots of $P$}
\label{algo-rootcoding-inD}
\end{algorithm2e}

\begin{example}
  Let us consider the $P_1$-encoding of reals for $P_1=\sqrt{5}X^2-1$.
  First remark that the second derivative is always positive, hence
  the third component of the $P_1$-encoding of any real number is
  always $+1$.  This encoding divides the real line into seven
  intervals:
\begin{itemize}[label=\textbullet]
\item $]-\infty,-\frac1{\sqrt[4]{5}}[$ is encoded into $(+1,-1,+1)$,
  since for $x$ in this interval, $P(x)$ is positive but decreasing.
\item The first root $[-\frac1{\sqrt[4]{5}},-\frac1{\sqrt[4]{5}}]$ is
  encoded into $(0,-1,+1)$.
\item $]-\frac1{\sqrt[4]{5}},0[$ corresponds to $(-1,-1,+1)$.
\item The point $[0,0]$ is encoded by $(-1,0,+1)$.
\item $]0,\frac1{\sqrt[4]{5}}[$ corresponds to $(-1,+1,+1)$.
\item The second root $[\frac1{\sqrt[4]{5}},\frac1{\sqrt[4]{5}}]$ is
  encoded into $(0,+1,+1)$.
\item $]\frac1{\sqrt[4]{5}},+\infty[$ is encoded into $(+1,+1,+1)$.
\end{itemize}
\end{example}

As a consequence of our previous developments, we are now in position
to perform two main computations 
in $\D[X]$: (1) determining the
number of roots of a polynomial $P$ and computing their $P$-encoding,
and (2) computing the $Q$-encoding of roots of a polynomial $P$.  Both
results are obtained by Algorithm~\ref{algo-rootcoding-inD}.  For the
first goal it is sufficient to call
${\tt PmVPol}(P,P')$ and if the result is non null
to call ${\tt RootCoding}(P,P)$.

\subsection{Triangular systems}
\label{subsec:triangular}

While we only stated the effective properties of (a representation of)
$\D$ in the previous parts, we now consider specific representations
of real subrings of the form $\D = \Q[\alpha_1, \ldots,
\alpha_{\ell}]$ where the $\alpha_i$'s are real algebraic numbers.
Such representations are called triangular systems and we will show
(in Proposition~\ref{prop:sign-eff}) that they are sign-effective.  In
the sequel, the leading coefficient of $P =\sum_{i\leq p}a_iX^i$ in
$\D[X]$ with $deg(P)=p$ is denoted $lcof(P)= a_p$.  Note that the
leading coefficient of a polynomial $P$ in $\Q[X_1, \ldots,
X_{i-1}][X_i]$ is itself a polynomial in $\Q[X_1, \ldots,X_{i-1}]$.

\begin{definition}[Triangular system]
Let $((n_i,P_i,p_i))_{i=1}^{\ell}$ such that 
for all $i$, $n_i$ is a positive integer and
$P_i \in \Q[X_1,\ldots,X_{i-1}][X_i]$ with $deg(P_i)=p_i>0$.
Let $(\alpha_1,\ldots,\alpha_{\ell})$ be a sequence of reals.
Then $((n_i,P_i,p_i))_{i=1}^{\ell}$ is a 
triangular system of level $\ell$
for $(\alpha_1,\ldots,\alpha_{\ell})$ if:
\begin{itemize}
	\item $\alpha_1$ is the $n_1^{th}$ root of $P_1$ whose degree
	is $p_1$;
	\item For $1\leq i< \ell$, 
	$P_{i+1}(\alpha_1,\ldots,\alpha_i)$ has degree $p_i$
	and $\alpha_{i+1}$ is the $n_{i+1}^{th}$ root of 
        polynomial $P_{i+1}(\alpha_1,\ldots,\alpha_{i})
        \in \Q[\alpha_1,\ldots,\alpha_{i}][X_{i+1}]$.
\end{itemize}
\end{definition}

\begin{algorithm2e}
\DontPrintSemicolon

\SetKwFunction{Sign}{Sign}
\SetKwFunction{PmVPol}{PmVPol}

{\PmVPol}$(\ell,\mathcal T,P,p,Q,q)$: an integer\; 
\KwIn{$\ell$, the current level} 
\KwIn{$\mathcal T=\{(n_i,P_i,p_i)\}_{i=1}^{\ell}$ a
  triangular system for $(\alpha_1,\ldots,\alpha_{\ell})$}
\KwIn{$P,Q$, polynomials $\Q[X_1,\ldots,X_{\ell}][X_{\ell+1}]$ of
  degree $p$ and $q$ with $q<p$ such that
  $P(\alpha_1,\ldots,\alpha_{\ell})\neq 0$ and
  $Q(\alpha_1,\ldots,\alpha_{\ell})\neq 0$ when $q\geq 0$}
\KwOut{$PmV(sRes_p(P(\alpha_1,\ldots,\alpha_{\ell}),Q(\alpha_1,
\ldots,\alpha_{\ell})),$
  \hspace*{2cm}$\ldots,sRes_0(P(\alpha_1,\ldots,\alpha_{\ell}),
Q(\alpha_1,\ldots,\alpha_{\ell})))$}
\KwData{$j$ an index, $s_p,\ldots,s_0$ a sequence of signs} \BlankLine
\lIf( // consistently with Cauchy index
definition){$q=-\infty$}{\Return 0} $sRes(P,Q) \leftarrow
\SubResultants(\Q[X_1,\ldots,X_{\ell}],P,p,Q,q)$ \tcp{using
  Algorithm~\ref{algo-subresultant}} \lFor{$j {\bf ~from~} 0 {\bf ~to~
  }p$}{$s_j\leftarrow \Sign(\ell,\mathcal T,sRes_j(P,Q))$}
\Return{$PmV(s_p,\ldots,s_0)$} \tcp{by applying the definition}\;
\caption{Computing $PmV$ in triangular systems}
\label{algo-pmvpol}
\end{algorithm2e}

By convention, a triangular system of level 0 is the empty sequence.
Observe that \emph{a priori} we do not know how to decide
whether $((n_i,P_i,p_i))_{i=1}^{\ell}$  is a triangular system
for some sequence of reals. Given a triangular system
$((n_i,P_i,p_i))_{i=1}^{\ell}$, a representation of an item
of $\Q[\alpha_1,\ldots, \alpha_{\ell}]$ is nothing
but some polynomial $P \in \Q[X_1,\ldots,X_{l}]$
denoting $P(\alpha_1,\ldots,\alpha_{\ell})$.

\begin{example}
  The system $((2,X_1^2-X_1-1,2),(1,(2X_1-1)X_2^2-1,2))$ is a
  triangular system for the reals
  $(\frac{1+\sqrt5}2,-\frac{1}{\sqrt[4]5})$.  Indeed, polynomial
  $X_1^2-X_1-1$ has two roots $\frac{1-\sqrt5}2 < \frac{1+\sqrt5}2$.
  In addition, when $X_1 = \frac{1+\sqrt5}2$, polynomial
  $(2X_1-1)X_2^2-1$ becomes $P_1=\sqrt5 X_2^2-1$, with two roots
  $-\frac{1}{\sqrt[4]5} < \frac{1}{\sqrt[4]5}$.
\end{example}

\begin{proposition}\label{prop:sign-eff}
  Let $\ell\geq 0$ and $((n_i,P_i,p_i))_{i=1}^{\ell}$ such that for
  all $i$, $n_i$ is a positive integer and $P_i \in
  \Q[X_1,\ldots,X_{i-1}][X_i]$ with $deg(P_i)=p_i>0$.  Then we can
  decide whether $((n_i,P_i,p_i))_{i=1}^{\ell}$ is a triangular system
  for some $\{\alpha_i\}_{i=1}^{\ell}$.  Furthermore with this
  representation, the rings $\Q[\alpha_1,\ldots,\alpha_{\ell}]$ and
  $\Z[\alpha_1,\ldots,\alpha_{\ell}]$ are sign-effective.
\end{proposition}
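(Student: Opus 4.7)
The plan is to argue by induction on $\ell$, proving the two parts of the proposition together. The base $\ell=0$ is immediate: the empty tuple is the unique triangular system of level $0$, and $\Q$ and $\Z$ are sign-effective since the sign of a rational or integer is read directly from its representation.

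For the inductive step, assume $\mathcal T=((n_i,P_i,p_i))_{i=1}^{\ell}$ has been recognised as a valid triangular system for $(\alpha_1,\ldots,\alpha_\ell)$ and that $\Q[\alpha_1,\ldots,\alpha_\ell]$ is sign-effective. Given an extension $(n_{\ell+1},P_{\ell+1},p_{\ell+1})$, I would first verify the two requirements of the definition. The first, that $P_{\ell+1}(\alpha_1,\ldots,\alpha_\ell,X_{\ell+1})$ has degree exactly $p_{\ell+1}$ in $X_{\ell+1}$, reduces to testing non-vanishing of the leading coefficient $lcof(P_{\ell+1}) \in \Q[X_1,\ldots,X_\ell]$ at $(\alpha_1,\ldots,\alpha_\ell)$, which the inductive sign routine handles. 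The second, that the specialised polynomial $\bar P := P_{\ell+1}(\alpha_1,\ldots,\alpha_\ell,X_{\ell+1})$ has at least $n_{\ell+1}$ real roots, is settled by running Algorithm~\ref{algo-rootcoding-inD} on $\bar P$ over the sign-effective domain $\D := \Q[\alpha_1,\ldots,\alpha_\ell]$ and comparing the length of the returned encoding list to $n_{\ell+1}$. If both checks succeed, $\alpha_{\ell+1}$ is defined as the $n_{\ell+1}$-th real root of $\bar P$.

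Sign-effectiveness of $\Q[\alpha_1,\ldots,\alpha_{\ell+1}]$ then proceeds as follows. Given $Q \in \Q[X_1,\ldots,X_{\ell+1}]$, view it as a polynomial $\bar Q$ in $X_{\ell+1}$ whose coefficients lie in $\Q[\alpha_1,\ldots,\alpha_\ell]$. Use the inductive sign test on each coefficient to either conclude that $\bar Q$ collapses to the zero polynomial (returning sign $0$) or to determine its true degree. In the non-trivial case, invoke Algorithm~\ref{algo-sign-realization} with $P=\bar P$ and family $(\bar Q, \bar P', \ldots, \bar P^{(p_{\ell+1})})$; this returns, for every real root $z$ of $\bar P$, the vector of signs of all listed polynomials at $z$. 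The signs coming from $\bar P', \ldots, \bar P^{(p_{\ell+1})}$ assemble into the $\bar P$-encoding of $z$, which by Proposition~\ref{prop:Thom} both uniquely identifies and linearly orders the real roots of $\bar P$. Selecting the $n_{\ell+1}$-th encoding pinpoints $\alpha_{\ell+1}$, and the associated $\bar Q$-sign is the desired $\sign(Q(\alpha_1,\ldots,\alpha_{\ell+1}))$. Sign-effectiveness of $\Z[\alpha_1,\ldots,\alpha_{\ell+1}]$ is immediate, since any $\Z[X_1,\ldots,X_{\ell+1}]$-representative is also a $\Q[X_1,\ldots,X_{\ell+1}]$-representative.

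The main obstacle I anticipate lies in the hidden integral-division assumption underlying the subresultant subroutines of Algorithm~\ref{algo-sign-realization}: polynomial representatives of elements of $\Q[\alpha_1,\ldots,\alpha_\ell]$ are not canonical, so care is required to ensure that intermediate divisions remain well defined on representatives. The algebraic nature of each $\alpha_i$ makes $\Q[\alpha_1,\ldots,\alpha_\ell]$ a field, so division by a non-zero element is intrinsically fine; effectively, I would reduce representatives modulo the triangular system (obtaining canonical $\deg_{X_i}<p_i$ normal forms) or work with fractions in $\Q(\alpha_1,\ldots,\alpha_\ell)$, relying on the inductive zero test to discharge non-triviality. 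Once this bookkeeping is in place, the whole construction is a direct lift of the $\D \subseteq \R$ algorithms of Subsection~\ref{subsec:algo-se} from a fixed sign-effective ring to the inductively built one.
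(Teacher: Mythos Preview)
Your proof is correct and follows the same inductive scheme as the paper: the base case is trivial, and at level $\ell+1$ you use sign-effectiveness of $\Q[\alpha_1,\ldots,\alpha_\ell]$ to check the degree of $P_{\ell+1}(\alpha_1,\ldots,\alpha_\ell,X_{\ell+1})$, count its real roots, and then for a query $Q$ compute sign realizations at those roots and select the $n_{\ell+1}$-th via Thom ordering. Two small divergences from the paper are worth noting.

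First, the paper's Algorithm~\ref{algo-sign} calls ${\tt RootCoding}$ with the \emph{query} polynomial $Q$ in the second slot, i.e.\ it computes the $Q$-encoding $(\sign Q,\sign Q',\ldots)$ of the roots of $P_{\ell+1}$ rather than your mixed family $(\bar Q,\bar P',\ldots,\bar P^{(p_{\ell+1})})$. Distinct roots of $P_{\ell+1}$ sharing a $Q$-encoding lie in a single interval where $\sign Q$ is constant, so duplication still yields the correct sign for the $n_{\ell+1}$-th root; your variant with the derivatives of $\bar P$ is equally valid and arguably cleaner.

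Second, your worry about integral division in $\Q[\alpha_1,\ldots,\alpha_\ell]$ is dissolved more simply than you propose: the paper's Algorithm~\ref{algo-pmvpol} computes the subresultants \emph{symbolically} in $\Q[X_1,\ldots,X_\ell]$, where integral division is effective (as noted after the ``Additional assumption'' paragraph), and only afterwards evaluates their signs at $(\alpha_1,\ldots,\alpha_\ell)$ via the recursive ${\tt Sign}$. No normal forms, field inversions, or canonical representatives are needed.
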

\begin{proof}
The proof is done by induction on $\ell$. The base case $\ell=0$ corresponds
to the case where the ring is $\Q$ or $\Z$ and so there is nothing to prove.

\noindent
For the inductive case, 
in order to check whether $((n_i,P_i,p_i))_{i=1}^{\ell+1}$ 
is a triangular system, we first check that
$((n_i,P_i,p_i))_{i=1}^{\ell}$ is a triangular system.
In the positive case $\Q[\alpha_1,\ldots,\alpha_{\ell}]$
is sign-effective so that we can check whether
$P_{\ell+1}(\alpha_1,\ldots,\alpha_{\ell})$ has degree $p_{\ell+1}$
and compute the number of roots of 
$P_{l+1}(\alpha_1,\ldots,\alpha_{\ell})$
by using ${\tt PmvPol}(l,\mathcal T,P_{l+1},p_{l+1},P'_{\ell+1},p_{\ell+1}-1)$ 
in $\Q[\alpha_1,\ldots,\alpha_{\ell}]$.
We have rewritten the corresponding algorithm 
(see Algorithm~\ref{algo-pmvpol})
in order to exploit the representation 
provided by Algorithm~\ref{algo-subresultant}.

\noindent
Assume that $((n_i,P_i,p_i))_{i=1}^{\ell+1}$ 
is a triangular system. Again using induction hypothesis
$\Q[\alpha_1,\ldots,\alpha_{\ell}]$ is sign-effective. 
So in addition to sign determination in 
$\Q[\alpha_1,\ldots,\alpha_{\ell}]$,
we are also able to compute {\tt Degree} and {\tt RootCoding}  
in this ring. Thus Algorithm~\ref{algo-sign} (applied at level $l+1$)
determines the sign of
$P(\alpha_1,\ldots,\alpha_{\ell+1})$ by computing the degree
of $P$ in $\Q[\alpha_1,\ldots,\alpha_{\ell}]$
and then determining the $P$-encodings
of roots of $P_{\ell+1}$ in $\Q[\alpha_1,\ldots,\alpha_{\ell}]$ and returning
the sign of $P$ corresponding to the $n_{\ell+1}^{th}$ root.

\qed
\end{proof}

The sign determination is then obtained by a set of mutually recursive
functions.  In order to clarify their behavior we have represented
their calls in Figure~\ref{fig:functioncalls}.

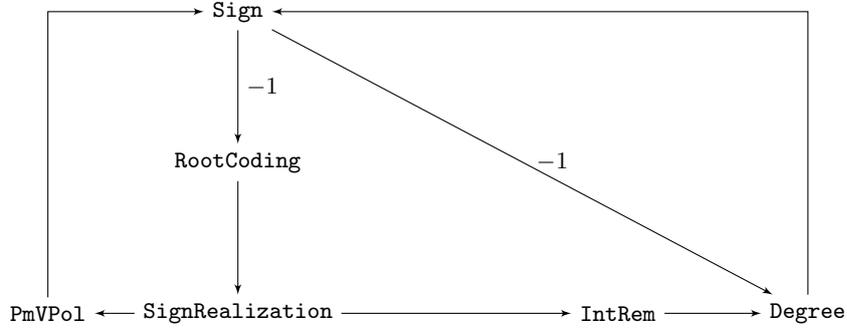
\begin{figure}[htbp]
\begin{center}
\begin{tikzpicture}[xscale=1,yscale=1]

\path (2.5,0) node[] (r) {{\tt RootCoding}};
\path (10,-2) node[] (d) {{\tt Degree}};
\path (7.5,-2) node[] (i) {{\tt IntRem}};
\path (2.5,2) node[] (s) {{\tt Sign}};
\path (2.5,-2) node[] (sr) {{\tt SignRealization}};
\path (0,-2) node[] (p) {{\tt PmVPol}};

 \draw[arrows=-latex'] (i) -- (d) node[pos=.5,above] {};
 \draw[arrows=-latex'] (r) -- (sr) node[pos=.5,above] {};
 \draw[arrows=-latex'] (sr) -- (i) node[pos=.5,above] {};
 \draw[arrows=-latex'] (sr) -- (p) node[pos=.5,below] {};
 \draw[arrows=-latex'] (d) -- (10,2) -- (s) node[pos=.5,above] {};
 \draw[arrows=-latex'] (p) -- (0,2) -- (s) node[pos=.5,below] {};

  \draw[arrows=-latex'] (s) -- (r) node[pos=.5,right] {$-1$};
  \draw[arrows=-latex'] (s) -- (d) node[pos=.5,right] {$\:-1$};

\end{tikzpicture}
\caption{Links between function calls with level $\ell$ changing.}
\label{fig:functioncalls}
\end{center}
\end{figure}

\begin{algorithm2e}
\DontPrintSemicolon

\SetKwFunction{Sign}{Sign}
\SetKwFunction{Degree}{Degree}
\SetKwFunction{RootCoding}{RootCoding}

{\Sign}$(\ell,\mathcal T,P)$: a sign\;
\KwIn{$P$, a polynomial in $\Q[X_1,\ldots,X_{\ell}]=
\Q[X_1,\ldots,X_{\ell-1}][X_{\ell}]$}
\KwIn{$\ell$, the current level}
\KwIn{$\mathcal T=\{(n_i,P_i,p_i)\}_{i=1}^{\ell}$  a 
triangular system
for $(\alpha_1,\ldots,\alpha_{\ell})$}
\KwOut{the sign of $P(\alpha_1,\ldots,\alpha_{\ell})$}
\KwData{$\Sigma$ a list of sign vectors}
\BlankLine
\lIf( // $P$ is a rational){${\ell}=0$}{\Return{$\Sign(\Q,P)$}}
$p \leftarrow \Degree({\ell}-1,\mathcal T_{\downarrow {\ell}-1},P)$\;
\tcp{$\mathcal T_{\downarrow {\ell}-1}$ is the restriction of $\mathcal T$ 
at level ${\ell}-1$}
\lIf {$p=-\infty$}{\Return 0}
$\Sigma=\RootCoding({\ell}-1,\mathcal T_{\downarrow {\ell}-1},P_{\ell},p_{\ell},P,p)$\;
Let $\vect{v}$ be the $n_{\ell}^{th}$ item of $\Sigma$\;
\Return{$\vect{v}[0]$}\;
\caption{Determining the sign in a triangular system.}
\label{algo-sign}
\end{algorithm2e}

\subsection{Building a cylindrical algebraic decomposition}
\label{subsec:decomposition}


We have the following result~\cite{Collins75}:


\begin{theorem}
  For every finite family of sets of polynomials $\P=\{\P_i\}_{i\leq
    n}$ such that $\P_i \subseteq \Q[X_1, \ldots, X_i]$, one can build
  a cylindrical algebraic decomposition of $\R^n$ adapted to $\P$ in
  2EXPTIME.
\end{theorem}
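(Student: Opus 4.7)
The plan is to organize the proof around the two classical stages of Collins' construction, plugging in the sign-effective machinery developed in Subsections~\ref{subsec:algo-se} and~\ref{subsec:triangular} at the lifting stage.

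\textbf{Stage 1 (Elimination).} First, I would describe a projection operator $\textit{Proj}$ that, given $\P_i \subseteq \Q[X_1,\ldots,X_i]$, produces a finite set of polynomials in $\Q[X_1,\ldots,X_{i-1}]$. The projection must contain at least the leading coefficients $\textit{lcof}(P)$ for $P \in \P_i$ (to track degree drops), the principal subresultant coefficients $sRes_j(P,\partial P/\partial X_i)$ (to track multiple roots), and the $sRes_j(P,Q)$ for pairs $P,Q \in \P_i$ (to track collisions between roots of distinct polynomials). Starting from $\P_n$ and iterating downwards, we produce enlarged families $\widetilde{\P}_i$ with $\widetilde{\P}_i \supseteq \P_i$ and such that for every $i < n$, any cell $C$ of level $i$ that is sign-invariant for $\widetilde{\P}_i$ satisfies the following \emph{lifting property}: each $P \in \P_{i+1}$, viewed as an element of $\Q[X_1,\ldots,X_i][X_{i+1}]$ specialized at points of $C$, has a constant number of distinct real roots in $X_{i+1}$ which vary continuously over $C$ and never collide. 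This is the technical heart of the argument and relies on Proposition~\ref{proposition:syha-gcd} together with a standard continuity argument: on a connected set where none of the listed subresultants vanish, the degree of the gcd of specializations is constant and roots are isolated and continuous.

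\textbf{Stage 2 (Lifting).} I would then build the tree of cells top-down, representing each cell by a sample point given as a triangular system in the sense of Subsection~\ref{subsec:triangular}. The root $\R^0$ is represented by the empty triangular system. Inductively, given a $\widetilde{\P}$-invariant cell $C$ at level $i$ with sample point $(\alpha_1,\ldots,\alpha_i)$ represented by a triangular system $\mathcal{T}$, I collect the polynomials $P \in \widetilde{\P}_{i+1}$, specialize them at $\mathcal{T}$, and apply Algorithm~\ref{algo-rootcoding-inD} in the sign-effective ring $\Q[\alpha_1,\ldots,\alpha_i]$ to obtain the ordered list $f_1(\alpha) < \cdots < f_r(\alpha)$ of real roots. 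By Stage~1, this determines exactly the $2r+1$ children of $C$; each root cell is given a new triangular system obtained by appending $(n_{i+1},P,p)$ to $\mathcal{T}$ (for appropriate $P \in \widetilde{\P}_{i+1}$ and root index $n_{i+1}$), and each open interval cell is given a sample point obtained for instance by a rational between consecutive roots (or $\pm\infty$-shift), computed via the Thom encoding of Proposition~\ref{prop:Thom}.

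\textbf{Stage 3 (Complexity).} Finally, I would bound the running time. Let $s$ and $d$ denote the total number of input polynomials and their maximum degree. Each application of $\textit{Proj}$ squares (up to a polynomial factor) the degree and multiplies the number of polynomials by a polynomial factor, so after $n$ iterations we have $|\widetilde{\P}| \leq s^{2^{O(n)}}$ polynomials of degree at most $d^{2^{O(n)}}$. The lifting tree then has doubly-exponentially many cells, and at each cell the sign-determination, subresultant computation, and root-coding routines (Algorithms~\ref{algo-subresultant}, \ref{algo-sign-realization}, \ref{algo-rootcoding-inD}, \ref{algo-sign}) run in time polynomial in the current data size, itself doubly exponential in $n$. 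The total time is therefore 2EXPTIME.

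The main obstacle is Stage~1: the precise definition of $\textit{Proj}$ and the verification of the lifting property. Everything in Stage~2 is essentially a repackaging of the algorithms already developed, and Stage~3 is a routine bookkeeping exercise once the degree growth in the projection is understood.
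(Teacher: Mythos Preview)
Your proposal is correct and follows essentially the same two-stage Collins architecture as the paper (elimination downwards, then lifting upwards via triangular systems and Thom encodings). One refinement you should be aware of: the paper's elimination operator $Elim_{X_k}$ works not with the original polynomials $P\in\P_k$ but with all their \emph{truncations} $Tru(P)$ (obtained by successively stripping leading terms whose coefficients are non-constant). Your projection includes $lcof(P)$ ``to track degree drops'', but once that coefficient vanishes on a cell the relevant polynomial is the truncated one, and it is \emph{its} subresultants with its derivative and with other (truncated) polynomials that control the root structure there; these must also lie in the projection for the lifting property to hold over every cell. You correctly flag the precise definition of $\textit{Proj}$ as the main obstacle, and this truncation closure is exactly the missing ingredient. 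The complexity bookkeeping is otherwise as you describe (the paper obtains degree $O(d^{3^n})$ rather than $d^{2^{O(n)}}$, but both are doubly exponential and yield 2EXPTIME).
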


We devote the rest of the subsection to the proof of this theorem. The
algorithm that builds the cylindrical algebraic decomposition of
$\R^n$ proceeds in two steps: the \emph{elimination step} and the
\emph{lifting step}. The elimination step ensures the existence of a
cylindrical algebraic decomposition while enlarging the set of polynomials of
polynomials $\P_i$.  Once $\P$ has been completed, the lifting step
provides an effective way to compute the cylindrical algebraic decomposition.
Accordingly, one considers the coefficients of polynomials in $\R$
during the elimination step and restrict them to belong to $\Q$ during
the lifting step.

\subsubsection{Elimination step.}

The following lemma establishes that the roots of a polynomial
are ``continuous'' w.r.t. the coefficients of the polynomial
when the degree of the polynomial remains constant.

\begin{lemma}
\label{lemma:rootcont}
Let $P \in \C[X_1,\ldots,X_{k-1}][X_k]$, $S\subseteq \C^{k-1}$ 
such that $deg(P(x))$ is constant over $x\in S$. Let $a\in S$
such that $z_1, \ldots, z_m$ are the roots of $P(a)$ with multiplicities
$\mu_1, \ldots,\mu_m$, respectively. 
Let $0<r<\min_{i\neq j}(|z_i-z_j|/2)$.
Then there exists an open neighborhood $U$ of $a$ such that for
$x \in U$, $P(x)$ has exactly $\mu_i$ roots counted with multiplicities
in the disc $D(z_i,r)$
for all $i\leq m$.
\end{lemma}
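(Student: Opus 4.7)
The plan is to invoke Rouché's theorem, which is the standard tool for such continuity-of-roots statements, together with the elementary fact that the coefficients of $P(x)$ depend continuously (in fact polynomially) on $x\in \C^{k-1}$.

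First I would fix $i$ and work near a single root $z_i$. Since the discs $D(z_j,r)$ are pairwise disjoint by the choice of $r$, the only zero of $P(a)$ in the closed disc $\overline{D(z_i,r)}$ is $z_i$ itself, which lies strictly inside. Thus $|P(a)(z)|$ attains a strictly positive minimum $\eps_i > 0$ on the compact circle $\partial D(z_i,r)$. Next, write $P(x)(z) = \sum_{j=0}^{p} c_j(x)\, z^j$, where $p=\deg(P(x))$ is constant on $S$ by hypothesis and each $c_j$ is a polynomial in $x$, hence continuous. On the compact set $\partial D(z_i,r)$, $|z|\leq |z_i|+r$ is bounded, so the map $x \mapsto \sup_{z\in\partial D(z_i,r)} |P(x)(z) - P(a)(z)|$ is continuous at $a$ and vanishes there. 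Hence there is an open neighborhood $U_i$ of $a$ (inside $S$) such that for all $x \in U_i$,
\[
\sup_{z\in \partial D(z_i,r)} |P(x)(z) - P(a)(z)| < \eps_i \leq \inf_{z\in \partial D(z_i,r)} |P(a)(z)|.
\]

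Rouché's theorem then applies on $D(z_i,r)$: $P(x)$ and $P(a)$ have the same number of zeros, counted with multiplicity, in $D(z_i,r)$. Since $P(a)$ has exactly $\mu_i$ zeros there (namely $z_i$ with multiplicity $\mu_i$), so does $P(x)$. Finally, I would set $U = \bigcap_{i=1}^{m} U_i$, which is an open neighborhood of $a$, and conclude that for every $x \in U$, $P(x)$ has exactly $\mu_i$ roots (with multiplicities) in $D(z_i,r)$ for each $i\leq m$.

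The only delicate point is the uniform estimate on each circle $\partial D(z_i,r)$, but this is immediate because the degree of $P(x)$ is constant on $S$: otherwise the leading coefficient could drop at $a$ and high-order monomials in $P(x)-P(a)$ might fail to be uniformly small. Under the constant-degree assumption, the finite sum $\sum c_j(x) z^j$ is controlled by the $\max_j |c_j(x)-c_j(a)|$ times a constant depending only on $p$, $|z_i|$ and $r$, giving the required continuity for free. No other subtleties arise, and the result follows.
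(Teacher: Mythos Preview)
Your proof via Rouch\'e's theorem is correct and is in fact the standard complex-analytic argument for this statement. One minor remark: your ``delicate point'' about constant degree is actually not needed for the estimate you use. Since $P$ is a fixed element of $\C[X_1,\ldots,X_{k-1}][X_k]$, its coefficients in $X_k$ are polynomials in $x$ of a fixed finite list, so $P(x)(z)-P(a)(z)$ tends to $0$ uniformly on any compact set of $z$-values as $x\to a$, regardless of whether some top coefficient happens to vanish. The constant-degree hypothesis is only used implicitly to ensure that the $\mu_i$ roots you find in the discs account for \emph{all} roots of $P(x)$, but the lemma as stated does not even require that.

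The paper takes a genuinely different route. Instead of Rouch\'e, it first normalizes $P(x)$ to a monic polynomial (using that the leading coefficient is nonvanishing on $S$), treats the base case $X_k^{\mu}$ by a direct estimate, and then argues via the multiplication map $\varphi(Q,R)=QR$ on coefficient spaces: its Jacobian is (up to sign) the resultant $sRes_0(Q,R)$, hence nonzero when $Q,R$ are coprime, so the inverse function theorem gives a local differentiable inverse. Iterating this factorization over the coprime factors $(X_k-z_i)^{\mu_i}$ yields the desired local splitting of roots. This approach is more algebraic and ties directly into the subresultant machinery that the paper develops elsewhere; it also gives a bit more (differentiability of the local factorization) than mere counting of roots. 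Your Rouch\'e argument, on the other hand, is shorter, self-contained, and does not need the reduction to the monic case or the inverse function theorem.
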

\begin{proof}
Since the degree of $P$ is constant we can divide the coefficients
by the leading coefficient, obtaining a monic polynomial with same
roots and multiplicities and coefficients being rational functions.

\noindent
Assume that $P=X_k^\mu$. Consider $Q=X^\mu-\sum_{i< \mu} b_i X^i$ with 
$\delta=\max_{i<\mu} |b_i|<\frac{\min(1,r^\mu)}{\mu}$.
Since $\delta<\frac{1}{\mu}$, any root of $Q$ has a module less than one.
Let $z$ be such a root. Then $z^\mu=\sum_{i< \mu} b_i z^i$. 
So $|z^\mu|\leq \mu\delta<r^\mu$ which implies $|z|<|r|$. 

\noindent
Let us consider the mapping from pairs $(Q,R)$ of monic polynomials of
degree respectively $q$ and $r$ to their product $\varphi(Q,R)=QR$ of
degree $q+r$ (viewed as mapping of their coefficients).  This mapping
is differentiable. It is routine to check that the Jacobian matrix of
this mapping is equal or opposite to the subresultant $Sres_0(Q,R)$
and so it locally admits a differentiable inverse if $Q$ and $R$ are
coprime.  Therefore, factoring $P=QR$ such that $Q$ and $R$ are
coprime, there exists some neighborhoods $\mathcal{V}_Q$,
$\mathcal{V}_R$ respectively of $Q$ and $R$, such that $\mathcal{V} =
\varphi(\mathcal{V}_Q \times \mathcal{V}_R)$ is a neighborhood of $P$.

\noindent
By iteration, the polynomial $P_0=(X_k-z_1)^{\mu_1}\cdots(X_k-z_m)^{\mu_m}$
admits an open neighborhood $\mathcal{V}$ of its coefficients 
such that every monic polynomial $P_1 \in \mathcal{V}$ admits
a decomposition $P_1=Q_1\ldots Q_m$ with every $Q_i$ of degree $\mu_i$
and whose roots belong to the disc $D(z_i,r)$. Since the discs have no
intersection, every disc contains exactly $\mu_i$ roots counted
with multiplicities.

\noindent
Since the coefficients of $P$ are rational functions of $X_1,\ldots X_{k-1}$ 
and so continuous, there is a neighborhood $U$ of $a$ 
that fulfills the conclusion of the lemma.
\qed
\end{proof}

The next proposition establishes that the real roots 
of a set of polynomials 
are ``continuous'' w.r.t. the coefficients of the polynomials
when the degrees of some appropriate polynomials (including the original ones) 
remain constant.

\begin{proposition}
\label{proposition:rootcontset}
Let $P_1,\ldots,P_s \in \R[X_1,\ldots,X_{k-1}][X_k]$, $S\subseteq \R^{k-1}$
connected. Assume that over
$x \in S$, 
for all $1\leq i,j \leq s$,
$P_i(x)$ is not identically 0, $deg(P_i(x))$,
$deg(gcd(P_i(x),P_j(x))$, $deg(gcd(P_i(x),P_i'(x))$ are both constant.

\noindent
Then there exist $\ell$ (with $\ell$ possibly null) continuous functions
$f_1<\cdots<f_{\ell}$ from $S$ to $\R$ such that for every
$x\in S$, the set of real roots of $\prod_{j\leq s} P_j(x)$ is exactly
$\{f_1(x),\ldots,f_{\ell}(x)\}$.

\noindent
Moreover for all $i \leq \ell, j \leq s$, 
the multiplicity of the (possible) root  $f_i(x)$
of $P_j(x)$ is constant over $x \in S$.
\end{proposition}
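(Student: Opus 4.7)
The plan is to establish the conclusion locally at each $a \in S$ and then use connectivity to glue. Fix $a \in S$, let $z_1 < \cdots < z_\ell$ be the distinct real roots of $\prod_j P_j(a)$, and let $\mu_{i,j}$ denote the multiplicity (possibly $0$) of $z_i$ as a root of $P_j(a)$. The target is a neighborhood $U$ of $a$ and continuous $f_i : U \to \R$ with $f_i(a) = z_i$ such that for $x \in U$, the real roots of $\prod_j P_j(x)$ are exactly $\{f_1(x), \dots, f_\ell(x)\}$, with $f_i(x)$ a root of $P_j(x)$ having the same multiplicity $\mu_{i,j}$.

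First, I would apply Lemma~\ref{lemma:rootcont} to each $P_j$, which is legal because $\deg(P_j(x))$ is constant on $S$: around each complex root $w$ of $P_j(a)$ of multiplicity $\nu$ and each sufficiently small $r > 0$, one gets a neighborhood of $a$ where $P_j(x)$ has exactly $\nu$ roots (with multiplicities) in the disc $D(w, r)$. Intersecting over the finitely many roots of the finitely many $P_j$'s gives a common neighborhood $U_0$ of $a$. Here the constancy of $\deg(\gcd(P_j, P_j'))$ enters: since this degree equals $\deg(P_j)$ minus the number of distinct complex roots of $P_j$, its constancy forces the number of distinct complex roots of $P_j(x)$ to stay constant, and combined with Lemma~\ref{lemma:rootcont} this implies each disc $D(w, r)$ must contain exactly one distinct complex root of $P_j(x)$, necessarily of multiplicity $\nu$. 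Likewise, the constancy of $\deg(\gcd(P_i, P_j))$---which equals the sum over common roots of the minima of their multiplicities in $P_i$ and $P_j$---prevents a common root of $P_i$ and $P_j$ from splitting into two distinct nearby roots, and vice versa.

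Next I would show that real roots stay real. The coefficients of each $P_j$ being real, the complex roots of $P_j(x)$ are symmetric under conjugation; if the single distinct root of $P_j(x)$ tracked inside a disc $D(z_i, r)$ around a real $z_i$ drifted off the real axis, its conjugate would lie in the same disc, giving two distinct complex roots there and contradicting the one-root-per-disc fact established above. Hence this tracked root remains real, with constant multiplicity $\mu_{i,j}$ in $P_j$. Choosing $r$ smaller than half the minimum distance between the $z_i$'s makes the discs around them disjoint, so the real roots of $\prod_j P_j(x)$ retain the same ordering, yielding continuous functions $f_1 < \cdots < f_\ell$ on some neighborhood $U \subseteq U_0$ of $a$.

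To globalize, connectivity of $S$ is invoked: the subset of $S$ on which the number of distinct real roots of $\prod_j P_j$ equals $\ell$ and the multiplicity data equals $(\mu_{i,j})$ is open by the local analysis above, and varying $(\ell, (\mu_{i,j}))$ partitions $S$ into open sets, so by connectedness only one tuple occurs. The locally defined $f_i$'s agree on overlaps by uniqueness of the continuous tracking and thus glue into the required global continuous functions $f_1 < \cdots < f_\ell$ on $S$. The main obstacle is the ``no conjugate escape'' step of the third paragraph: ruling out a real root becoming a pair of complex conjugate roots is exactly where both the constancy of $\deg(\gcd(P_j, P_j'))$ (to keep the distinct-root count steady) and the real-coefficient conjugation symmetry are genuinely needed---without the former, one could have the transition exemplified by $X^2 + t$ as $t$ changes sign.
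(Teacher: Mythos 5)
Your proposal is correct and takes essentially the same route as the paper's proof: track roots locally via Lemma~\ref{lemma:rootcont}, use the constancy of $\deg(\gcd(P_j,P_j'))$ to fix the number of distinct complex roots per polynomial, use the constancy of $\deg(\gcd(P_i,P_j))$ to prevent shared roots from drifting apart, invoke conjugation symmetry to keep real roots real, and finally glue the local tracking functions over the connected set $S$. The only cosmetic difference is that you phrase the globalization as a partition of $S$ into open sets indexed by the combinatorial data, whereas the paper simply notes the local constancy of the relevant counts; the content is identical.
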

\begin{proof}
Let $a \in S$  and $z_1(a),\ldots,z_m(a)$ be the roots in $\C$
of $\prod_{j\leq s} P_i(a)$ with  $\mu_i^j$  being the multiplicity
of $z_i(a)$ for $P_j(a)$. The degree of $R_{jk}(a)=gcd(P_j(a),P_k(a))$
is $\sum_{i\leq m} \min(\mu_i^j,\mu_i^k)$ and $\min(\mu_i^j,\mu_i^k)$
is the (possibly null) multiplicity of $z_i(a)$ for $R_{jk}(a)$. 

\noindent
Pick $r>0$ such that the discs $D(z_i(a),r)$ are disjoint. 
Observe that since  $deg(gcd(P_j(x),P_j'(x))$ is constant
over $x \in S$ the number
of distinct roots of $P_j(x)$ is constant over $x\in S$.
Let $i,j$ such that $\mu_i^j>0$, applying
Lemma~\ref{lemma:rootcont} and the previous
observation, there is a neighborhood $U$ of $a$
such that for all $x\in U$, $D(z_i,r)$ contains exactly
a root, denoted $z_i^j(x)$,
of $P_j(x)$ with multiplicity $\mu_i^j$. Assume there
exists $k\neq j$ with $\mu_i^k>0$,
since $deg(R_{jk}(x))$ is constant over $x \in S$, $z_i^j(x)=z_i^k(x)$
for all $x\in U$.
Otherwise for such an $x$ where the equality does not hold
$deg(R_{jk}(x))<deg(R_{jk}(a))$. So we can omit the superscript $j$
in $z_i^j(x)$ (defined when $\mu_i^j>0$).

\noindent
If $z_i(a)$ is real then $z_i(x)$ is real otherwise its conjugate
would be another root in $D(z_i(a),r)$. If $z_i(a)$ is complex,
its conjugate being also a root, $D(z_i(a),r)$ and $D(\overline{z_i(a)},r)$
are disjoint and so $z_i(x)$ is not real. Hence the number of 
real roots of $(x)$ is constant over $x \in U$. As the number of real roots
is locally constant and $S$ is connected then the number of
real roots of $\prod_{j\leq s} P_j(x)$ is constant over $x \in S$, say $\ell$.

\noindent
Let $f_i(x)$, for $i\leq l$
be the function that associates with $x$ the $i^{th}$
real root of $\prod_{j\leq s} P_j(x)$ in increasing order. Since $r$
could be chosen arbitrarily small, $f_i$ is
continuous. As the multiplicity of $f_i(x)$
w.r.t. any $P_j(x)$ and $Q(x)$ is locally constant,
it is constant over $x \in S$.
\qed
\end{proof}

%
%
%

The next definition is a basic construction that will
be the atomic step of the elimination stage.

\begin{definition}
Let $P = \sum_{i\leq p} a_i X_k^i \in \R[X_1,\ldots,X_{k-1}][X_k]$.
Then $lcof(P)=a_p$ and 
$Tru(P)=\{\sum_{i\leq r} a_i X_k^i \mid \forall i>r\ a_{i} \notin \R^* \wedge a_r\neq 0 \}$.

\noindent
Let $\P$ be a finite subset of $\R[X_1,\ldots,X_{k-1}][X_k]$.
Then $Elim_{X_k}(\P)$ is the set of polynomials of $\R[X_1,\ldots,X_{k-1}]$
defined as follows. For all $P,Q \in \P, R \in Tru(P)$, $T \in Tru(Q)$
with $deg (T)\leq deg(R)$:
\begin{itemize}
 \item If $lcof(R)$ does not belong to $\R$ then $lcof(R) \in Elim_{X_k}(\P)$; 
 \item If $deg(R)\geq 2$ then for all $sRes_j(R,R')$
 that are defined and do not belong to $\R$, $sRes_j(R,R') \in Elim_{X_k}(\P)$; 
 \item for all $sRes_j(R,T)$
 that are defined and do not belong to $\R$, $sRes_j(R,T) \in Elim_{X_k}(\P)$.
\end{itemize}
\end{definition}


The next lemma establishes the interest of the $Elim_{X_k}$
construction.

\begin{lemma}
\label{lemma:rootcontsetsuff}
Let $\P$ be a finite set of $\R[X_1,\ldots,X_{k-1}][X_k]$, 
$S\subseteq \R^{k-1}$
a connected set. Assume that 
$S$ is $Elim_{X_k}(\P)$-invariant.

\noindent
Then there exist $\ell$ (with $\ell$ posibly null)
continuous  functions
$f_1<\cdots<f_{\ell}$ from $S$ to $\R$ such that for every
$x\in S$, the set of real roots of $\prod_{P \in \P^*}P(x)$ is exactly
$\{f_1(x),\ldots,f_{\ell}(x)\}$
where $\P^*$ is the subset of $\P$ consisting of polynomials
not identically null over $S$.

\noindent
Moreover for all $i \leq l$ and for all $P\in \P^*$, 
the multiplicity of the root  $f_i(x)$
of $P(x)$ is constant over $x \in S$.
\end{lemma}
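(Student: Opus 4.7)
The plan is to reduce the statement to Proposition~\ref{proposition:rootcontset} by verifying, on the connected set $S$, the three constancy hypotheses it requires: constancy of $\deg(P(x))$, of $\deg(\gcd(P(x),Q(x)))$ and of $\deg(\gcd(P(x),P'(x)))$ for the polynomials in $\P^*$. The $Elim_{X_k}(\P)$-invariance of $S$ is exactly the tool that gives these three constancies, and the truncation mechanism in the definition of $Elim_{X_k}$ is what makes the reduction legal.

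First I would associate to every $P \in \P^*$ a canonical truncation $R_P \in Tru(P)$ that represents $P(x)$ uniformly on $S$. Writing $P = \sum_{i \leq p} a_i X_k^i$ with $a_i \in \R[X_1,\ldots,X_{k-1}]$, the degree of $P(x)$ at $x \in S$ is the largest $i$ with $a_i(x) \neq 0$. By the definition of $Tru(P)$, the candidate leading coefficients are either nonzero real constants (which vanish nowhere) or non-constant polynomials, in which case they appear in $Elim_{X_k}(\P)$ as $\mathrm{lcof}(R)$ and therefore have constant sign on $S$ by hypothesis. Consequently the set $\{i : a_i(x) \neq 0\}$ is the same for every $x \in S$, so $\deg(P(x))$ is constant and $R_P$ is the truncation achieving that degree. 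Moreover $R_P(x)$ and $P(x)$ coincide as polynomials in $X_k$ at every $x \in S$.

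Next, for any pair $P,Q \in \P^*$, assume WLOG $\deg R_P \geq \deg R_Q$. Proposition~\ref{proposition:syha-gcd} expresses $\deg(\gcd(R_P(x),R_Q(x)))$ through the vanishing pattern of the subresultants $sRes_j(R_P,R_Q)$ evaluated at $x$; each such subresultant is either a nonzero real constant or a polynomial in $Elim_{X_k}(\P)$, hence has constant sign on $S$. Since $R_P(x)=P(x)$ and $R_Q(x)=Q(x)$ on $S$, this yields the constancy of $\deg(\gcd(P(x),Q(x)))$. The same argument applied to $sRes_j(R_P,R_P')$, which are also included in $Elim_{X_k}(\P)$ when $\deg(R_P)\geq 2$, gives the constancy of $\deg(\gcd(P(x),P'(x)))$. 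It remains to invoke Proposition~\ref{proposition:rootcontset} on the family $\{R_P\}_{P\in\P^*}$ over the connected set $S$, yielding the desired continuous functions $f_1<\cdots<f_{\ell}$ whose values enumerate the real roots of $\prod_{P\in\P^*}R_P(x)=\prod_{P\in\P^*}P(x)$, with the multiplicity of each $f_i(x)$ as a root of each $P \in \P^*$ constant on $S$.

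The main subtlety is the bookkeeping behind Step 2: the formal subresultant $sRes_j(P,Q)$ evaluated at $x$ need not equal $sRes_j(P(x),Q(x))$ if the degrees drop at $x$, because the Sylvester--Habicht matrix depends on the declared degrees. This is precisely why the truncations $Tru(P)$ enter the definition of $Elim_{X_k}(\P)$: passing through $R_P$ and $R_Q$ with their true degrees on $S$ guarantees that the subresultants we control via sign-invariance are the ones governed by Proposition~\ref{proposition:syha-gcd}. Once this identification is made cleanly, the rest of the argument is a direct appeal to the previous proposition.
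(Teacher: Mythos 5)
Your proof is correct and follows the paper's own argument: reduce to Proposition~\ref{proposition:rootcontset} over the connected set $S$ by deriving the three degree-constancy hypotheses from the $Elim_{X_k}(\P)$-invariance of $S$, via the appropriate truncations in $Tru(P)$ for the degrees and via Proposition~\ref{proposition:syha-gcd} on their subresultant sequences for the gcd-degrees; your remark on why the subresultants must be computed with respect to the truncations (the Sylvester--Habicht matrix depends on the declared degrees) makes explicit a subtlety the paper leaves tacit. One small overstatement, immaterial to the conclusion: the full set $\{i : a_i(x)\neq 0\}$ need not be constant on $S$, since a non-constant coefficient $a_i$ lying below a nonzero-constant leading coefficient is never a $lcof$ of any truncation and is therefore unconstrained by $Elim_{X_k}(\P)$; only constancy of the degree is needed, and that does follow because every coefficient strictly above the true degree on $S$ is the zero polynomial (not merely zero on $S$), so $R_P(x)=P(x)$ holds as you claim.
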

\begin{proof}
Let $P \in \P$. Since the leading coefficients of $Tru(P)$  belong to
$Elim_{X_k}(\P)$, the degree of $P(x)$ is constant over $x \in S$.

Let $R \in Tru(P)$ be the \emph{appropriate} polynomial for $P$
(i.e. whose degree is the degree of $P(x)$ for $x \in S$).
Then, by $deg(gcd(R,R'))$ 
is determined
by the signs of polynomials of the sequence $Sres(R,R')$
due to Proposition~\ref{proposition:syha-gcd}. Since all these polynomials
belong to $Elim_{X_k}(\P)$, the number of distinct complex roots
of $deg(gcd(P(x),P'(x))$ is constant over $x \in S$.

Let $T \in Tru(Q)$ be the appropriate polynomial of $Q$ for $Q\in \P$.
Then, by Proposition~\ref{proposition:syha-gcd}, $deg(gcd(R,T))$ is determined
by the signs of polynomials of the sequence $Sres(R,T)$.
Since all these polynomials
belong to $Elim_{X_k}(\P)$, the degree
of $gcd(P(x),Q(x))$ is constant over $x \in S$.

The conclusion follows using Proposition~\ref{proposition:rootcontset}.

\qed 
\end{proof}

We are now in position define the elimination step
and to prove its correctness.

\begin{theorem}
  Let $\mathcal Q=\{\mathcal Q_i\}_{i\leq n}$ be a family of finite
  set of polynomials such that $\mathcal Q_i \subseteq
  \R[X_1,\ldots,X_i]$.  Define $\P_n=\mathcal Q_n$ and inductively
  $\P_{i-1}=\mathcal Q_{i-1} \cup Elim_{X_i}(\mathcal Q_i)$ for
  $i>1$. Then there exists a cylindrical algebraic decomposition
  adapted to $\P$ (and thus to $\mathcal Q$).
\end{theorem}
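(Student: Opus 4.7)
\medskip

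\noindent\textbf{Proof plan.} I would proceed by induction on $n$, where the induction produces simultaneously the cylindrical decomposition tree and the $\P$-invariance property at each level. The base case $n=0$ is immediate: the single cell $\R^0$ is vacuously $\P$-invariant (no polynomials at level $0$), so we only need to exhibit the decomposition of $\R$ lifted above $\R^0$, which will be the first real inductive step.

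For the inductive step, assume that we have constructed a cylindrical algebraic decomposition of $\R^{i-1}$ adapted to $(\P_1, \ldots, \P_{i-1})$. Let $C$ be any cell of level $i-1$ in that decomposition; by the inductive hypothesis $C$ is $\P_{j}$-invariant for every $j\leq i-1$. The key observation is that, by construction of the family in the statement, $Elim_{X_i}(\P_i) \subseteq \P_{i-1}$, so $C$ is in particular $Elim_{X_i}(\P_i)$-invariant. Moreover $C$ is connected since, by definition of a cell, it is either a point, an open interval, the graph of a continuous function over a connected set, or a band between two such graphs. Thus the hypotheses of Lemma~\ref{lemma:rootcontsetsuff} applied to $\P_i$ and $S=C$ are satisfied, yielding $\ell\geq 0$ continuous functions $f_1 < \cdots < f_\ell$ from $C$ to $\R$ whose graphs exhaust the real roots over $C$ of $\prod_{P\in \P_i^*} P(x)$, where $\P_i^*$ is the set of polynomials of $\P_i$ not identically null on $C$.

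Using these functions I would define the children of $C$ in the tree exactly as prescribed by Definition~\ref{def:cad}. If $\ell = 0$ the unique child is $C\times \R$; otherwise the $2\ell+1$ children are the graphs $\{(x,f_j(x))\mid x\in C\}$ for $1\leq j\leq \ell$ together with the bands $\{(x,y)\mid x\in C, f_j(x) < y < f_{j+1}(x)\}$ for $0\leq j\leq \ell$, with the convention $f_0=-\infty$ and $f_{\ell+1}=+\infty$. The correctness check amounts to verifying $\P$-invariance of each child. The invariance for $\P_j$, $j\leq i-1$, is inherited from $C$ because these polynomials do not involve $X_i$ and therefore take the same sign on any two points of a child sharing the same projection onto $C$. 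For $\P_i$: on a band no polynomial in $\P_i^*$ vanishes (by Lemma~\ref{lemma:rootcontsetsuff}) and each is continuous, so by connectedness of the band its sign is constant; on a graph of $f_j$, Lemma~\ref{lemma:rootcontsetsuff} also tells us that the multiplicity of $f_j(x)$ as a root of $P\in \P_i^*$ is constant on $C$, which fixes the sign of $P$ on the graph (either identically $0$ if the multiplicity is positive, or nonzero with the sign determined by continuity outside the graph). Polynomials in $\P_i \setminus \P_i^*$ are identically zero on $C\times \R$, hence have constant sign $0$ on every child.

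The main obstacle, and essentially the entire technical work, is Lemma~\ref{lemma:rootcontsetsuff}: one must ensure that the elimination set $Elim_{X_i}(\P_i)$ is rich enough to force the \emph{simultaneous} continuity of all real roots of all polynomials of $\P_i$ over the cell $C$, including the preservation of multiplicities and of shared roots between different polynomials. This is precisely what the three items in the definition of $Elim_{X_k}$ (leading coefficients of truncations, subresultants $sRes_j(R,R')$, and cross-subresultants $sRes_j(R,T)$) are designed to guarantee via Proposition~\ref{proposition:syha-gcd}, and this is already packaged in the lemma. Once this is in hand, the induction described above carries through mechanically, and the resulting tree is a cylindrical algebraic decomposition adapted to $\P$, and hence \emph{a fortiori} to $\mathcal Q$ since $\mathcal Q_i \subseteq \P_i$ for every $i$.
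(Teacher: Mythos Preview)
Your proposal is correct and follows essentially the same approach as the paper: induction on the level, using that $Elim_{X_i}(\P_i)\subseteq \P_{i-1}$ so each cell $C$ of level $i-1$ satisfies the hypotheses of Lemma~\ref{lemma:rootcontsetsuff}, which then yields the continuous root functions defining the children of $C$. The paper's proof is considerably terser (it simply says ``Applying Lemma~\ref{lemma:rootcontsetsuff} yields the children of $C$''), whereas you spell out connectedness of cells and the sign-invariance verification on graphs versus bands, but the structure and the key lemma are identical.
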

\begin{proof}
Let us prove the existence of a cylindrical algebraic decomposition
of $\R^i$ adapted to $\P_i$ by induction.

\noindent 
The children of $\R^0$ form the partition defined 
by \[(-\infty,r_1),r_1,(r_1,r_2),\ldots,(r_{m-1},r_m),r_m,(r_m,\infty)\]
where $\{r_1,\ldots,r_m\}$ is the set of roots of all $P \in \P_1$
(or $\R$ if there is no root).
By construction, the cells of $\S_1$ are $\P_1$-invariant
and open intervals or points.

\noindent
Assume that we have built our tree up to level $i<n$. 
Pick any cell $C$ of level $i$. 
$C$ is $Elim_{X_{i+1}}(\P_{i+1})$-invariant since 
$Elim_{X_{i+1}}(\P_{i+1}) \subseteq \P_i$. Applying 
Lemma~\ref{lemma:rootcontsetsuff} yields the children of $C$.

\qed 
\end{proof}

\paragraph*{Complexity of elimination step.}
Let $s=|\mathcal{Q}|$, $d$ be the maximal total degree of polynomials
of $\mathcal{Q}$, and $v$ the maximal constant appearing in a coefficient of $\mathcal{Q}$.
A straightforward recurrence shows that
\begin{itemize}
\item the maximal number of bits of a coefficient of any $\P_i$ is $O(d^n\cdot3^{\log(n)\cdot\frac{n(n-1)}2}\cdot \log(v))$,
\item the maximal total degree of polynomials of all $\P_i$ is in $O(d^{3^n})$, and
\item the total number of polynomials is in $O((sd)^{3^n})$.
\end{itemize}

\begin{example}\label{eliminationExample}
  Let us build the family $\P_1,\P_2$ of polynomials associated with
  the automaton of \figurename~\ref{fig:exPolIta}.  We set $I_1=X_1$,
  $I_2=X_2$, $A=X_1^2-X_1-1$, $B=(2X_1-1)X_2^2-1$ and
  $C=X_2+(X_1^2-5)$.  We start with $\P_2=\{I_2, B, C\}$, $\P_1=\{I_1,
  A\}$ and add to $\P_1$ polynomials computed by $Elim_{X_2}(\P_2)$.


We first add $lcof(B) = 2X_1-1$ to $\P_1$.
Note that we do not add $lcof(C)$ since it is in $\Q$.

Let us now compute all subresultants of (potentially truncated)
polynomials of $\P_2$:
\begin{itemize}[label=\textbullet]
\item $sRes_0(I_2,C) = \left|\begin{matrix}1&0\\1&X_1^2-5\end{matrix}\right| = X_1^2-5$ is added to $\P_1$.
\item We then add to $\P_1$ the polynomial\begin{eqnarray*}
sRes_0(B,C) &=& \left|\begin{matrix}
2X_1-1 & 0 & -1 \\
0 & 1 & X_1^2-5 \\
1 & X_1^2-5 & 0
\end{matrix}\right| \\
&=& -(2X_1-1)(X_1^2-5)^2+1 \\
&=&-2X_1^5+X_1^4+20X_1^3-10X_1^2-50X_1+26
\end{eqnarray*}
\item Remark that $sRes_0(B,I_2) = 1 \in \Q$, hence it is not added to $\P_1$.
It is also the case for $sRes_1(B,I_2)$ and $sRes_1(B,C)$.
\end{itemize}

We then need to compute the subresultants of each polynomial of degree $\geq 2$ with its derivative.
In our case, that means computing $sRes_0(B,B')$ and $sRes_1(B,B')$.
We have $B'= 2(2X_1-1)X_2$.
We obtain $sRes_1(B,B') = 2(2X_1-1)$ that should be added to $\P_1$.
However, since $sRes_1(B,B') = 2 lcof(B)$, their sign will coincide.
For simplicity we will not keep it in $\P_1$, although the automatic procedure does; nonetheless, this would not affect the elimination at lower levels.
Finally, we have \[sRes_0(B,B') = \left|\begin{matrix}
2X_1-1 & 0 & -1 \\
0 & 2(2X_1-1) & 0 \\
2(2X_1-1) & 0 & 0
\end{matrix}\right| = 4(2X_1-1)^2\]
which is added to $\P_1$.
This concludes the elimination phase.

The final sets $\P_1$ and $\P_2$ are given in \tablename~\ref{tab:namePolynomials} (page~\pageref{tab:namePolynomials}).
\end{example}

\subsubsection{Lifting step.}

\begin{algorithm2e}
\DontPrintSemicolon

\KwIn{$\P=\{\P_\ell\}_{\ell\leq k}$ a family of subsets of polynomials obtained by decomposition}

\KwOut{$\mathcal A$ a tree whose nodes at
level $\ell$ are sample points of the decomposition  equipped with their sign
evaluation for $\P_\ell$}

\SetKwFunction{Lcof}{Lcof}
\SetKwFunction{OrderedMerge}{OrderedMerge}
\SetKwFunction{EnlargeWith}{EnlargeWith}
\SetKwFunction{Singleton}{Singleton}
\SetKwFunction{Trunk}{Trunk}
\SetKwFunction{Degree}{Degree}
\SetKwFunction{Normalize}{Normalize}
\SetKwFunction{RootCoding}{RootCoding}
\SetKwFunction{EuclideanDivision}{EuclideanDivision}
\SetKwFunction{PmV}{PmV}
\SetKwFunction{Sign}{Sign}
\SetKwFunction{PmV}{PmV}
\SetKwFunction{Completing}{Completing}
\SetKwFunction{Lifting}{Lifting}
\SetKwFunction{First}{First}
\SetKwFunction{Last}{Last}
\SetKwFunction{Length}{Length}
\SetKwFunction{LinePartition}{LinePartition}

{\Lifting}$(\ell,\T)$: an integer\;
\KwIn{$\ell$, the current level; 
$\T=\{(n_i,P_i,p_i)\}_{i=1}^\ell$  a 
triangular system
for $(\alpha_1,\ldots,\alpha_\ell)$
corresponding to a node of $\mathcal A$.}
\KwData{$L$ a list of   triangular systems equipped with sign vectors, $E$ a triangular system 
with a sign vector}

$L\leftarrow \LinePartition(\ell,\T)$\;

 \uIf{$L=\emptyset$} 
 {
   $\T' \leftarrow \mathcal T \cup \{(1,X_{\ell+1},1)\}$;
   $\T'\cdot Eval \leftarrow \{(P,\Sign(\ell,\T,\Lcof(P))\mid P \in \P_{\ell+1} \}$\; 
   $\A \leftarrow \A \cup (\T \rightarrow \T')$;
   \lIf{$\ell+1<k$}{$\Lifting(\ell+1,\T')$} 
 }
\Else
{
  $L \leftarrow \Completing(\ell,\T,L)$\;
 \For{$E \in L$}
 {
   Pick some $(r,v,P)\in E$ such that $r$ is defined\;
   $\T' \leftarrow \T \cup \{(r,P,\Degree(\ell,\T,P))\}$;
   $\T'\cdot Eval \leftarrow \{(Q,v[0])\mid Q \in \P_{\ell+1} \wedge \exists (m,v,Q)\in E \}$\; 
   $\A \leftarrow \A \cup (\T \rightarrow \T')$;
   \lIf{$\ell+1<k$}{$\Lifting(\ell+1,\T')$} 
 }
}
\caption{Lifting the cylindrical algebraic decomposition at a point of level $\ell$}
\label{algo-lifting}
\end{algorithm2e}


We build the cylindrical algebraic decomposition as follows: every
cell $C$ of level $\ell$ is represented by a \emph{sample point},
represented by a triangular system.  In addition, the representation
of $C$ includes the evaluation of the sign of all $P \in \P_\ell$.
Observe that evaluation of a $P \in \P_j$ with $j<\ell$ is found in
its ancestor cell of level $j$.  The construction is performed by
Algorithm~\ref{algo-lifting}.  An atomic step of the lifting phase
corresponds to build, given a sample point $\R^{\ell}$, the ordered
list of all sample points of $\R^{\ell+1}$ representing the cells of
the cylinder above $S$. It corresponds to a call to {\tt Lifting}
(without the recursive calls).  The whole construction is done by the
call ${\tt Lifting}(0,\emptyset)$.  {\tt Lifting} first calls {\tt
  LinePartition} in order to get an ordered list of the roots of all
$P \in \P_{\ell+1}$. Every real $\alpha$ of this list is represented
by a set of triplets $(r,v,P)$ where $P$ is a polynomial whose
coefficients are algebraic numbers over $\mathcal T$ (and thus
represented by polynomials in $Q[X_1,\ldots,X_\ell]$), $v$ is the
$P$-encoding of $\alpha$.  $r$ may be undefined but when defined it
means that $\alpha$ is the $r^{th}$ root of $P$.  For at least one
triplet of the set $r$ is defined allowing to extend the triangular
system $\mathcal T$ by $\alpha$.  Since one wants to represent the
interval between these roots by sample points, the list is completed
by a call to {\tt Completing}.  After this call either the list is
empty (corresponding to the case of a single child $C\times \R$) and
this child is represented by $\alpha_{\ell+1}=0$, first root of
$X_{\ell+1}$.  The representation of this cell is now enlarged by the
evaluation of all $P \in \P_{\ell+1}$ at this sample point. Otherwise
for every item of the list one picks some arbitrary $(r,v,P)$ with $r$
defined and proceeds as previously to produce all the children of $C$.

Algorithm~\ref{algo-linepartitions} 
produces the list of roots of all $P(\alpha_1,\ldots,\alpha_{\ell})$
for $P \in \P_{\ell+1}$. For any such $P$, it first normalizes
it by determining its higher non null coefficient. Thus $R \in Tru(P)$.
$SL[P]$ will contain the singletons $\{(r,v,P)\}$ for every root
of $P(\alpha_1,\ldots,\alpha_{\ell})$. Then the algorithm enlarges these
singletons with triplets $\{(r',v',Q)\}$ for all $Q$
that preceed $P$ in $\P_{\ell+1}$. All these triplets are obtained
using the lists provided by appropriate calls to {\tt RootCoding}.
Conversely the sets of the list $SL[Q]$ are enlarged with the
triplets related to $P$. Once all roots have been produced
in $SL$, it remains to order them and (possibly) merge them. 
This can be easily  done
with the help of their Thom-encoding and it is performed 
by a call to {\tt OrderedMerge}.

\begin{algorithm2e}
\DontPrintSemicolon

\KwIn{$\P=\{\P_\ell\}_{\ell\leq k}$ a family of subsets of polynomials}

\SetKwFunction{Lcof}{Lcof}
\SetKwFunction{OrderedMerge}{OrderedMerge}
\SetKwFunction{EnlargeWith}{EnlargeWith}
\SetKwFunction{Singleton}{Singleton}
\SetKwFunction{Trunk}{Trunk}
\SetKwFunction{Degree}{Degree}
\SetKwFunction{Normalize}{Normalize}
\SetKwFunction{RootCoding}{RootCoding}
\SetKwFunction{EuclideanDivision}{EuclideanDivision}
\SetKwFunction{PmV}{PmV}
\SetKwFunction{Sign}{Sign}
\SetKwFunction{PmV}{PmV}
\SetKwFunction{Lifting}{Lifting}
\SetKwFunction{First}{First}
\SetKwFunction{Last}{Last}
\SetKwFunction{LinePartition}{LinePartition}

{\LinePartition}$(\ell,\mathcal T)$: a list\;
\KwIn{$\ell$, the current level}
\KwIn{$\mathcal T=\{(n_i,P_i,p_i)\}_{i=1}^\ell$  a 
triangular system
for $(\alpha_1,\ldots,\alpha_\ell)$
corresponding to a node of $\mathcal A$ whose children have to be computed.}
\KwOut{$L$ a list of sample points  of the decomposition equipped with their sign
evaluation for $\P_{\ell+1}$ related to $\mathcal T$.}
\For{$P \in \P_{\ell+1}$}
{
   $(R,r) \leftarrow \Normalize(\ell,\mathcal T,P)$\;
   \lIf {$r \leq 0$}  {$SL[P] \leftarrow \emptyset$}
   \Else
   {
     $SLL \leftarrow \RootCoding(\ell,\mathcal T,R,r,R,r)$\;
     \tcp{\Singleton transforms a list of items into a list of singletons which contain these items.
     Furthermore it adds the number of the root of $R$ for subsequent use.}
     $SL[P] \leftarrow \Singleton(SLL)$\;
     \For{$Q \in \P_{\ell+1} {\bf ~such~that~} Q\prec P$}
     {$(S,s)\leftarrow \Normalize(\ell,\mathcal T,Q)$\;
      $SLL \leftarrow \RootCoding(\ell,\mathcal T,R,r,S,s)$;
      $\EnlargeWith(SL[P],SLL,Q)$\;
     }
   } 
     \For{$Q \in \P_{\ell+1} {\bf ~such~that~} Q\prec P$}
     {
     \If {$SL[Q]\neq \emptyset$}{
        $(S,s) \leftarrow \Normalize(\ell,\mathcal T,Q)$\;
        $SLL \leftarrow \RootCoding(\ell,\mathcal T,S,s,R,r)$;
        $\EnlargeWith(SL[Q],SLL,P)$\;}
     }
}   
$L\leftarrow \OrderedMerge(SL)$\;
\Return $L$\;
\caption{Partitioning the real line at a point of level $\ell$.}
\label{algo-linepartitions}
\end{algorithm2e}

\begin{algorithm2e}
\DontPrintSemicolon

\KwIn{$\P=\{\P_\ell\}_{l\leq k}$ a family of subsets of polynomials obtained by decomposition}
\SetKwFunction{Lcof}{Lcof}
\SetKwFunction{OrderedMerge}{OrderedMerge}
\SetKwFunction{EnlargeWith}{EnlargeWith}
\SetKwFunction{Singleton}{Singleton}
\SetKwFunction{Trunk}{Trunk}
\SetKwFunction{Degree}{Degree}
\SetKwFunction{Normalize}{Normalize}
\SetKwFunction{RootCoding}{RootCoding}
\SetKwFunction{EuclideanDivision}{EuclideanDivision}
\SetKwFunction{PmV}{PmV}
\SetKwFunction{Sign}{Sign}
\SetKwFunction{PmV}{PmV}
\SetKwFunction{Completing}{Completing}
\SetKwFunction{First}{First}
\SetKwFunction{Last}{Last}
\SetKwFunction{Length}{Length}

{\Completing}$(\ell,\mathcal T,L)$: a list\;
\KwIn{$\ell$, the current level}
\KwIn{$\T=\{(n_i,P_i,p_i)\}_{i=1}^\ell$  a 
triangular system
for $(\alpha_1,\ldots,\alpha_\ell)$
corresponding to a node of $\mathcal A$ whose children have to be computed.}

\KwIn{$L$ a list of sample points  of the decomposition represented by a
triangular system equipped with their sign
evaluation for $\P_{\ell+1}$ related to $\T$.}

\KwOut{the input list $L$ enriched with of sample points 
for the intervals before, between and beyond the original sample points.}

 \For{$E \in L$}
 {
   Pick some $(r,v,P)\in E$ such that $r$ is defined\;
   \uIf {$E=\First(L)$}
   {
    $(R,r) \leftarrow \Normalize(\ell,\T,P(X_{\ell+1}+1))$;
    $SLL \leftarrow \RootCoding(\ell,\mathcal T,R,r,R,r)$\;
    $shortL \leftarrow \Singleton(SLL)$\;
    \For{$Q \in \P_{\ell+1}$}
     {$(S,s) \leftarrow \Normalize(\ell,\mathcal T,Q)$\;
      $SLL \leftarrow \RootCoding(\ell,\mathcal T,R,r,S,s)$;
      $\EnlargeWith(shortL,SLL,Q)$\;
     }
   Insert $\First(shortL)$ before $E$ in $L$\;
   }
   \Else{
    $(R,r) \leftarrow \Normalize(\ell,\mathcal T,(P\cdot oldP)'$;
    $SLL \leftarrow \RootCoding(\ell,\mathcal T,R,r,R,r)$\;
    $shortL \leftarrow \Singleton(SLL)$\;
    \For{$Q \in \P_{\ell+1}$}
     {$(S,s)\leftarrow \Normalize(\ell,\mathcal T,Q)$\;
      $SLL \leftarrow \RootCoding(\ell,\mathcal T,R,r,S,s)$;
      $\EnlargeWith(shortL,SLL,Q)$\;
     }
   Find $F$ in $shortL$ such that $\exists (x,vP,P), (y,voldP,oldP) \in F$\;
   with $vP<v$ and $voldP>oldv$;
   Insert $F$ before $E$ in $L$\;
   }   
   $oldv \leftarrow v$; $oldP \leftarrow P$\;
   }
    Let $E$ be $\Last(L)$\;    
    Pick some $(r,v,P)\in E$ such that $r$ is defined\;
    $(R,r) \leftarrow \Normalize(\ell,\mathcal T,P(X_{\ell+1}-1))$\;
    $SLL \leftarrow \RootCoding(\ell,\mathcal T,R,r,R,r)$;
    $shortL \leftarrow \Singleton(SLL)$\;
    \For{$Q \in \P_{\ell+1}$}
     {$(S,s)\leftarrow \Normalize(\ell,\mathcal T,Q)$\;
      $SLL \leftarrow \RootCoding(\ell,\mathcal T,R,r,S,s)$;
      $\EnlargeWith(shortL,SLL,Q)$\;
     }
   Insert $\Last(shortL)$ after $E$ in $L$; \Return $L$\;

\caption{Completing the line partition with samples of intervals.}
\label{algo-between-roots}
\end{algorithm2e}

Algorithm~\ref{algo-between-roots} completes the list of roots by sample points
representing the intervals between the roots.
This is done as follows.
Given a root $\alpha$ of $P$ and a root $\beta$ of $Q$, 
such that $\alpha$ and $\beta$ are consecutive items
of the list, there exists a root $f$ of $(PQ)'$ 
such that $f \in ]\alpha,\beta[$. Thus the sample
point will be an arbitrary root of $(PQ)'$ strictly between $\alpha$ 
and $\beta$.
If $\alpha$ is the smallest (resp. largest) root in the list
for of some $P$ then the first (resp. last) root of $P[X_{\ell+1}+1]$
(resp. $P[X_{\ell+1}-1]$) is $\alpha-1\in ]-\infty,\alpha[$
(resp. $\alpha+1 \in ]\alpha,+\infty[$). In this algorithm
$E$ represents the current item, say $\beta$ of the list of roots, $P$
some polynomial whose $\beta$ is a root and $v$ is its $P$-encoding.
Let $\alpha$ be the previous item of the list (when it exists). $oldP$
is some polynomial whose $\alpha$ is a root and $oldv$ is its $oldP$-encoding.
Thus in order to find a root of $(P \cdot oldP)'$ between $\alpha$ and $\beta$,
one computes the $P$ and $oldP$ encoding of the roots of $(P \cdot oldP)'$.



\begin{example}
We first (by Algorithm~\ref{algo-linepartitions}) compute the line partition of $\R$ at level $1$ for $\P_1 = \{I_1,A,D,E,F,G\}$ (see \tablename~\ref{tab:namePolynomials}) obtained previously.
This is done by comparing the $P$-encodings of roots of $Q$ for all pairs $(P,Q) \in \P_1^2$.
The result is (partially) depicted in \figurename~\ref{fig:partitionExample}.
Each bullet represents the (relative) position of a root, given by a triangular system (where the degree of the polynomial is not represented for clarity).
In the table, the line labeled by $P$ gives the $P$-encodings of the roots.
\end{example}

\begin{table}
\centering
\[\begin{array}{rcl}
I_1 &=& X_1 \\
I_2 &=& X_2 \\
A &=& X_1^2 - X_1 -1 \\
B &=& (2X_1-1)X_2^2-1 \\
C &=& X_2+X_1^2-5 \\
D &=& 2X_1-1 \ (=lcof(B)) \\
E &=& X_1^2-5 \ (=sRes_0(I_2,C)) \\
F &=& -2X_1^5+X_1^4+20X_1^3-10X_1^2-50X_1+26 \ (= sRes_0(B,C)) \\
G &=& 4(2X_1-1)^2 \ (=sRes_0(B,B'))\\
Int &=&-14X_1^6+18X_1^5+105X_1^4-124X_1^3-180X_1^2+172X_1+24 \ (=(FA)') \\
\end{array}\]
\begin{mathpar}
\P_1 = \{I_1,A,D,E,F,G\} \and \P_2 = \{I_2,B,C\}
\end{mathpar}
\caption{Polynomials used in the cylindrical decomposition.}
\label{tab:namePolynomials}
\end{table}

\begin{figure}
\centering
	\begin{tikzpicture}[auto,node distance=0.125cm and 0.9cm]

\tikzstyle{pointlabel}=[anchor=south]
\tikzstyle{pointvalue}=[anchor=north]
\tikzstyle{point}=[inner ysep=0pt]
\tikzstyle{poly}=[anchor=west,inner xsep=0pt]
\tikzstyle{coding}=[font=\scriptsize]

\node[point] (zero) at (0,0) {\textbullet};
\node[pointlabel] (zeroLab) at (zero.north) {$(1,I_1)$};
\node[pointvalue] (zeroVal) at (zero.south) {$0$};

\node[point,left= of zero] (a1)  {\textbullet};
\node[pointlabel] (a1Lab) at (a1.north) {$(1,A)$};
\node[pointvalue] (a1Val) at (a1.south) {$\frac{1-\sqrt5}2$};

\node[point,left= of a1] (e1)  {\textbullet};
\node[pointlabel] (e1Lab) at (e1.north) {$(1,E)$};
\node[pointvalue] (e1Val) at (e1.south) {$-\sqrt5$};

\node[point,right= of zero] (d1)  {\textbullet};
\node[pointlabel] (d1Lab) at (d1.north) {\begin{tabular}{c}$(1,D)$\\$(1,G)$\end{tabular}};
\node[pointvalue] (d1Val) at (d1.south) {$\frac12$};

\node[point,right= of d1] (f1)  {\textbullet};
\node[pointlabel] (f1Lab) at (f1.north) {$(1,F)$};

\node[point,right= of f1] (a2)  {\textbullet};
\node[pointlabel] (a2Lab) at (a2.north) {$(2,A)$};
\node[pointvalue] (a2Val) at (a2.south) {$\frac{1+\sqrt5}2$};

\node[point,right= of a2] (f2)  {\textbullet};
\node[pointlabel] (f2Lab) at (f2.north) {$(2,F)$};

\node[point,right= of f2] (e2)  {\textbullet};
\node[pointlabel] (e2Lab) at (e2.north) {$(2,E)$};
\node[pointvalue] (e2Val) at (e2.south) {$\sqrt5$};

\node[point,right= of e2] (f3)  {\textbullet};
\node[pointlabel] (f3Lab) at (f3.north) {$(3,F)$};

\draw ($(e1) + (-0.5,0.5pt)$) -- ($(f3) + (0.5,0.5pt)$);

\node[point,fill=white,inner sep=0pt] (int4) at ($0.5*(f1)+0.5*(a2)$) {$\circ$};
\node[pointlabel,inner ysep=0pt] (int4lab) at ($(int4.north) + (0,0.5)$) {$(4,Int)$};
\path[draw,dotted] (int4) edge (int4lab);

\node[poly] (i1) at ($(e1) + (-1.25,-1.25)$) {$I_1$};
\node[poly, below=of i1] (a) {$A$};
\node[poly, below=of a] (d) {$D$};
\node[poly, below=of d] (e) {$E$};
\node[poly, below=of e] (f) {$F$};
\node[poly, below=of f] (g) {$G$};

\node[coding] at (i1 -| zero) {$(0,1)$};
\node[coding] at (a -| zero) {$(-1,-1,1)$};
\node[coding] at (d -| zero) {$(-1,1)$};
\node[coding] at (e -| zero) {$(-1,-1,1)$};
\node[coding] at (f -| zero) {$(1,1,-1,1,1,-1)$};
\node[coding] at (g -| zero) {$(1,-1,1)$};

\node[coding] at (i1 -| e1) {$(-1,1)$};
\node[coding] at (i1 -| a1) {$(-1,1)$};
\node[coding] at (i1 -| d1) {$(1,1)$};
\node[coding] at (i1 -| f1) {$(1,1)$};
\node[coding] at (i1 -| a2) {$(1,1)$};
\node[coding] at (i1 -| f2) {$(1,1)$};
\node[coding] at (i1 -| e2) {$(1,1)$};
\node[coding] at (i1 -| f3) {$(1,1)$};

\node[coding] at (a -| e1) {$(1,-1,1)$};
\node[coding] at (a -| a1) {$(0,-1,1)$};
\node[coding] at (a -| d1) {$(-1,0,1)$};
\node[coding] at (a -| f1) {$(-1,1,1)$};
\node[coding] at (a -| a2) {$(0,1,1)$};
\node[coding] at (a -| f2) {$(1,1,1)$};
\node[coding] at (a -| e2) {$(1,1,1)$};
\node[coding] at (a -| f3) {$(1,1,1)$};


\node at (e -| e1) {$\vdots$};
\node at (e -| f2) {$\ddots$};
\end{tikzpicture}
\caption[Partition of $\R$ according to $\P_1$ and Thom encodings.]{Partition of $\R$ according to $\P_1$ and Thom encodings. The scale is not accurate.}
\label{fig:partitionExample}
\end{figure}
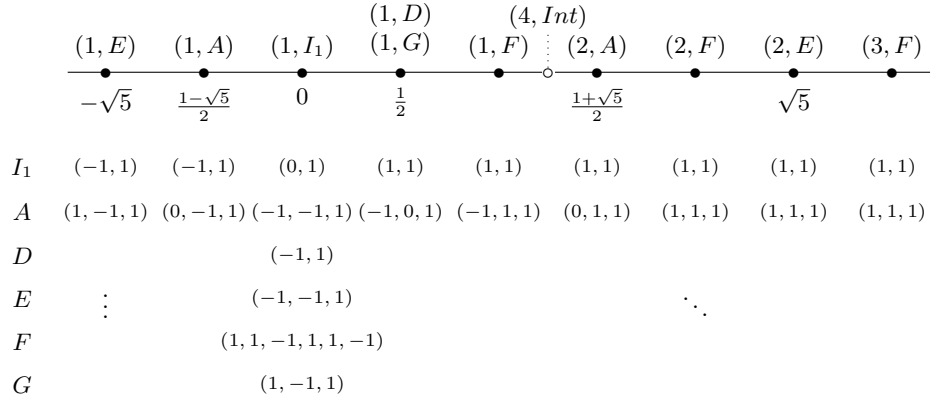

\begin{example}
We can now complete the line built above by computing sample points corresponding to intervals between consecutive roots (Algorithm~\ref{algo-between-roots}).
For instance to compute a sample point at the left of $(1,E)=-\sqrt5$, one can choose $-1-\sqrt5$ which is the first root of $H=(X+1)^2-5$ (\emph{i.e.} $E$ where $X$ is replaced by $X+1$).
In order to compute a value between $(1,F)$ and $(2,A)$, we consider the polynomial $Int=(FA)'=-14X_1^6+18X_1^5+105X_1^4-124X_1^3-180X_1^2+172X_1+24$.
Computing the $F$-encodings of roots of $Int$ gives the number $k$ of roots of $Int$ smaller than or equal to $(1,F)$.
Taking the $k+1$th root of $Int$ yields a root greater than $(1,F)$.
The value $(k+1,Int)$ is smaller than $(2,A)$ (since one such root exists).
Here, one can show that the appropriate root is the $4$th.
Hence the sample point $(4,Int)$ written $\alpha_1$ is added to the line in order to represent interval $](1,F),(2,A)[$, as depicted by the empty bullet on \figurename~\ref{fig:partitionExample}.
In addition, for all polynomials $P$ of $\P_1$, the $P$-encoding of $(4,Int)$ is computed: the first component yields the sign of $P$ in the interval.
Namely:
\begin{mathpar}
I_1(\alpha_1) > 0 \and
A(\alpha_1) < 0 \and
D(\alpha_1) >0 \\
E(\alpha_1) <0 \and
F(\alpha_1) < 0 \and
G(\alpha_1) >0
\end{mathpar}
Remark that this interval corresponds to the one where transition $a$ of \figurename~\ref{fig:exPolIta} is fired in the trajectory of \figurename~\ref{fig:exPolItaTrajectory}.

Sample points (and their encodings) for all intervals should be computed and added to the line.
This is omitted for readability.\MaS{Ils sont cach\'es en commentaires dans la \tablename~\ref{tab:namePolynomials}.}
\end{example}

\begin{example}
We illustrate the lifting (Algorithm~\ref{algo-lifting}) to $\R^2$ for the interval represented by the sample point $(4,Int)$ built above.
In this case, one must partition the real line with roots of polynomials of $\P_2 = \{I_2,B,C\}$ when $X_1=\alpha_1$.
Note that $I_1$ and $A$ are constants.

In the computation of the $\P_2$-encodings, the $\P_1$-encodings of $\alpha_1$ are used, in particular the encodings of polynomials constructed in the elimination phase.
For example, since $D(\alpha_1)>0$, the leading coefficient of $B$ is positive, hence $B$ has two roots.
And since $E(\alpha_1)<0$, the root of $C(\alpha_1)$ is positive (greater than the root of $I_2$).
Finding that all the roots of $B(\alpha_1)$ are smaller than $\gamma$ the root of $C(\alpha_1)$ involves not only the sign of $F(\alpha_1)$ (which only shows that $\gamma$ is not between the roots of $B(\alpha_1)$) but additional components of the encoding, namely in this case the sign of the second derivative of $F$.
This is partially represented in \figurename~\ref{fig:linePartitioningR2} (again, the degrees of the polynomials are omitted).
Note that this lifting corresponds to the trajectory depicted in \figurename~\ref{fig:exPolItaTrajectory}, page~\pageref{fig:exPolItaTrajectory}.
\end{example}

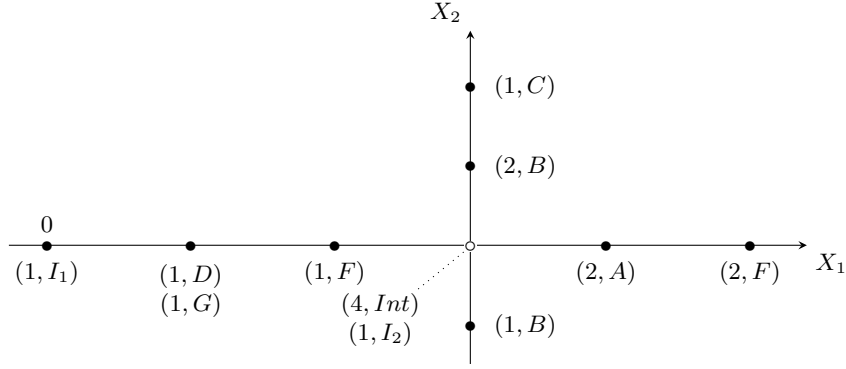
\begin{figure}
\centering

\pgfdeclarelayer{foreground}
\pgfsetlayers{main,foreground}

\begin{tikzpicture}[auto,node distance=0.9cm and 1.5cm]

\tikzstyle{pointlabel}=[anchor=north]
\tikzstyle{point2label}=[anchor=west]
\tikzstyle{pointvalue}=[anchor=south]
\tikzstyle{point}=[inner ysep=0pt]
\tikzstyle{poly}=[anchor=west,inner xsep=0pt]
\tikzstyle{coding}=[font=\scriptsize]

\begin{pgfonlayer}{foreground}
\node[point,fill=white,inner sep=0pt] (int4) at (0,0) {$\circ$};
\end{pgfonlayer}

\node[anchor=north east,inner ysep=0pt] (int4lab) at ($(int4.south) + (-0.5,-0.5)$) {\begin{tabular}{c}$(4,Int)$\\$(1,I_2)$\end{tabular}};
\path[draw,dotted] (int4) edge (int4lab);

\node[point,left= of int4] (f1)  {\textbullet};
\node[pointlabel] (f1Lab) at (f1.south) {$(1,F)$};

\node[point,left= of f1] (d1)  {\textbullet};
\node[pointlabel] (d1Lab) at (d1.south) {\begin{tabular}{c}$(1,D)$\\$(1,G)$\end{tabular}};

\node[point,left= of d1] (zero) {\textbullet};
\node[pointlabel] (zeroLab) at (zero.south) {$(1,I_1)$};
\node[pointvalue] (zeroVal) at (zero.north) {$0$};



\node[point,right= of int4] (a2)  {\textbullet};
\node[pointlabel] (a2Lab) at (a2.south) {$(2,A)$};

\node[point,right= of a2] (f2)  {\textbullet};
\node[pointlabel] (f2Lab) at (f2.south) {$(2,F)$};



\node[point, below= of int4] (b1) {\textbullet};
\node[point2label] (b1Lab) at (b1.east) {$(1,B)$};

\node[point, above= of int4] (b2) {\textbullet};
\node[point2label] (b2Lab) at (b2.east) {$(2,B)$};

\node[point, above= of b2] (c1) {\textbullet};
\node[point2label] (c1Lab) at (c1.east) {$(1,C)$};

\draw[->] ($(zero) + (-0.5,0.5pt)$) -- ($(f2) + (0.75,0.5pt)$) node[anchor=north west] {$X_1$};
\draw[->] ($(b1) + (0.pt,-0.5)$) -- ($(c1) + (0.pt,0.75)$) node[anchor=south east] {$X_2$};

\end{tikzpicture}
\caption{Line partitioning for $X_2$ above $\alpha_1=(4,Int)$.}
\label{fig:linePartitioningR2}
\end{figure}

\section{Verification algorithms for \polita}
\label{sec:reach}
We now use the cylindrical decomposition to build a finite abstraction
of the transition system associated with a \polita. The model checking
problem (hence also the reachability problem) can be solved with this
abstraction. An on-the-fly construction is then given to produce a
more efficient practical algorithm.
Formally, we prove the following:
\begin{theorem}\label{thm:modelchecking}
  The model checking problem of \TCTLint over \polita is decidable in
  time $(|\A|\cdot |\psi|\cdot d)^{2^{O(n)}}$ where $n$ is the number
  of clocks in $\A$ and $d$ the maximal degree of polynomials
  appearing in $\A$ and $\psi$.
\end{theorem}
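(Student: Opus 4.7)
\medskip

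\noindent\textbf{Proof plan.} The plan is to build a finite time-abstract bisimulation of $\T_{\A}$ whose equivalence classes are products of control states and cells of a single cylindrical algebraic decomposition, then apply a standard CTL labelling algorithm on the quotient.

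First I would assemble, for each level $i\in\{1,\dots,n\}$, a finite set $\mathcal{Q}_i\subseteq\Q[x_1,\dots,x_i]$ collecting: (a) every polynomial occurring in a guard of $\A$ that involves $x_1,\dots,x_i$; (b) every polynomial $P\in\Q[x_1,\dots,x_{i-1}]$ used in a level-$i$ update $x_i:=P$, together with its ``shift'' $x_i-P$; (c) every polynomial appearing in an atomic subformula $P\rel 0$ of $\psi$; and (d), crucially, for every update $x_k:=P$ and every polynomial $Q\in\mathcal{Q}_k$, the substituted polynomial $Q(x_1,\dots,x_{k-1},P)$, added to $\mathcal{Q}_{k-1}$ (iterating this closure until a fixed point is reached, which happens in at most $n$ rounds since the level strictly decreases). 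I would then feed $\mathcal{Q}=\{\mathcal{Q}_i\}$ into the elimination step of Section~\ref{sec:cyldec} to obtain $\mathcal{P}=\{\mathcal{P}_i\}$ and construct the CAD by the lifting procedure of Algorithm~\ref{algo-lifting}.

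Next I would define the equivalence $(q,v)\sim(q',v')$ iff $q=q'$ and $v,v'$ lie in the same cell of the CAD, and show it is a time-abstract bisimulation preserving atomic propositions. Three points must be verified. Preservation of atomic propositions $p\in lab(q)$ and $P\rel 0$ is immediate from sign-invariance of the cells w.r.t.~$\mathcal{P}$ (which includes the polynomials of $\psi$). For a discrete step $(q,v)\tr{a}(q',v[u])$ with guard $\fee$ and update $u$, sign-invariance gives $v'\models\fee$ whenever $v\models\fee$ on the same cell; the closure under substitution (item (d) above) forces the image of an entire cell by $u$ to land in a \emph{single} cell of level $n$, so the target configurations are equivalent. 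For a time step, the hierarchical structure of the CAD is key: since only $x_{\lambda(q)}$ evolves, the relevant fibre is the one-dimensional slice of the CAD above the (fixed) projection of $v$ on $x_1,\dots,x_{\lambda(q)-1}$, and above that projection time elapsing traverses a predictable, finite sequence of cells of the CAD; the higher coordinates $x_j$ for $j>\lambda(q)$ are frozen, and I would argue by a short case analysis on the cell type (graph of a continuous function vs.~open band) that the cell containing the valuation after elapsing $d$ units depends only on the starting cell and on how $d$ compares with the finitely many ``crossings'' of the defining functions — data which is itself cell-determined.

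With the bisimulation in hand, the quotient is a finite Kripke structure whose states are pairs (control state, cell), with transitions labelled by discrete actions and by the ``next cell along time elapsing''. Standard CTL labelling on this quotient decides $\A\models\psi$; since the temporal operators of \TCTLint are the untimed $\expath\until$ and $\always\until$, the classical fix-point computation suffices, and correctness follows from the fact that time-abstract bisimulation preserves CTL over the reachable configurations.

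The main obstacle, and also the source of the complexity bound, is purely combinatorial. Starting from $O(|\A|+|\psi|)$ polynomials of degree at most $d$, the elimination step produces $(|\A|\cdot|\psi|\cdot d)^{3^{O(n)}}$ polynomials of degree $d^{3^{O(n)}}$ (cf.~the complexity remark after Lemma~\ref{lemma:rootcontsetsuff}), yielding a CAD with a doubly exponential (in $n$) number of cells. The quotient therefore has size $|Q|\cdot(|\A|\cdot|\psi|\cdot d)^{2^{O(n)}}$, each cell being representable by a triangular system whose arithmetic is effective by Proposition~\ref{prop:sign-eff}, and CTL labelling runs in polynomial time in $|\psi|$ and in the quotient, giving the announced $(|\A|\cdot|\psi|\cdot d)^{2^{O(n)}}$ bound. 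The delicate bookkeeping — checking that the substitution closure in step (d) terminates within the stated bound and that the one-step time-successor map is cell-computable — is where I expect most of the technical work to lie.
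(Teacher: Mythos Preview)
Your approach is the paper's: build one CAD adapted to the polynomials of $\A$ and $\psi$, form the finite quotient, and run CTL labelling. The complexity accounting matches. Two places where you add work the paper avoids are worth noting.

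First, the substitution closure in your step~(d) is unnecessary. The paper's $\Poly(\A)$ contains only the clocks $x_i$ and the shifts $x_k-P$ for each update $x_k:=P$ (your items~(a)--(c)), and this already forces the image of a cell under an update to be a single cell. The reason is Lemma~\ref{lem:updatesCAD}: because $x_k-P\in\mathcal P_k$, the graph $\{(x,P(x)):x\in\underline C\}$ coincides with one of the delimiting functions $f_i$ above the level-$(k{-}1)$ cell $\underline C$, so the target is exactly the cell $\{(x,f_i(x)):x\in\underline C\}$; sign-invariance of the CAD then fixes the sign of every $Q\in\mathcal P_k$ on that cell automatically, with no need to push $Q(x_1,\dots,x_{k-1},P)$ down to level $k{-}1$. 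This works precisely because \polita updates are so constrained: the only non-trivial assignment is $x_k:=P$ with $P\in\Q[x_1,\dots,x_{k-1}]$, and all higher clocks are reset to~$0$. Your closure is harmless, but calling it ``crucial'' misidentifies the mechanism.

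Second, the paper represents a configuration $(q,v)$ with $\lambda(q)=k$ by a cell of level~$k$, not level~$n$, exploiting the invariant that $v(x_j)=0$ for $j>k$. This makes the time-successor map literally ``next cell in the $x_k$-fibre above the fixed level-$(k{-}1)$ projection'' (Proposition~\ref{prop:timeSteps}), with no case analysis about frozen higher coordinates. Your formulation --- cells of level~$n$, higher coordinates frozen at~$0$ --- is equivalent but forces you, after every move in the $x_k$-fibre, to re-locate the zero-cells at levels $k{+}1,\dots,n$ above the new level-$k$ cell. The paper's level-$\lambda(q)$ representation eliminates that bookkeeping and makes both the time-step and discrete-step correctness lemmas one-liners.
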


\subsection{Abstraction construction}
Let $\A=\langle\Sigma,Q, q_0, F, X, \lambda, \Delta\rangle$ be a
\polita with $X=\{x_1,\dots,x_n\}$.  We define $\Poly(\A)$ the set of
all polynomials appearing in guards and updates of $\A$ (including all
clocks) as follows:
\begin{eqnarray*}
  \Poly(\A) &=& X \cup \bigcup_{(q,g,a,u,q') \in \Delta} 
\left(\left\{\bigcup_i \{P_i\} \mmid \fee= \bigwedge_i P_i \rel_i 0\right\}
  \right. \\ & & \hspace{3.5cm}\left.
    \cup \left\{\bigcup_{i=1}^n \{x_i - P_i\} \mmid 
u= \bigwedge_{i=1}^n x_i := P_i\right\}\right)
\end{eqnarray*}
Given a \TCTLint formula $\psi$, we define $\Poly(\psi)$ the set of all polynomials appearing in $\psi$, \emph{i.e.} in subformulas of the form $P \rel 0$.
Note that in the case of the reachability problem, $\Poly(\psi) = \emptyset$.

Let $\Dc_{\A,\psi}$ be the cylindrical algebraic decomposition adapted to
$\Poly(\A) \cup \Poly(\psi)$ and $X$.  Since $\Dc_{\A,\psi}$ is adapted to $X$, the cells can
be arranged in levels $\Dc_{\A,\psi}^1, \dots, \Dc_{\A,\psi}^n$, such that for
$1\leq i \leq n$, $\bigcup_{k=1}^i \Dc_{\A,\psi}^k$ is a CAD of
$\R^{\{x_1,\dots,x_i\}}$.  As a result, the projection of a cell of
level $i$ over the axis $x_i =0$ yields a cell of level $i-1$.

We define $\Reg_{\A,\psi}$ the finite transition system with states in $Q
\times \Dc_{\A,\psi}$, specifically, they can also be arranged by layer, with
respect to the level of the state: $\bigcup_{i=1}^n \lambda^{-1}(i)
\times \Dc_{\A,\psi}^i$.  Indeed, given a configuration $(q,v)$ with
$\lambda(q)=k$, the semantics of \polita require that for $k < i \leq
n$, $v(x_i)=0$, hence $v$ belongs to a cell of $\Dc_{\A,\psi}^k$.  We now
define the transitions of $\Reg_{\A,\psi}$ as follows.

\subsubsection{Time successors.}
Let $\timeSuc \notin\Sigma$ be a letter representing time elapsing.  Let
$(q,C)$ be a state of $\Reg_{\A,\psi}$, with $\lambda(q)=k$, and let
$\underline{C} \in \Dc_{\A,\psi}^{k-1}$ be the projection of $C$ onto
$\R^{k-1}$ and $-\infty=f_0 < \cdots <f_{r+1}=+\infty$ be
the functions dividing $\underline{C}$ as in
Definition~\ref{def:cad}.
The $\timeSuc$ transitions are defined as
follows:
\begin{itemize}
\item if $C = \left\{\left(x,f_i(x)\right) \mmid x \in
    \underline{C}\right\}$ for some $i \in \oneto{r}$, then there is a
  transition $(q,C) \tr{\timeSuc} (q,C')$ where $C' = \left\{(x,y) \mmid x
    \in \underline{C}, f_{i}(x) < y < f_{i+1}(x)\right\}$;
\item if $C = \left\{(x,y) \mmid x \in \underline{C}, f_{i-1}(x)
    < y < f_{i}(x)\right\}$ for some $i \in \oneto{r}$, then
  there is a transition $(q,C) \tr{\timeSuc} (q,C')$ where $C' =
  \left\{\left(x,f_{i}(x)\right) \mmid x \in
    \underline{C}\right\}$;
\item otherwise, $C = \left\{(x,y) \mmid x \in \underline{C},
    f_{r}(x) < y < f_{r+1}(x)\right\}$, and there is a
  self-loop labeled by $\timeSuc$: $(q,C) \tr{\timeSuc} (q,C)$.
\end{itemize}
In all the above cases, $C'$ is called the \emph{time successor} of
$C$ (in the last case, $C$ is its own time successor).

\begin{proposition}[Correctness w.r.t. time elapsing]\label{prop:timeSteps}
Let $v$ be a valuation of a cell $C$ of level $k$.
\begin{itemize}
\item There exists $d>0$ such that the elapsing of $d$ time units for
  $x_k$ yields a valuation $v+_k d \in C'$, the time successor of $C$.
\item For any $0<d'<d$, the elapsing of $d'$ time units for $x_k$
  yields a valuation $v+_k d$ that is either in $C$ or in
  $C'$.
\end{itemize}
\end{proposition}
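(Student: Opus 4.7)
The plan is to exploit the fact that time elapsing for the active clock $x_k$ only modifies the $k$-th coordinate of $v$, leaving $\underline{v}:=(v(x_1),\dots,v(x_{k-1}))\in\underline{C}$ fixed. Consequently $v+_k d$ traces the vertical half-line $\{\underline{v}\}\times[v(x_k),+\infty)$, whose slicing by the cells of level $k$ sitting above $\underline{C}$ is entirely controlled by the real numbers $-\infty=f_0(\underline{v})<f_1(\underline{v})<\dots<f_r(\underline{v})<f_{r+1}(\underline{v})=+\infty$ obtained by evaluating at $\underline{v}$ the continuous functions that partition the cylinder above $\underline{C}$ (Definition~\ref{def:cad}). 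So the whole proposition reduces to a one-dimensional observation on this half-line.

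I would then proceed by a case analysis matching the three shapes distinguished in the definition of time successor just before the proposition. If $C=\{(x,f_i(x))\mid x\in\underline{C}\}$ is a graph cell, then $v(x_k)=f_i(\underline{v})$ and any $d$ with $0<d<f_{i+1}(\underline{v})-f_i(\underline{v})$ (or any $d>0$ when $f_{i+1}=+\infty$) yields $f_i(\underline{v})<v(x_k)+d<f_{i+1}(\underline{v})$, placing $v+_k d$ in the strip $C'=\{(x,y)\mid f_i(x)<y<f_{i+1}(x)\}$; the same inequality holds a fortiori for every $0<d'<d$, so both bullets are satisfied with $v+_k d'\in C'$. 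If $C=\{(x,y)\mid f_{i-1}(x)<y<f_i(x)\}$ is a strip cell, then $v\in C$ forces $f_{i-1}(\underline{v})<v(x_k)<f_i(\underline{v})$; I choose $d=f_i(\underline{v})-v(x_k)>0$, so that $v+_k d$ lies on the graph $\{(x,f_i(x))\mid x\in \underline{C}\}=C'$, while for any $0<d'<d$ the value $v(x_k)+d'$ stays strictly between $f_{i-1}(\underline{v})$ and $f_i(\underline{v})$, so $v+_k d'\in C$. Finally, if $C$ is the top unbounded strip above $f_r$, then $C'=C$, and every $d>0$ trivially keeps $v+_k d$ in $C=C'$.

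The only point worth double-checking is that in the strip case the delay $d$ chosen is really positive, which is immediate from the strict inequality $v(x_k)<f_i(\underline{v})$ granted by $v\in C$. There is no genuine obstacle: the argument only uses the values of the $f_i$ at the single point $\underline{v}$, not their continuity on $\underline{C}$ (continuity was already needed upstream to guarantee that the partition of $\underline{C}\times\R$ into the children cells is well-defined). The proposition therefore follows by a direct unpacking of the definitions of cell, time successor, and of the \polita semantics, where only the active clock $x_{\lambda(q)}$ evolves under a time step.
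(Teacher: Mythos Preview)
Your proof is correct and follows essentially the same three-case analysis as the paper's own proof, with the same choices of delay in each case. If anything, your treatment of the graph-cell case is slightly more careful in explicitly handling the possibility $f_{i+1}=+\infty$ (when $i=r$), whereas the paper writes $(f_{i+1}(x)-f_i(x))/2$ without singling out that boundary situation.
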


\begin{proof}
We again distinguish the possible cases for $C$:
\begin{itemize}
\item If $C = \left\{\left(x,f_i(x)\right) \mmid x \in
    \underline{C}\right\}$ for some $i \in \oneto{r}$, then the time
  successor $C' = \left\{(x,y) \mmid x \in \underline{C},
    f_{i}(x) < y < f_{i+1}(x)\right\}$.  Then
  $v=(x,f_i(x))$.  By elapsing
  $\frac{f_{i+1}(x)-f_i(x)}2$ time units in level $k$, one
  clearly obtains a valuation of $C'$.  Moreover, for every inferior
  delay $d'$, $v+_k d'$ is also in $C'$.
\item If $C = \left\{(x,y) \mmid x \in \underline{C}, f_{i-1}(x)
    < y < f_{i}(x)\right\}$ for some $i \in \oneto{r}$, then $C'
  = \left\{\left(x,f_{i}(x)\right) \mmid x \in
    \underline{C}\right\}$.  Then $v=(x,y)$ with $f_{i-1}(x) < y
  < f_{i}(x)$.  By elapsing $f_{i}(x)-y$ time units in level
  $k$, one clearly obtains a valuation of $C'$.  Moreover, for every
  inferior delay $d'$, $v+_k d'$ remains in $C$.
\item Otherwise, $C = \left\{(x,y) \mmid x \in \underline{C},
    f_{r}(x) < y < f_{r+1}(x)=+\infty\right\}$, and any time
  elapsing for $x_k$ keeps the valuation in $C$.\qed
\end{itemize}
\end{proof}

\subsubsection{Discrete successors.}
Since $\Dc_{\A,\psi}$ is adapted in particular to $\Poly(\A)$ which contains all guards, we
have the following result:
\begin{lemma}\label{lem:guardsCAD}
  Let $C \in \Dc_{\A,\psi}$ be a cell of the aforementioned CAD.  Let $v\in
  C$ be a valuation.  Then for any $v' \in C$ and for every guard
  $\fee$ appearing in $\A$, $v'\models \fee$ if, and only if,
  $v\models \fee$.
\end{lemma}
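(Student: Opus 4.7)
The plan is to unfold the definitions of the CAD and of sign-invariance, and to show that the lemma is essentially a direct consequence of how $\Dc_{\A,\psi}$ was constructed.

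First I would recall that by construction, $\Dc_{\A,\psi}$ is a cylindrical algebraic decomposition adapted to the family $\Poly(\A) \cup \Poly(\psi)$. By Definition~\ref{def:cad}, this means in particular that for every cell $C$ of level $i$, every polynomial $P \in \Poly(\A) \cup \Poly(\psi)$ restricted to $\R^i$ has constant sign on $C$: for any $v, v' \in C$, $\sign(P(v)) = \sign(P(v'))$.

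Next I would observe that any guard $\fee$ appearing on a transition of $\A$ is by definition (see Definition~\ref{def:polita}) a conjunction $\bigwedge_i P_i \rel_i 0$ where each $P_i$ lies in $\Poly(\A)$, hence in the family to which $\Dc_{\A,\psi}$ is adapted. For any $v \in C$ and each atom $P_i \rel_i 0$ of $\fee$, the value $\sign(P_i(v))$ determines whether $v \models P_i \rel_i 0$; since this sign is the same for any other $v' \in C$, we obtain $v \models P_i \rel_i 0 \iff v' \models P_i \rel_i 0$.

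Taking the conjunction over $i$ then yields $v \models \fee \iff v' \models \fee$, which is the desired statement. The only subtle point — and the one to flag carefully — is that although valuations live in $\R^n$ and cells of $\Dc_{\A,\psi}$ may have various levels, guards involving only clocks $x_1,\dots,x_k$ depend solely on the projection of $v$ onto $\R^k$, and the projection of a cell of level $n$ onto $\R^k$ is again a cell (of level $k$) in the cylindrical structure, so sign-invariance transfers directly. No serious obstacle is expected; the lemma is a bookkeeping consequence of the adaptation of $\Dc_{\A,\psi}$ to the polynomials appearing in $\A$.
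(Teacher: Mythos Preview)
Your proposal is correct and matches the paper's approach: the paper does not give a separate proof of this lemma but presents it as an immediate consequence of the fact that $\Dc_{\A,\psi}$ is adapted to $\Poly(\A)$, which is exactly the sign-invariance argument you spell out. Your remark about levels and projections is a reasonable clarification but is not needed for the paper's purposes, since by Definition~\ref{def:cad} a $\P$-invariant cell of level $i$ already has constant sign for every $P \in \P_j$ with $j \leq i$.
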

Hence we can write $C \models \fee$ whenever $v\models \fee$ and $v \in C$.

Moreover, for every update $x_i := P_i$ there is a polynomial
$x_i-P_i$ in $\Poly(\A)$, which has value $0$ if and only if $x_i =
P_i$; as a result:
\begin{lemma}\label{lem:updatesCAD}
  Let $C \in \Dc_{\A,\psi}^k$ be a cell of level $k$, $\underline{C}$ be the
  projection of $C$ onto $\R^{k-1}$ and $-\infty=f_0 < \cdots
  <f_{r+1}=+\infty$ be the semi-algebraic functions dividing
  $\underline{C}$ as in Definition~\ref{def:cad}.  Let $u$ be an
  update of the form $x_k := P$ for some polynomial $P \in
  \Q[x_1,\dots,x_{k-1}]$.  Then there exists an index $i \in\oneto{r}$
  such that, over $\underline{C}$, $f_i = P$.
\end{lemma}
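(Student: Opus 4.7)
The plan is to exploit that the cylindrical decomposition $\Dc_{\A,\psi}$ is, by construction, adapted to the family $\Poly(\A)\cup\Poly(\psi)$, and to observe that every polynomial of the form $x_k-P$ arising from an update $x_k:=P$ is included in $\Poly(\A)$.

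First I would isolate the polynomial of interest. By the definition of $\Poly(\A)$ given at the beginning of Section~\ref{sec:reach}, any update $u = \bigwedge_{i=1}^{n} x_i := P_i$ on a transition contributes each difference $x_i-P_i$ to $\Poly(\A)$. In particular, for the update $u$ of the form $x_k:=P$ with $P\in\Q[x_1,\dots,x_{k-1}]$, the polynomial $R := x_k - P$ lies in $\Poly(\A)$ and hence belongs to the family to which $\Dc_{\A,\psi}$ is adapted.

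Next I would read off the structure of $R$ viewed as a polynomial in the last variable. Since $P$ does not involve $x_k$, the polynomial $R$, considered in $\Q[x_1,\dots,x_{k-1}][x_k]$, is monic of degree $1$ with unique root $x_k = P(x_1,\dots,x_{k-1})$; in particular this root exists for every value of $(x_1,\dots,x_{k-1})\in\underline{C}$, its multiplicity is $1$, and the map $(x_1,\dots,x_{k-1})\mapsto P(x_1,\dots,x_{k-1})$ is a continuous (in fact polynomial) function on $\underline{C}$.

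Finally I would invoke the defining property of the cylindrical decomposition. Because $\Dc_{\A,\psi}$ is adapted to a family containing $R$, Definition~\ref{def:cad} (together with the construction of the lifting step described in Section~\ref{subsec:decomposition}) implies that the dividing functions $f_1<\dots<f_r$ above $\underline{C}$ are precisely the continuous functions whose graphs describe the real roots of the polynomials of the adapted family that are not identically zero on $\underline{C}$, seen as polynomials in $x_k$. Since $R$ is monic in $x_k$ on all of $\underline{C}$, it is not identically zero, and its unique real root over $\underline{C}$ is the function $P$. Therefore $P$ must coincide over $\underline{C}$ with one of the $f_i$, which establishes the claim. The only subtle point is to make sure that $R$ is genuinely involved in the lifting at level $k$ above $\underline{C}$ (it could \emph{a priori} degenerate), but monicity of $R$ in $x_k$ rules this out; this is the single step that needs to be checked carefully. \qed
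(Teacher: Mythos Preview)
Your proof is correct and follows exactly the same approach as the paper, which in fact does not give a separate proof but simply states the lemma ``as a result'' of the one-line observation that for every update $x_i:=P_i$ the polynomial $x_i-P_i$ belongs to $\Poly(\A)$ and vanishes precisely when $x_i=P_i$. Your write-up is a more detailed and careful version of this same argument, including the useful remark that monicity of $x_k-P$ in $x_k$ prevents degeneration over $\underline{C}$.
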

As a corollary, there exists a unique cell $C' \in \Dc_{\A,\psi}^k$ such that for any
valuation $v\in C$, $v[u] \in C'$, namely $C' = \{(x,f_i(x)) \mid
x\in\underline{C}\}$, which can be written $C[u]$.
\medskip

Discrete transitions of $\A$ are translated as follows into $\Reg_{\A,\psi}$:
if $(q,\fee,a,u,q') \in \Delta$ and $C \models \fee$, there is a
transition $(q,C) \tr{a} (q',C[u])$.

\begin{proposition}[Correctness w.r.t. discrete steps]\label{prop:discreteSteps}
\begin{itemize}
\item If $(q,v) \tr{a} (q',v') \in \T_\A$, then $(q,C) \tr{a} (q',C')
  \in \Reg_\A$ with $v \in C$ and $v' \in C'$.
\item If $(q,C) \tr{a} (q',C') \in \Reg_\A$ then for all $v \in C$
  there exists $v' \in C'$ such that $(q,v) \tr{a} (q',v') \in \T_\A$.
\end{itemize}
\end{proposition}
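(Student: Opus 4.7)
The plan is to verify both directions as direct consequences of the two preceding lemmas: Lemma~\ref{lem:guardsCAD} for invariance of guard satisfaction on each CAD cell, and Lemma~\ref{lem:updatesCAD} for the functionality of polynomial updates across cells. Together they already encode exactly what the abstraction $\Reg_{\A,\psi}$ needs, so essentially no new work is required beyond unfolding the definitions of $\T_{\A}$ and $\Reg_{\A,\psi}$.

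For the first implication, I would start from a concrete discrete step $(q,v) \tr{a} (q',v')$ of $\T_{\A}$ and invoke Definition~\ref{def:semantics} to recover an underlying transition $q \tr{\fee,a,u} q' \in \Delta$ with $v \models \fee$ and $v' = v[u]$. Taking $C$ to be the unique cell of $\Dc_{\A,\psi}$ containing $v$, Lemma~\ref{lem:guardsCAD} upgrades $v \models \fee$ to $C \models \fee$, and I set $C' := C[u]$. The corollary of Lemma~\ref{lem:updatesCAD} guarantees that $C[u]$ is a well-defined cell of $\Dc_{\A,\psi}$ and that $v[u] \in C[u]$; the construction of $\Reg_{\A,\psi}$ then immediately yields the abstract transition $(q,C) \tr{a} (q',C')$ with $v \in C$ and $v' \in C'$. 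The second implication is the mirror image: from $(q,C) \tr{a} (q',C')$, recover the underlying $q \tr{\fee,a,u} q'$ with $C \models \fee$ and $C' = C[u]$; for any $v \in C$, Lemma~\ref{lem:guardsCAD} gives $v \models \fee$, so $(q,v) \tr{a} (q',v[u])$ is a step of $\T_{\A}$, and the same corollary places $v' := v[u]$ in $C[u] = C'$.

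The only subtlety I anticipate is that Lemma~\ref{lem:updatesCAD} is stated for a single assignment $x_k := P$, whereas an update in a \polita is a conjunction $\bigwedge_{i=1}^n x_i := P_i$. However the definition of \polita restricts each $P_i$ to be either $x_i$ (identity), $0$ (reset above the arrival level), or a polynomial in strictly lower-level clocks; in every case $x_i - P_i \in \Poly(\A)$ and therefore $\Dc_{\A,\psi}$ is adapted to it. Applying the update lemma componentwise from the lowest level upward, and using the cylindrical structure to propagate the invariance, shows that $v[u]$ lies in a cell of $\Dc_{\A,\psi}$ determined solely by the cell containing $v$, so that $C[u]$ is unambiguously defined. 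This componentwise bookkeeping is the only place where care is needed; once handled, both assertions follow without further calculation.
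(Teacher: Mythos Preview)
Your proposal is correct and follows essentially the same route as the paper's own proof: unfold the definitions of $\T_\A$ and $\Reg_{\A,\psi}$, then invoke Lemma~\ref{lem:guardsCAD} for guard invariance and (the corollary of) Lemma~\ref{lem:updatesCAD} for $v[u]\in C[u]$, in both directions. Your third paragraph on the componentwise handling of a general update $u=\bigwedge_i x_i:=P_i$ is actually more careful than the paper, which silently treats $C[u]$ as already well-defined; this extra care is justified and does not change the argument.
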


\begin{proof}~
\begin{itemize}
\item First, $(q,v) \tr{a} (q',v') \in \T_\A$ implies that there is a
  transition $(q,\fee,a,u,q')$ such that $v\models \fee$ and
  $v'=v[u]$.  By Lemma~\ref{lem:guardsCAD}, we have that $C\models
  \fee$.  In addition, we have by Lemma~\ref{lem:updatesCAD} that
  $v'=v[u] \in C[u]$.  By the definition of $\Reg_{\A,\psi}$, there is a
  transition $(q,C) \tr{a} (q',C[u]) \in\Reg_{\A,\psi}$.
\item Transition $(q,C) \tr{a} (q',C') \in \Reg_{\A,\psi}$ only exists
  because of a transition $(q,\fee,a,u,q') \Delta$, and we have
  $C'=C[u]$.  Let $v \in C$.  Since $C \models \fee$, by
  Lemma~\ref{lem:guardsCAD} we have that $v\models \fee$.  Hence there
  is a transition $(q,v) \tr{a} (q',v[u]) \in \T_\A$.  By
  Lemma~\ref{lem:updatesCAD}, $v[u] \in C[u]$, which concludes the
  proof.  \qed
\end{itemize}
\end{proof}

\begin{example}
Part of this abstraction for deciding reachability in \polita $\A_0$ (\figurename~\ref{fig:exPolIta}, page~\pageref{fig:exPolIta}) is depicted on \figurename~\ref{fig:exPolItaRegions}.
In this figure, points are given by the triangular system representing them.
Computations of sample points for intervals between roots where omitted, and only appear in the graph as roots of derivatives.
Note that  having no $a$ edge from state $q_0, 1, (5,Int)$ is not an omission, but a consequence of the guard $x_1^2 \leq x_1 +1$ no longer being satisfied.
In this graph, $C_+$ is the polynomial obtained when replacing $X_2$ by $X_2-1$ in $C$.
Faded states and transitions are unreachable but are nonetheless constructed from the decomposition.
\end{example}

\begin{figure}
\centering
\begin{tikzpicture}[auto,node distance=0.95cm and 1.25cm]

\useasboundingbox (-0.5,3) rectangle (11.5,-15);

\tikzstyle{absstate}=[state,shape=rectangle,rounded corners=12pt]
\tikzstyle{unreach}=[opacity=0.5]
\tikzstyle{dots}=[node distance=0.5cm and 0.75cm,inner ysep=0pt,inner xsep=1pt]
\tikzstyle{ddots}=[node distance=0.25cm and 0.05cm,inner ysep=0pt,inner xsep=1pt]
\tikzstyle{time}=[->,dashed]
\tikzstyle{trans}=[->]

\node[absstate,initial] (q0zero) {$\begin{array}{c} q_0, 1 \\ (1,I_1) \end{array}$};
\node[absstate,right=of q0zero] (q1zero) {$\begin{array}{c} q_1, 2 \\ (1,I_1) (1,I_2)\end{array}$};

\node[absstate,unreach,above=of q0zero] (q0AIpr1) {$\begin{array}{c} q_0, 1 \\ (1,(AI_1)')\end{array}$};
\node[dots,unreach,above=of q0AIpr1] (forDotsAbove) {\raisebox{5pt}{$\vdots$}};
\node[absstate,unreach,right=of q0AIpr1] (q1AIpr1) {$\begin{array}{c} q_1, 2 \\ (1,(AI_1)') (1,I_2)\end{array}$};

\node[absstate,below=of q0zero] (q0IDpr1) {$\begin{array}{c} q_0, 1 \\ (1,(I_1D)')\end{array}$};
\node[absstate,right=of q0IDpr1] (q1IDpr1) {$\begin{array}{c} q_1, 2 \\ (1,(I_1D)') (1,I_2)\end{array}$};

\node[absstate,below=of q0IDpr1] (q0D1) {$\begin{array}{c} q_0, 1 \\ (1,D)\end{array}$};
\node[absstate,right=of q0D1] (q1D1) {$\begin{array}{c} q_1, 2 \\ (1,D) (1,I_2)\end{array}$};

\node[absstate,below=of q0D1] (q0DFpr3) {$\begin{array}{c} q_0, 1 \\ (3,(DF)')\end{array}$};
\node[absstate,right=of q0DFpr3] (q1DFpr3) {$\begin{array}{c} q_1, 2 \\ (3,(DF)') (1,I_2)\end{array}$};

\node[absstate,below=of q0DFpr3] (q0F1) {$\begin{array}{c} q_0, 1 \\ (1,F)\end{array}$};
\node[absstate,right=of q0F1] (q1F1) {$\begin{array}{c} q_1, 2 \\ (1,F) (1,I_2)\end{array}$};

\node[absstate,below=of q0F1] (q0FApr4) {$\begin{array}{c} q_0, 1 \\ (4,Int)\end{array}$};
\node[absstate,right=of q0FApr4] (q1FApr4zero) {$\begin{array}{c} q_1, 2 \\ (4,Int) (1,I_2)\end{array}$};
\node[ddots,unreach,above left=of q1FApr4zero] (q1FApr4zeroDotsAbove) {\raisebox{5pt}{$\ddots$}};
\path[time,unreach] (q1FApr4zeroDotsAbove) edge (q1FApr4zero);
\path[trans] (q0FApr4) edge node{$a$} (q1FApr4zero);
\node[absstate,right=of q1FApr4zero] (q1FApr4BIpr2) {$\begin{array}{c} q_1, 2 \\ (4,Int) (2,(BI_2)')\end{array}$};
\path[time] (q1FApr4zero) edge (q1FApr4BIpr2);
\node[absstate,above=of q1FApr4BIpr2] (q1FApr4B2) {$\begin{array}{c} q_1, 2 \\ (4,Int) (2,B)\end{array}$};
\path[time] (q1FApr4BIpr2) edge (q1FApr4B2);
\node[absstate,above=of q1FApr4B2] (q1FApr4BCpr2) {$\begin{array}{c} q_1, 2 \\ (4,Int) (2,(BC)')\end{array}$};
\path[time] (q1FApr4B2) edge (q1FApr4BCpr2);
\node[absstate,above=of q1FApr4BCpr2] (q1FApr4C1) {$\begin{array}{c} q_1, 2 \\ (4,Int) (1,C)\end{array}$}; 
\path[time] (q1FApr4BCpr2) edge (q1FApr4C1);
\node[absstate,above=of q1FApr4C1] (q1FApr4Cp1) {$\begin{array}{c} q_1, 2 \\ (4,Int) (1,C_+)\end{array}$}; 
\path[time] (q1FApr4C1) edge (q1FApr4Cp1);
\path[time] (q1FApr4Cp1) edge[loop above] (q1FApr4Cp1);

\node[absstate,right=of q1FApr4BCpr2] (q2FApr4BCpr2) {$\begin{array}{c} q_2, 2 \\ (4,Int) (2,(BC)')\end{array}$};
\node[absstate,above=of q2FApr4BCpr2] (q2FApr4C1) {$\begin{array}{c} q_2, 2 \\ (4,Int) (1,C)\end{array}$};
\path[time] (q2FApr4BCpr2) edge (q2FApr4C1);
\node[absstate,above=of q2FApr4C1] (q2FApr4Cp1) {$\begin{array}{c} q_2, 2 \\ (4,Int) (1,C_+)\end{array}$};
\path[time] (q2FApr4C1) edge (q2FApr4Cp1);
\path[time] (q2FApr4Cp1) edge[loop above] (q2FApr4Cp1);
\node[absstate,unreach,below=of q2FApr4BCpr2] (q2FApr4B2) {$\begin{array}{c} q_2, 2 \\ (4,Int) (2,B)\end{array}$};
\node[absstate,unreach,below=of q2FApr4B2] (q2FApr4BIpr2) {$\begin{array}{c} q_2, 2 \\ (4,Int) (2,(BI_2)')\end{array}$};
\node[dots,unreach,below=of q2FApr4BIpr2] (q2FApr4BIpr2DotsBelow) {\raisebox{5pt}{$\vdots$}};
\path[time,unreach] (q2FApr4B2) edge (q2FApr4BCpr2);
\path[time,unreach] (q2FApr4BIpr2) edge (q2FApr4B2);
\path[time,unreach] (q2FApr4BIpr2DotsBelow) edge (q2FApr4BIpr2);
\path[trans,unreach] (q2FApr4B2) edge node {$c$} (q1FApr4B2);
\path[trans,unreach] (q2FApr4BIpr2) edge node {$c$} (q1FApr4BIpr2);

\path[trans] (q1FApr4BCpr2) edge[bend left=15] node {$b$} (q2FApr4BCpr2);
\path[trans] (q2FApr4BCpr2) edge[bend left=15] node {$c$} (q1FApr4BCpr2);
\path[trans] (q1FApr4C1) edge[bend left=15] node {$b$} (q2FApr4C1);
\path[trans] (q2FApr4C1) edge[bend left=15] node {$c$} (q1FApr4C1);
\path[trans] (q1FApr4Cp1) edge  node {$b$} (q2FApr4Cp1);

\node[absstate,below=of q0FApr4] (q0A2) {$\begin{array}{c} q_0, 1 \\ (2,A)\end{array}$};
\node[absstate,right=of q0A2] (q1A2) {$\begin{array}{c} q_1, 2 \\ (2,A) (1,I_2)\end{array}$};

\node[absstate,below=of q0A2] (q0FApr5) {$\begin{array}{c} q_0, 1 \\ (5,Int)\end{array}$};
\node[dots,below=of q0FApr5] (forDotsBelow) {\raisebox{5pt}{$\vdots$}};
%
%
%
%
%
%

\path[time,unreach] (forDotsAbove) edge (q0AIpr1);
\path[time,unreach] (q0AIpr1) edge (q0zero);
\path[time] (q0zero) edge (q0IDpr1);
\path[time] (q0IDpr1) edge (q0D1);
\path[time] (q0D1) edge (q0DFpr3);
\path[time] (q0DFpr3) edge (q0F1);
\path[time] (q0F1) edge (q0FApr4);
\path[time] (q0FApr4) edge (q0A2);
\path[time] (q0A2) edge (q0FApr5);
\path[trans] (q0FApr5) edge[bend left=40,looseness=0.5,swap] node {$a'$} (q0zero);
\path[time] (q0FApr5) edge (forDotsBelow);

\foreach \sample in {zero,IDpr1,D1,DFpr3,F1,A2} {
\node[ddots,unreach,above left=of q1\sample] (q1\sample DotsAbove) {\raisebox{5pt}{$\ddots$}};
\node[ddots,above right=of q1\sample] (q1\sample DotsRight) {\rotatebox{70}{$\ddots$}};
\path[time] (q1\sample) edge (q1\sample DotsRight);
\path[time,unreach] (q1\sample DotsAbove) edge (q1\sample);
\path[trans] (q0\sample) edge node{$a$} (q1\sample);
}

\foreach \sample in {AIpr1} {
\node[ddots,unreach,above left=of q1\sample] (q1\sample DotsAbove) {\raisebox{5pt}{$\ddots$}};
\node[ddots,unreach,above right=of q1\sample] (q1\sample DotsRight) {\rotatebox{70}{$\ddots$}};
\path[time,unreach] (q1\sample) edge (q1\sample DotsRight);
\path[time,unreach] (q1\sample DotsAbove) edge (q1\sample);
\path[trans,unreach] (q0\sample) edge node{$a$} (q1\sample);
}

\end{tikzpicture}
\caption[Partial depiction of $\Reg_{\A_0}$.]{Partial depiction of $\Reg_{\A_0}$.\\ Dashed edges correspond to time successors $\timeSuc$; faded states are unreachable.}
\label{fig:exPolItaRegions}
\end{figure}

\subsubsection{Labeling with atomic propositions.}

Finally, we translate a comparison $P \rel 0$ in $\psi$ into a fresh atomic proposition $p_{P \rel 0}$ and label $\Reg_{\A,\psi}$ as follows.
Note that since $\Dc_{\A,\psi}$ is in particular adapted to $\Poly(\psi)$, every cell $C$ of $\Dc_{\A,\psi}$ is sign-invariant for $P$, hence the truth value of $P \rel 0$ is constant in $C$.
As a result, it makes sense to write $C\models P \rel 0$ whenever $v \models P\rel0$ for some $v\in C$, and proposition $p_{P \rel 0}$ is true in every state $(q,C)$ where $C \models P \rel 0$.
We write $\overline{\psi}$ the formula where each $P \rel 0$ has been replaced by $p_{P \rel 0}$.

\begin{proposition}\label{prop:formulaConservation}
$\A \models \psi$ if, and only if, $\Reg_{\A,\psi} \models \overline{\psi}$.
\end{proposition}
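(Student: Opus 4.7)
The plan is to establish a stronger statement by structural induction: for every configuration $(q,v) \in \T_\A$, every cell $C \in \Dc_{\A,\psi}^{\lambda(q)}$ with $v \in C$, and every subformula $\varphi$ of $\psi$, one has $(q,v) \models \varphi$ if and only if $(q,C) \models \overline{\varphi}$. The proposition follows by applying this to $(q_0,\vect{0})$ and the unique cell $C_0$ of $\Dc_{\A,\psi}^{\lambda(q_0)}$ containing $\vect{0}$, which is the initial state of $\Reg_{\A,\psi}$.

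The key technical lemma is that the relation $R = \{((q,v),(q,C)) \mid v \in C\}$ is a time-abstract bisimulation between $\T_\A$ and $\Reg_{\A,\psi}$ preserving atomic propositions. Preservation of state propositions is immediate, and preservation of each fresh proposition $p_{P \rel 0}$ follows from the fact that $\Dc_{\A,\psi}$ is adapted to $\Poly(\psi)$, so cells are sign-invariant for every $P$ appearing in $\psi$. For discrete steps, Proposition~\ref{prop:discreteSteps} supplies both directions of the bisimulation. For time elapsing, Proposition~\ref{prop:timeSteps} asserts that, from any $v \in C$, delays in the active clock either keep the valuation in $C$ or move it to the unique $\timeSuc$-successor $C'$; iterating this argument along a concrete time step $(q,v) \tr{d} (q,v+_{\lambda(q)} d)$ decomposes it into a finite alternating sequence of sojourns in cells and crossings of $\timeSuc$ edges, and conversely any finite path of $\timeSuc$ edges in $\Reg_{\A,\psi}$ is witnessed by an appropriate choice of delays in $\T_\A$.

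The induction on $\varphi$ is then routine for the atomic and Boolean cases. The interesting cases are $\varphi = \expath \fee \until \psi$ and $\varphi = \always \fee \until \psi$. For the existential case, a witnessing run $\rho$ of $\T_\A$ from $(q,v)$ is lifted to an abstract run $\overline{\rho}$ of $\Reg_{\A,\psi}$ from $(q,C)$ by replacing each time step by its induced sequence of $\timeSuc$ transitions as above, and each discrete step by the corresponding abstract discrete step from Proposition~\ref{prop:discreteSteps}. Conversely, any abstract run $\overline{\rho}$ from $(q,C)$ is realized by a concrete run $\rho$ from $(q,v)$ by invoking the second halves of Propositions~\ref{prop:timeSteps} and~\ref{prop:discreteSteps} to choose, at each abstract step, a delay or successor valuation consistent with the next cell. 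The universal case is dual and uses the same lifting/realization correspondence between complete sets of runs.

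The main obstacle is reconciling the dense lexicographic order on positions of a concrete run with the discrete structure of an abstract path, especially for the until operator. The crux is the following observation, enabled by the inductive hypothesis and atomic-proposition preservation: along a concrete run $\rho$, the truth values of $\fee$ and $\psi$ are constant on each maximal sub-interval of positions whose configurations sit in a single cell of $\Dc_{\A,\psi}$. Consequently, the set of positions of $\rho$ where either formula changes truth is in bijection with the vertices of $\overline{\rho}$, and a witnessing position $\pi$ for $\psi$ on $\rho$ lies in a cell visited by $\overline{\rho}$, while the requirement $s_{\pi'} \models \fee \vee \psi$ for $\pi' <_\rho \pi$ translates cell-by-cell into the analogous requirement on $\overline{\rho}$. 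This reduces the continuous until semantics to the discrete until on the abstract run and closes the induction.
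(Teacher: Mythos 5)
Your proof is correct and is essentially the argument the paper leaves implicit: the paper states this proposition without any proof, treating it as an immediate consequence of Propositions~\ref{prop:timeSteps} and~\ref{prop:discreteSteps} together with the sign-invariant labeling of cells by the propositions $p_{P \rel 0}$, which is exactly the bisimulation-plus-structural-induction argument you spell out. Your treatment of the dense until via per-cell constancy of subformula truth values (and the resulting order-preserving map from positions of a concrete run to vertices of the lifted abstract path) correctly fills in the only genuinely delicate point that the paper glosses over.
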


Note that $\overline{\psi}$ is a \CTL formula, which can be checked with the usual polynomial time labeling procedure.
Since the number of cells in a cylindrical decomposition is doubly exponential in the number of clocks and polynomial in the number and maximal degree of polynomials to which it is adapted~\cite{BPR}, we obtain the complexity stated in Theorem~\ref{thm:modelchecking}.

\subsection{On-the-fly algorithm}

Propositions~\ref{prop:timeSteps} and~\ref{lem:updatesCAD} provide decidability of the model checking problem, by the algorithm that builds the finite graph $\Reg_{\A,\psi}$ verifies that $\overline\psi$ is satisfied in this graph.

However, building the complete graph is not efficient in practice, since it requires to build the set of all cells beforehand.
In the sequel, we show an on-the-fly algorithm that builds only the reachable part of $\Reg_{\A,\psi}$.
This algorithm would not, for example, build the faded states of $\Reg_{\A_0}$ in \figurename~\ref{fig:exPolItaRegions}.

The key to the on-the-fly algorithm is to store only the part of the tree corresponding to the current sample point and its time successors.
This construction is akin to what is done in \figurename~\ref{fig:linePartitioningR2}, where only the line partitioning for $X_2$ above the current sample point is computed by the lifting phase, while line partitioning above, for, say, sample point $(1,F)$ is not computed.
As a result, we do not keep the whole tree but only part of it.

We show that this information is sufficient to compute the successors through time elapsing and transition firing.
Nonetheless, remark that although this pruning yields better performances in practice, the computational complexity in the worst case is not improved: the line partitioning at the first level already requires doubly exponential time, since the elimination phase is required.

\begin{definition}[Pruned tree]
Let $\{\P_k\}_{k\leq n}$ be the polynomials obtained by the elimination phase.
The \emph{pruned tree} for sample point $(\alpha_1,\dots,\alpha_k)$ is the sequence of completed line partitionings for sample points $\{(\alpha_1,\dots,\alpha_i)\}_{1\leq i\leq k}$.
By convention, the pruned tree for the empty sample point ($k=0$) is the line partitioning at level $1$.
\end{definition}

Given a clock valuation $(v_1,\dots,v_k,0,\dots,0)$ at level $k$, it can be represented by a sample point $(\alpha_1,\dots,\alpha_k)$, or, equivalently, by a pruned tree for sample point $(\alpha_1,\dots,\alpha_{k-1})$ and the index $m$ of $\alpha_k$ in the line partitioning for $(\alpha_1,\dots,\alpha_{k-1})$.
In this representation, computing the time successors of $(\alpha_1,\dots,\alpha_k)$ is simply done by incrementing $m$ (if it is not the maximal index in the line partitioning).
Note that in this algorithm we do not loop on the rightmost cell; although it is convenient to assume in $\Reg_{\A}$ that a time successor always exists, it has no effect regarding the reachability problem.

The set of enabled discrete transitions can be generated by computing the signs (see Algorithm~\ref{algo-sign-realization} page~\pageref{algo-sign-realization}) of polynomials appearing in guards.
When a discrete transition $q\xrightarrow{g,a,u}q'$ is chosen, several cases should be distinguished with respect to the level of states $q$ and $q'$.
\begin{itemize}
\item If the level decreases, \emph{i.e.} $\lambda(q')<\lambda(q)$.
Then the pruned tree corresponding to the new configuration is only the topmost-part of height $\lambda(q')$ of the original pruned tree.
Otherwise said, we ``forget'' line partitionings for levels above $\lambda(q')$; however, the partitionings can be kept in memory in order not to have to recompute them later.
The new index is the index of $\alpha_{\lambda(q')}$ in the partitioned line for this level.
\item If the level doesn't change, \emph{i.e.} $\lambda(q')=\lambda(q)=k$.
The only way to change the clock values is through an update $x_k := P$ with $P \in \Q[X_1,\dots,X_{k-1}]$.
Then the polynomial of degree $1$ $R=X_k -P$ was added to $\Poly(\A)$ and its unique root $\alpha_k'$ appears in the line partitioning of level $\ell$.
Note that in the triangular system representing $(\alpha_1,\dots,\alpha_k')$ it may appear as $\dots(1,R)$ or some other equivalent value, hence to determine the index in the partitioned line the algorithm must actually determine the sign of $R$ for all sample points of the line until $0$ is found.
\item If the level increases, \emph{i.e.} $\lambda(q')>\lambda(q)$.
First there can be an update of $x_k$, hence the same computations as above must be performed in order to find the new sample point corresponding to the valuation of clocks up to $\lambda(q)$.
Then the pruned tree of height $\lambda(q')$ has to be computed.
This is done by $\lambda(q')-\lambda(q)$ lifting steps (Algorithm~\ref{algo-lifting} page~\ref{algo-lifting}).
Since all clocks remain null for levels above $\lambda(q)$, the sample points given as input\footnote{Although the actual input of the algorithm are triangular systems, assuming we have the system $\T$ for $(\alpha_1,\dots,\alpha_{\lambda(q)})$, the subsequent triangular systems are $\T \cup (1,X_{\lambda(q)+1})\dots$.} are $(\alpha_1,\dots,\alpha_{\lambda(q)},0,\dots,0)$.
\end{itemize}
\bigskip

Now the on-the-fly algorithm works as follows:
\begin{itemize}
\item Compute sets of polynomials $\{\P_i\}_{i\leq n}$ by the elimination phase.
\item Compute the completed line partitioning at level $1$.
\item Start at a the initial state.
If the level of the initial state is $k>1$, proceed with $k-1$ lifting phases as in the case of level increase.
Add this state in a queue.
\item Until the queue is empty:
\begin{itemize}
\item Compute the list of fireable discrete transitions and whether time successor is allowed.
\item Add all new successors through a  fireable discrete transition or a time step to the queue.
\end{itemize}
\item Apply the model checking algorithm on this graph.
\end{itemize}

\paragraph*{A note on efficient memory usage}
As noted above, a line partitioning only needs to be computed once.
In addition --~and this also holds for the complete construction of $\Reg_{\A,\psi}$~--,  the \emph{triangular} structure of triangular systems enables a sharing of line partitioning at lower levels.
Thus the size of the graph in memory is at most the size of the complete tree of the decomposition added, and not multiplied, by the number of states of the \polita.

\section{Expressiveness and extensions}
\label{sec:extensions}

We finally focus on expressiveness of \polita. After comparing this
class with stopwatch automata, we show how to extend it while keeping
decidable the above verification problems.  For sake of clarity, in
section~\ref{sec:definition} we have presented a basic model of
\polita. Here we show how to add three features consisting in: (1)
including parameters in the expressions of guards and updates, (2)
associating with each level a subset of auxiliary clocks, and (3)
allowing to update clocks of lower levels than the current one. Since
in the context of \ita, the first two extensions have already been
studied in~\cite{BHJL-rp13} and the third one in~\cite{BHS-fmsd2012},
our presentation will not be fully formalized.

\subsection{\polita \emph{vs} Stopwatch automata}

By syntax inclusion, \polita are at least as expressive as \ita.  As a
direct consequence, there exists a timed language accepted by a
\polita that is not accepted by a TA~\cite{BH-Fossacs09}.

There exists a timed language accepted by a timed automaton that is
not accepted by any \polita as presented above (the proof is a direct
adaptation from the one proving said language is not accepted by an
\ita~\cite{BHS-fmsd2012}), although it is accepted by the extension
with auxiliary clocks provided below (Section~\ref{sec:auxClocks}).

\smallskip The class of stopwatch automata (\swa), which also
syntactically contains the class of \ita, is however incomparable to
\polita.

\begin{proposition}\label{prop:politaNOTswa}
  There exists a timed language accepted by a \polita with a single
  clock that cannot be accepted by a stopwatch automaton.
\end{proposition}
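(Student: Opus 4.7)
The plan is to exhibit a concrete single-clock \polita whose timed language consists of a single timed word whose time is an irrational algebraic number, and then show that no \swa can accept it by exploiting the fact that \swa guards use only rational constants.

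For the witness, I would take $\A$ with two states $q_0$ (initial, level $1$) and $q_1$ (final, level $1$), linked by the single transition $q_0 \tr{x_1^2-2=0\,\wedge\,x_1\geq 0,\,a,\,x_1:=x_1} q_1$. Starting from $(q_0,\vect 0)$, time elapses on the only clock $x_1$, and the guard is satisfied exactly at the instant $x_1 = \sqrt 2$. Therefore $L(\A) = \{(a,\sqrt 2)\}$, and this $\A$ is syntactically a valid \polita (the guard uses a polynomial of $\Q[x_1]$ and the update is the trivial $x_1:=x_1$).

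For the impossibility part, I would suppose that some \swa $\B$ satisfies $L(\B) = L(\A)$ and analyse it path by path. Fix any discrete path $\pi$ of an accepting run of $\B$ producing exactly one occurrence of the letter $a$, with successive delays $d_0,\dots,d_m \geq 0$. Since every stopwatch of $\B$ has derivative in $\{0,1\}$, the value of any clock at any discrete transition of $\pi$ is a $\{0,1\}$-linear combination of the delays elapsed since the last reset of that clock. Consequently, the conjunction of all guards fired along $\pi$, together with the constraints $d_j \geq 0$, forms a rational linear system in $(d_0,\dots,d_m)$ and defines a rational polyhedron $P_\pi \subseteq \R_{\geq 0}^{m+1}$. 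The firing time of the unique $a$-transition of $\pi$ is a rational linear form $t = \sum_{j<j_a} d_j$, so by Fourier--Motzkin elimination its image $\Pi_\pi$ on $P_\pi$ is a (possibly empty or degenerate) interval of $\R_{\geq 0}$ whose endpoints lie in $\Q \cup \{\pm\infty\}$.

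To conclude, the set of firing times for $a$ in $L(\B)$ is $\bigcup_\pi \Pi_\pi$, which under our assumption equals $\{\sqrt 2\}$. Some $\Pi_\pi$ must therefore contain $\sqrt 2$; since $\sqrt 2 \notin \Q$, it cannot be an endpoint of the rational interval $\Pi_\pi$, so $\Pi_\pi$ is non-degenerate and contains a rational $t \neq \sqrt 2$, which gives $(a,t) \in L(\B)$ and contradicts $L(\B) = L(\A)$. The only subtle step is the Fourier--Motzkin projection in the stopwatch setting, but it reduces immediately to the standard projection of rational polyhedra once the accumulated clock values along $\pi$ are written out as explicit rational linear forms in the delays; the combinatorial proliferation of discrete paths of $\B$ is harmless because a countable union of rational intervals still cannot equal the singleton $\{\sqrt 2\}$.
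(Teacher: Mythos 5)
Your proof is correct and rests on the same core insight as the paper's: along any fixed discrete path of a stopwatch automaton, the constraints on the delays form a rational linear system, so the achievable firing times form a set with rational endpoints, which cannot isolate $\sqrt2$. The paper packages this as a standalone lemma (any accepting run can be replaced by one with all-rational delays) and uses the two-event witness $(a,1)(b,\sqrt2)$ so as to also be robust under time rescaling; you use a simpler one-event witness $(a,\sqrt2)$ and argue directly via Fourier--Motzkin projection onto the firing time, which is a minor repackaging of the same argument.
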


The proof of the above proposition relies on a lemma about runs accepted by a \swa.
Recall that in a stopwatch automaton, each clock can be active or inactive in every state.
Also recall that updates are restricted to resets\footnote{It is possible to simulate affectations to rational constants, but it does not change expressiveness of the model.} $x:=0$ and guards are comparisons to a rational constant\footnote{Again, diagonal constraints $x-y\rel c$ for $c\in\Q$ can be simulated.}.
In the remainder of the section, we use $+_{q}$ to denote addition only on stopwatches active in $q$.

\begin{lemma}\label{lem:irratDelaysSwa}
Let $\rho= (q_0,v_0) \xrightarrow{\delta_0} (q_0,v_0+_{q_0} \delta_0) \xrightarrow{g_0,a_0,u_0} (q_1,v_1) \cdots$ be a run in a stopwatch automaton.
Then there exists $\rho'= (q_0,v_0) \xrightarrow{\delta_0'} (q_0,v_0+_{q_0} \delta_0') \xrightarrow{g,a_1,u} (q_1,v_1) \cdots$ taking the same discrete transitions as $\rho$ such that $\forall i, \delta_i \in \Q$.
\end{lemma}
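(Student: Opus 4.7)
The plan is to reduce the claim to the classical fact that a consistent system of linear (in)equalities with rational coefficients always admits a rational solution. The key observation is that in a stopwatch automaton, the value of every clock at every point of a run is a rational affine function of the elapsed delays, and guards are comparisons with rational constants.

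More precisely, I would first establish by induction on $i$ that for every clock $x$, the value $v_i(x)$ is of the form $r^{x}_i + \sum_{j < i} \chi^{x}_{i,j}\, \delta_j$, where $\chi^{x}_{i,j} \in \{0,1\}$ equals $1$ iff $x$ is active at $q_j$ and not reset between $j$ and $i$, and $r^{x}_i \in \mathbb{Q}$ is either $0$ (after the last reset of $x$) or the initial value $v_0(x) \in \mathbb{Q}$. Since every guard is a conjunction of atoms $x \rel c$ with $c \in \mathbb{Q}$ and $\rel \in \{<,\leq,=,\geq,>\}$, the requirement that the $i$-th guard hold at $v_i$ translates into a finite conjunction of linear (in)equalities in the unknowns $\delta_0,\dots,\delta_{i-1}$ with rational coefficients. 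Taking the conjunction over all transitions of $\rho$, together with the non-negativity constraints $\delta_j \geq 0$, yields a finite system $\Phi$ of linear (in)equalities with rational coefficients whose solutions are exactly the delay vectors producing a run following the same discrete transitions as $\rho$.

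By hypothesis, the real vector $(\delta_0,\dots,\delta_{n-1})$ is a solution of $\Phi$, so $\Phi$ is consistent. It then suffices to find a rational solution. For this, partition $\Phi$ into: (a) the explicit equality constraints together with the non-strict inequalities that are \emph{tight} at $(\delta_0,\dots,\delta_{n-1})$, and (b) the remaining constraints, which are strict inequalities or non-strict inequalities satisfied with strict slack. The constraints in (a), viewed as equalities, define a rational affine subspace $V \subseteq \mathbb{R}^n$ containing the original solution; because $V$ is defined by a consistent rational linear system, its rational points $V \cap \mathbb{Q}^n$ are non-empty and dense in $V$. The constraints in (b) define an open neighborhood of the original solution inside $V$. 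Picking any rational point of $V$ sufficiently close to $(\delta_0,\dots,\delta_{n-1})$ therefore yields a vector $(\delta_0',\dots,\delta_{n-1}') \in \mathbb{Q}^n$ satisfying every constraint of $\Phi$, hence inducing a run $\rho'$ with the same sequence of discrete transitions but rational delays.

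The only mild subtlety is the treatment of non-strict inequalities that happen to be tight along $\rho$: they must be promoted to equalities when defining $V$, otherwise arbitrary rational perturbations could violate them. Apart from that, the argument is purely linear-algebraic and does not require any further structure of the automaton.
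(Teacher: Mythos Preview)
Your proposal is correct and follows essentially the same approach as the paper: both reduce the claim to the existence of a rational solution of a consistent system of linear (in)equalities with rational coefficients arising from the guards along $\rho$. Your partition into tight and slack constraints, with density of rational points in the resulting rational affine subspace, is in fact a cleaner justification than the paper's somewhat informal projection-per-coordinate argument.
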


\begin{proof}
We assume that stopwatches are never reset throughout the run.
This can be done since one can assume that a reset stopwatch is actually a fresh one.
Consider the linear system with a variable $\delta_i$ per delay and rational coefficients which corresponds to all guards appearing after $q_k$.
We write
\[\gamma_i^x=\left\{\begin{array}{l}
1 \textrm{ if } x \textrm{ is active in } q_i \\
0 \textrm{ otherwise}
\end{array}\right.\]
For each stopwatch $x$, we add the constraints
\[\bigwedge_{i=0}^{|\rho|} \left(\sum_{\ell=0}^{i} \gamma_i^x \cdot \delta_i\right) \models g_i\]
Note that since guards have rational coefficients, this system has rational coefficients.
In addition since $\rho$ is an accepted run, this system has a solution $(\delta_0,\dots)$.
Also note that for every solution $(\delta_i')_i$, replacing each delay $\delta_i$ with $\delta_i'$ in $\rho$ still yields a valid run $\rho'$, since all guards are still respected.
The set of solutions of a linear system with rational coefficient is a rational polyhedron, so the projection over each variable yields an interval with rational endpoints (or $-\infty$ or $+\infty$).
If for some $i$, $\delta_i$ is irrational, the interval cannot be reduced to a point, so it contains an open set around $\delta_i$, in which there is a rational $\delta_i'$.
Therefore, there exists a solution $(\delta_i')_i \in \Q^{|\rho|}$ and $\rho'$ is a run with rational delays.
\qed
\end{proof}

\begin{proof}[Proposition~\ref{prop:politaNOTswa}]
Consider \polita of \figurename~\ref{fig:politaNotStopwatch}, which accepts the timed language $\Lang$ containing the single word $(a,1)(b,\sqrt2)$.
Assume $\Lang$ is accepted by a stopwatch automaton $\A_\Lang$.
Let $\rho= (q_0,v_0) \xrightarrow{\delta_0} (q_0,v_0+_{q_0} \delta_0) \xrightarrow{g,a_1,u} (q_1,v_1) \cdots$ be a run accepting $(a,1)(b,\sqrt2)$.
Note that some $a_i$s may actually be $\varepsilon$.
Since $b$ occurs at an irrational instant, there is at least an irrational delay before the occurrence of $b$.
By Lemma~\ref{lem:irratDelaysSwa}, $\rho'$ the run where all delays are rational is also accepted.
Therefore the instant of $b$ in $\rho'$ is rational and cannot be $\sqrt2$.
Furthermore any time rescaling for $\Lang$ does not change this result since either $a$ or $b$ is taken at an irrational instant.
\qed
\end{proof}

\begin{figure}
\centering
\begin{tikzpicture}[auto]
\node[state,initial] (q0) at (0,0) {};
\node[state] (q1) at (3,0) {};
\node[state,accepting,accepting where=right] (q2) at (6,0) {};

\path[->] (q0) edge node{$x_1=1$, $a$} (q1);
\path[->] (q1) edge node{$x_1^2=2$, $b$} (q2);
\end{tikzpicture}
\caption{A \polita whose timed language is not accepted by a stopwatch automaton.}
\label{fig:politaNotStopwatch}
\end{figure}
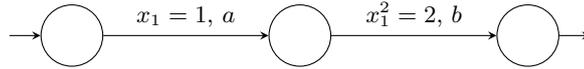

On the other hand, the (untimed) language of a \polita (and the extensions of Section~\ref{sec:extensions}) is regular, as shown by the construction of a finite abstraction of $\T_\A$ in Section~\ref{sec:reach}.
It is not necessarily the case of (untimed) languages of stopwatch automata~\cite{cassez00,alur95}, hence there are some timed languages accepted by a SWA that are not accepted by any \polita.

\subsection{Parameters}

Getting a complete knowledge of a
system is often impossible, especially when integrating quantitative
constraints. Moreover, even if these constraints are known, when the
execution of the system slightly deviates from the expected behavior,
due to implementation choices, previously established properties may
not hold anymore. Additionally, considering a wide range of values for
constants allows for a more flexible and robust design.  
Introducing parameters instead of concrete values is an elegant way of
addressing these three issues. Parametrization however makes
verification more difficult. For instance, in timed automata, allowing
a single clock to be compared to parameters leads to undecidability 
of the reachability problem~\cite{miller00}.

\smallskip
Suppose that we enlarge  \polita allowing expressions to be polynomials
whose set of variables is the union
of a set of clocks $\{x_1,\ldots,x_n\}$ and a set of parameters $\{p_1,\ldots,p_k\}$. 
Then we consider the cylindrical decomposition where the order of variables
is $p_1,\ldots,p_k,x_1,\ldots,x_n$. Now assume that the relevant values of parameters
are specified by a first-order formula {\it val}. Then using the cylindrical
decomposition, we can answer reachability questions like
``for all $p_1 \cdots p_k$ satisfying  {\it val}, is $q$ reachable?''
or safety questions like ``for all $p_1 \cdots p_k$ satisfying  {\it val}, is $q$ unreachable?''.

\subsection{Auxiliary clocks}
\label{sec:auxClocks}

With each level $i$, one may associate a set of \emph{auxiliary} clocks $Y_i$ in addition
to the \emph{main} clock $x_i$. Since there are multiple clocks for some level $i$,
in this \polita, with every state of level $i$, is associated an \emph{active} clock
among $X_i=\{x_i\} \cup Y_i$, specifying which clock evolves with time in this state.
Auxiliary clocks may be used in a restrictive setting
w.r.t. the main clocks to influence the behavior of the \polita. 
Let us detail these restrictions:
\begin{itemize}
 \item In a guard of a transition outgoing from a state at level $i$,
 among auxiliary clocks only those of the level $i$ may occur
 and they are only be compared between them or with the main
 clock (i.e. $z\bowtie z'$ with $z,z' \in X_i$);
 \item In a transition outgoing from state at level $i$,
 an auxiliary clock of level $i$ may be updated by another clock
 of level $i$ (i.e. $y:= z$ with $y\in Y_i$ and $z \in X_i$)
 while the main clock may be updated by an auxiliary clock
 only if the destination state of the transition is also
 at level $i$ (i.e. $x_i:= y$ with $y\in Y_i$).
\end{itemize}
The decision procedure works as follows. The cylindrical
decomposition does not take into account the auxiliary clocks.
However the definition of a class specifies in which interval 
of level $i$ lies any
clock of level $i$ and their relative position
for clocks inside the same interval. 

Adding auxiliary clocks strictly extends expressiveness
of \polita w.r.t. timed languages. It was shown
in~\cite{BHS-fmsd2012} that the language
\begin{eqnarray*}
L=\big\{ (a,t_1)(b,t_2) &\ldots& (a,t_{2p+1})(b,t_{2p+2}) 
\mid p \in \N,\\
&&\forall 0\leq i\leq p,\ t_{2i+1}=i+1 \mbox{ and }  i+1<t_{2i+2}<i+2, \\
&&\forall 1\leq i\leq p, \ t_{2i+2} - t_{2i+1}< t_{2i} - t_{2i-1} \big\}
\end{eqnarray*}
is not a language of an \ita. The proof also holds for
\polita since it is only based on the following hypotheses:
(1) there is a single clock per level, (2) at level $i$, the behavior
is only determined by the current state and the values of clocks at levels less
or equal than $i$, and (3) the clock $x_i$ is null at level $j<i$.

The untimed language of $L$ is $(ab)^+$. In the accepted timed words, there is an
occurrence of $a$ at each time unit and the successive occurrences
of $b$ come each time closer to the next occurrence of $a$ than
previously. Consider the \polita of Figure~\ref{fig:expressiveness} with a
single level and single final state $q_2$. The main clock $x$ is
active in all states and $y$ is an auxiliary clock.
It is routine to check that the timed language of this automaton
is $L$.

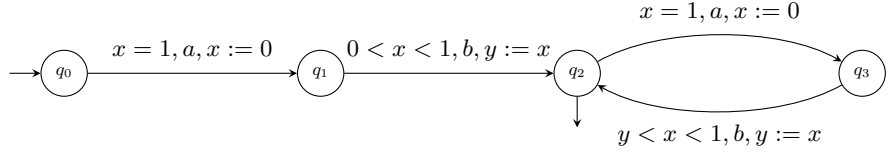
\begin{figure}[ht]
\centering
\begin{tikzpicture}[node distance=4.5cm,auto]
\useasboundingbox (-0.75,-1) rectangle (10.875,1);
\tikzstyle{every state}+=[scale=0.75]
\node[state,initial] (q0) at (0,0) {$q_0$};
\node[state] (q1) [right of=q0] {$q_1$};
\node[state,accepting,accepting where=below] (q2) [right of=q1] {$q_2$};
\node[state] (q3) [node distance=5cm,right of=q2] {$q_3$};

\path[->] (q0) edge node {\timedtrans{}{}{x=1, a, x:=0}} (q1);
\path[->] (q1) edge node {\timedtrans{}{}{0<x<1, b, y:=x}} (q2);
\path[->] (q2) edge [bend left,looseness=0.75] node {\timedtrans{}{}{x=1, a, x:=0}} (q3);
\path[->] (q3) edge [bend left,looseness=0.75] node 
{\timedtrans{y<x<1,b,y:=x}{}{}} (q2);
\end{tikzpicture}
\caption{A \polita with a single level and an auxiliary clock}
\label{fig:expressiveness}
\end{figure}

\subsection{Allowing more updates}

At level $i$, the value of a clock of level $j<i$ is relevant. So it is interesting
to allow updates of such a clock. Again for keeping decidability, such updates
have the following restrictions:
\begin{itemize}
 \item At level $i$, the main clock of level $j<i$ can only be updated
 by a polynomial of the main clocks of level less than $j$: 
 $x_j:=P(x_1,\ldots,x_{j-1})$;
 \item At level $i$, an auxiliary clock of level $j<i$ 
 may be updated by a clock of level $j$: $y:= z$ with $y\in Y_j$ and $z \in X_j$.
\end{itemize}

The decision procedure for this extension consists 
in translating the extended \polita in a \polita
with the same behavior by at level $i$: (1) delaying the update of clocks of
level $j<i$
that should have been done until the current level becomes $j$
and (2) duplicating the states by memorizing the current
value of such a clock as an expression of the values
of the clock when the level $j$ was left. Guards and updates
outgoing from a duplicated state are modified to take into account 
these expressions.

\smallskip
Let us illustrate this transformation on the \polita of Figure~\ref{fig:exitatoitamoins}
that is transformed in the \polita of Figure~\ref{fig:exitamoinsfromita}.
The original clock has only main clocks and the level of the state is
indicated inside the state. In the transformed state the superscript
'+' means that this corresponds to a state of of the original ITA ready
to be simulated while the superscript '-' indicates that the delayed
updates have to be performed. Let us start with the transition
outgoing the state $q_0$, the update of $x_1$ is delayed
but memorized in the state `$q_2^+,x_1:=2$'. The transition
outgoing from this state corresponds to the transition outgoing
from $q_2$ but in the guard the occurrence of $x_1$ has been substituted
by $2$. With this transformation, the update becomes $x_2:=5$ but since we are
at level 3, this update is memorized in state `$q_3^+,x_1:=2,x_2:=5$'. 
The transition from $q_3$ at level 3 to $q_5$ at level 2 is split in two transitions
in the simulating \polita. First we enter state `$q_5^-,x_1:=2,x_2:=5$' at
level 2 where the active clock is an auxiliary clock of level 2, $y_2$.
Then in null time due to the guard we perform the delayed update
of $x_2$, still memorizing the update of $x_1$ and enter the state 
`$q_5^+,x_1:=2$'.

\begin{figure}[ht]
\centering
\begin{tikzpicture}[auto]
\node[state,initial] (q0) at (-0.25,-0.25) {$q_0,2$};
\node[state,initial] (q1) at (-0.25,-1.75) {$q_1,2$};
\node[state] (q2) at (2.5,-1) {$q_2,3$};
\node[state] (q3) at (6.5,-1) {$q_3,3$};
\node[state,accepting] (q4) at (9.25,-0.25) {$q_4,3$};
\node[state,accepting] (q5) at (9.25,-1.75) {$q_5,2$};

\path[->] (q0) edge node [pos=0.25] {$x_1:=2$} (q2);
\path[->] (q1) edge (q2);
\path[->] (q2) edge node {\timedtransnoreset{2x_2+x_1 > 3 \wedge x_3 < 2}{x_2:=2x_1+1}} (q3);
\path[->] (q3) edge node [pos=0.75] {\timedtransnoreset{x_2:=x_1+1}{x_3:= 2x_2}} (q4);
\path[->] (q3) edge node [swap,pos=0.75] {$x_1:=1$} (q5);

\end{tikzpicture}
\caption{A \polita containing extended updates of clocks}
\label{fig:exitatoitamoins}
\end{figure}
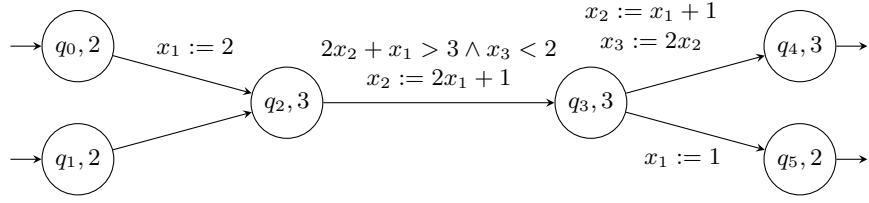

\begin{figure}[h]
\centering
\begin{tikzpicture}[node distance=3.125cm,auto]
\tikzstyle{big state}=[state,shape=rounded rectangle,inner xsep=0pt]

\node[state,initial] (q0) at (0,0) {$q_0^+,2$};
\node[big state,node distance=1.75cm,right of=q0] (q2cond) 
{$\begin{array}{c}q_2^+,3\\x_1:=2\end{array}$};
\node[big state,right of=q2cond] (q3cond1)
{$\begin{array}{c}q_3^+,3\\x_1:=2\\x_2:=5\end{array}$};
\node[big state,accepting,node distance=4.25cm,right of=q3cond1] (q4cond1) 
{$\begin{array}{c}q_4^+,3\\x_1:=2\\x_2:=3\end{array}$};

\node[big state,accepting,below of=q4cond1] (q5p) 
{$\begin{array}{c}q_5^+,2\\x_1:=1\end{array}$};
\node[big state] (q5m1) at ($(q3cond1 |- q5p) + (0,1)$) 
{$\begin{array}{c}q_5^-,2\\x_1:=1\\x_2:=5\end{array}$};
\node[big state] (q5m2) at ($(q3cond1 |- q5p) - (0,1)$) 
{$\begin{array}{c}q_5^-,2\\x_1:=1\\x_2:=2x_1+1\end{array}$};

\node[big state] (q3cond2) at ($(q5m2) + (0,-2)$) {$\begin{array}{c}q_3^+,3\\x_2:=2x_1+1\end{array}$};
\node[state,left of=q3cond2] (q2) {$q_2^+,3$};
\node[state,initial,node distance=1.75cm,left of=q2] (q1) {$q_1^+,2$};
\node[big state,accepting,node distance=4.25cm,right of=q3cond2] (q4cond2) {$\begin{array}{c}q_4^+,3\\x_2:=x_1+1\end{array}$};

\path[->] (q0) edge (q2cond);
\path[->] (q2cond) edge node {\timedtransnoreset{2x_2+2>3}{\wedge\, x_3<2}} (q3cond1);
\path[->] (q3cond1) edge node {$x_3:=10$} (q4cond1);
\path[->] (q3cond1) edge (q5m1);
\path[->] (q5m1) edge node [pos=0.375] {$y_2=0,\varepsilon,x_2:=5$} (q5p);

\path[->] (q1) edge (q2);
\path[->] (q2) edgenode {\timedtransnoreset{2x_2+x_1>3}{\wedge\, x_3<2}} (q3cond2);
\path[->] (q3cond2) edge node {$x_3:=4x_1+2$} (q4cond2);
\path[->] (q3cond2) edge (q5m2);
\path[->] (q5m2) edge node [swap,pos=0.375] {$y_2=0,\varepsilon,x_2:=x_1+1$} (q5p);

\end{tikzpicture}
\caption{A \polita equivalent to the \polita of Figure~\ref{fig:exitatoitamoins}}
\label{fig:exitamoinsfromita}
\end{figure}
\section{Conclusion}
\label{sec:conc}

We extend Interrupt Timed Automata with polynomial
expressions on clocks, and prove that reachability and model checking
of some timed temporal logic are decidable using the cylindrical
decomposition.  We also show that an on-the-fly construction of a
class automaton is possible during the lifting phase of this
decomposition. We establish that \polita and \swa are incomparable and
provide some additional interesting features to the model.  In order
to experiment the practical complexity of the decision procedures, an
implementation is in progress. Since the current construction still
requires the full complexity of the cylindrical decomposition, we plan
for future work to investigate if recent methods~\cite{DinS03,HongD12}
with a lower complexity could be used to achieve reachability,
possibly for a restricted version of \polita.

\bibliographystyle{splncs03}
\bibliography{PolyITA}

\begin{thebibliography}{10}
\providecommand{\url}[1]{\texttt{#1}}
\providecommand{\urlprefix}{URL }

\bibitem{alur93}
Alur, R., Courcoubetis, C., Dill, D.L.: Model-checking in dense real-time.
  Information and Computation  104,  2--34 (1993)

\bibitem{alur95}
Alur, R., Courcoubetis, C., Halbwachs, N., Henzinger, T.A., Ho, P.H., Nicollin,
  X., Olivero, A., Sifakis, J., Yovine, S.: The algorithmic analysis of hybrid
  systems. TCS  138,  3--34 (1995)

\bibitem{alur94a}
Alur, R., Dill, D.L.: A theory of timed automata. TCS  126,  183--235 (1994)

\bibitem{AlurHLP00}
Alur, R., Henzinger, T.A., Lafferriere, G., Pappas, G.J.: Discrete abstractions
  of hybrid systems. {Proceedings of the IEEE}  88(7),  971--984 (2000)

\bibitem{asarin95}
Asarin, E., Maler, O., Pnueli, A.: Reachability analysis of dynamical systems
  having piecewise-constant derivatives. TCS  138(1),  35--65 (1995)

\bibitem{BPR}
Basu, S., Pollack, R., Roy, M.F.: {Algorithms in Real Algebraic Geometry}.
  Springer (2006)

\bibitem{Ben-Or1984}
Ben-Or, M., Kozen, D., Reif, J.: The complexity of elementary algebra and
  geometry. In: Proceedings of the Sixteenth Annual ACM Symposium on Theory of
  Computing. pp. 457--464. STOC '84, ACM (1984)

\bibitem{BH-Fossacs09}
B{\'e}rard, B., Haddad, S.: Interrupt timed automata. In: {P}roc. of
  {FoSSaCS}'09. LNCS, vol. 5504, pp. 197--211. Springer, York, UK (Mar 2009)

\bibitem{BHJL-rp13}
B{\'e}rard, B., Haddad, S., Jovanovi{\v{c}}, A., Lime, D.: Parametric interrupt
  timed automata. In: {P}roceedings of the 7th {W}orkshop on {R}eachability
  {P}roblems in {C}omputational {M}odels ({RP}'13). LNCS, vol. 8169, pp.
  59--69. Springer (2013)

\bibitem{BHS-fmsd2012}
B{\'e}rard, B., Haddad, S., Sassolas, M.: Interrupt timed automata:
  Verification and expressiveness. Formal Methods in System Design  40(1),
  41--87 (Feb 2012)

\bibitem{Berman1980}
Berman, L.: The complexity of logical theories. TCS  11(1),  71 -- 77 (1980)

\bibitem{cassez00}
Cassez, F., Larsen, K.G.: The impressive power of stopwatches. In: Proc. of
  {CONCUR}'00. LNCS, vol. 1877, pp. 138--152. Springer (Aug 2000)

\bibitem{Collins75}
Collins, G.E.: Quantifier elimination for real closed fields by cylindrical
  algebraic decompostion. In: Automata Theory and Formal Languages 2nd GI
  Conference, LNCS, vol.~33, pp. 134--183. Springer Berlin Heidelberg (1975)

\bibitem{DinS03}
Din, M.S.E., Schost, E.: Polar varieties and computation of one point in each
  connected component of a smooth real algebraic set. In: Proceedings of
  International Symposium on Symbolic and Algebraic Computation ({ISSAC} 2003).
  pp. 224--231. {ACM} (2003)

\bibitem{emerson82}
Emerson, E.A., Halpern, J.Y.: Decision procedures and expressiveness in the
  temporal logic of branching time. In: Proc. 14th annual {ACM} {S}ymp. on
  {T}heory of {C}omputing ({S}toc'82). pp. 169--180. ACM (1982)

\bibitem{hybrid93}
Grossman, R., Nerode, A., Ravn, A., Rischel, H. (eds.): Hybrid systems, LNCS,
  vol. 736. Springer (1993)

\bibitem{HenzingerKPV98}
Henzinger, T.A., Kopke, P.W., Puri, A., Varaiya, P.: What's decidable about
  hybrid automata? J. Comput. Syst. Sci.  57(1),  94--124 (1998)

\bibitem{HNSY94}
Henzinger, T.A., Nicollin, X., Sifakis, J., Yovine, S.: Symbolic model checking
  for real-time systems. Information and Computation  111(2),  193--244 (1994)

\bibitem{HongD12}
Hong, H., Din, M.S.E.: Variant quantifier elimination. Journal of Symbolic
  Computation  47(7),  883--901 (2012)

\bibitem{LafferrierePS00}
Lafferriere, G., Pappas, G.J., Sastry, S.: O-minimal hybrid systems. {MCSS}
  13(1),  1--21 (2000)

\bibitem{miller00}
Miller, J.S.: Decidability and complexity results for timed automata and
  semi-linear hybrid automata. In: {HSCC}'00. LNCS, vol. 1790, pp. 296--309.
  Springer (2000)

\bibitem{queille82}
Queille, J.P., Sifakis, J.: Specification and verification of concurrent
  systems in {CESAR}. In: Proceedings of the 5th International Symposium on
  Programming. LNCS, vol. 137, pp. 337--351. Springer-Verlag, London, UK (1982)

\bibitem{Tarski48}
Tarski, A.: A decision method for elementary algebra and geometry. RAND
  Corporation, Santa Monica, Calif.  (1948)

\end{thebibliography}

\end{document}